\definecolor{lightblue}{rgb}{0.5,0.5,1.0}
\definecolor{darkred}{rgb}{0.5,0,0}
\definecolor{darkgreen}{rgb}{0,0.5,0}
\definecolor{darkblue}{rgb}{0,0,0.5}
\definecolor{lightblue}{rgb}{0.5,0.5,1.0}
\definecolor{darkred}{rgb}{0.5,0,0}
\definecolor{darkgreen}{rgb}{0,0.5,0}
\definecolor{darkblue}{rgb}{0,0,0.5}
\newcommand{\pascal}[1]{\todo[color=red!40,size=\footnotesize]{\textbf{P:} #1}}
\newcommand{\markus}[1]{\todo[color=orange!40,size=\footnotesize]{\textbf{M:} #1}}
\newcommand{\Traces}{\textsc{Traces}}
\newcommand{\nauty}{\textsc{nauty}}
\newcommand{\bliss}{\textsc{bliss}}
\newcommand{\saucy}{\textsc{saucy}}
\newcommand{\dejavu}{\textsc{dejavu}}
\newcommand{\conauto}{\textsc{conauto}}
\newcommand{\xparagraph}[1]{\textbf{#1}}
\newtheorem{lemma}{Lemma}
\newtheorem{theorem}[lemma]{Theorem}
\DeclareMathOperator{\Sym}{Sym}
\DeclareMathOperator{\Aut}{Aut}
\DeclareMathOperator{\Refx}{Ref}
\let\Sel\undefined
\DeclareMathOperator{\Sel}{Sel}
\DeclareMathOperator{\Inv}{Inv}
\DeclareMathOperator{\AND}{AND}
\newtheorem{fact}{Fact}
{\bfseries}{\itshape}
\title{A Characterization of Individualization-Refinement Trees}
\author{Markus Anders \and Jendrik Brachter \and Pascal Schweitzer}
\newcommand\blfootnote[1]{%
  \begingroup
  \renewcommand\thefootnote{}\footnote{#1}%
  \addtocounter{footnote}{-1}%
  \endgroup
}
\begin{document}

\maketitle

\begin{abstract}
Individualization-Refinement (IR) algorithms form the standard method and currently the only practical method for symmetry computations of graphs and combinatorial objects in general. Through backtracking, on each graph an IR-algorithm implicitly creates an IR-tree whose order is the determining factor of the running time of the algorithm.

We give a precise and constructive characterization which trees are IR-trees. This characterization is applicable both when the tree is regarded as an uncolored object but also when regarded as a colored object where vertex colors stem from a node invariant.
We also provide a construction that given a tree produces a corresponding graph whenever possible. This provides a constructive proof that our necessary conditions are also sufficient for the characterization.
\end{abstract}\blfootnote{Supported by 
the European Research Council (ERC) under the European Union's Horizon 2020 research and innovation programme (EngageS: grant No.~{820148}).}
\markus{walk vs. path}
\markus{node $n$ in a cell vs. vertex $v$ in a cell}
\pascal{class vs cell}
\section{Introduction}
The individualization-refinement (IR) framework is a general backtracking technique employed by algorithms solving tasks related to the computation of symmetries of combinatorial objects~\cite{McKay201494}. These include algorithms computing automorphism groups, isomorphism solvers, canonical labeling tools used for computing normal forms, and to some extent recently also machine learning computations in convolutional neural networks~\cite{DBLP:conf/ijcai/AbboudCGL21,DBLP:conf/aaai/0001RFHLRG19}. In fact all competitive graph isomorphism/automorphism solvers, specifically \nauty/\Traces~\cite{nautyTracesweb, McKay201494},  \bliss~\cite{bliss:webpage, DBLP:conf/alenex/JunttilaK07}, \saucy~\cite{saucy:webpage, Darga:2004:ESS:996566.996712}, \conauto~\cite{conauto:webpage, DBLP:conf/wea/Lopez-PresaCA13}, and \dejavu~\cite{dejavu:webpage, DBLP:conf/alenex/AndersS21} fall within the framework. 
These tools alternate color-refinement techniques (such as the 1-dimensional Weisfeiler-Leman algorithm) with backtracking steps. The latter perform artificial individualization of indistinguishable vertices. This leads to recursive branching and overall to a tree of recursive function calls, the so called IR-tree.

Using clever invariants and heuristics, the tools manage to prune large parts of the IR-tree. Since the non-recursive work is quasi-linear, it has long been known that the number of traversed nodes of the IR-tree is the determining factor in the running time for all the tools (see for example \cite[Theorem 9]{thesis} and \cite{DBLP:journals/corr/abs-0804-4881}). 
And in fact, the running times of the various tools closely reflect this~\cite{McKay201494,DBLP:conf/alenex/AndersS21}. Indeed, variation in the traversal strategies among the tools leads to a different number of traversed nodes which in turn leads to different running times. However, explicit bounds that rigorously show asymptotic advantages of randomized traversals over deterministic ones have only recently been obtained~\cite{anders_et_al:LIPIcs.ICALP.2021.16}. For this, a specific problem --- a search problem in trees with symmetries -- is defined. It captures precisely the parameters within which IR-algorithms operate. 

While these results are quite general within an abstract model, the bounds proven in~\cite{anders_et_al:LIPIcs.ICALP.2021.16} apply to the search problem in arbitrary trees with symmetries, independent of whether they originate from actual IR-computations or not. Granted, the vast benchmark library of \Traces{}~\cite{nautyTracesweb,McKay201494} shows that IR-trees come in an abundance of forms and shapes. However, to date there have been no comprehensive results actually 
analyzing which trees can arise as an IR-tree.

\xparagraph{Contribution.} 
In this paper we study which trees are IR-trees. Arising from a branching process, all IR-trees are rooted and all inner vertices have at least 2 children. Such trees are called irreducible (or series reduced). Despite
a vast variety of IR-trees arising from benchmark libraries,
it turns out that not all irreducible trees are IR-trees. However, we can 
give a full, constructive characterization of IR-trees. 

\begin{theorem} \label{thm:sufficient}
An irreducible tree is an IR-tree if and only if there is no node that has exactly two children of which exactly one is a leaf.
\end{theorem}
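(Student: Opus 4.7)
My plan is to prove the two directions of the biconditional separately, with the nontrivial sufficiency direction handled by an explicit construction.

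For the necessity direction, let $n$ be a node of an IR-tree at which the algorithm branches on a cell $C$ of size $k \geq 2$. The children of $n$ correspond to individualizing each vertex of $C$ in turn and then running color refinement to stability. The key structural input I would invoke is the standard Weisfeiler--Leman fact that vertices lying in a common cell of a stable (equitable) coloring are indistinguishable by 1-WL; consequently, individualizing either of them and re-refining produces colorings with matching multisets of cell sizes. In particular, when $|C|=2$, the resulting refinement is discrete for one vertex of $C$ iff it is for the other, so exactly one of the two children being a leaf is impossible.

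For the sufficiency direction, given an irreducible tree $T$ with no forbidden node, I would inductively construct a graph $G$ whose IR-tree is $T$. The construction would assemble $G$ from gadgets corresponding to the nodes of $T$. The root gadget realizes a target cell of size equal to the branching factor of the root. Each further gadget is attached so that the chain individualize-then-refine at a branching vertex exposes precisely the gadget corresponding to the chosen child, while all other subtrees remain quiescent in as-yet-unrefined portions of the graph. Leaf gadgets are engineered so that refinement terminates in a discrete coloring; internal gadgets so that refinement produces a fresh target cell of the correct size for the next branching.

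The main technical obstacle is engineering these gadgets so that they compose correctly. First, gadgets must be non-interfering: refinement triggered in one branch must not leak into sibling or ancestor gadgets and must not introduce unintended color classes anywhere. Second, the forbidden pattern must be avoided by construction, so at a node with two leaf children we need a symmetric size-$2$ cell each of whose individualizations yields a discrete refinement, while at a node with two non-leaf children we need a size-$2$ cell whose individualizations yield matching non-discrete refinements. Designing gadgets that realize precisely these behaviors, and verifying that the global IR-tree obtained really is $T$ rather than some coarsening or refinement of it, is where I expect the main work to concentrate; transferring the construction to the colored setting, where node colors stem from a node invariant, should then amount to decorating the gadgets with the corresponding invariant data.
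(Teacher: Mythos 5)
Your necessity argument is correct and is essentially the paper's: individualizing either vertex of a size-$2$ cell produces the same coloring up to renaming (each splits the other), hence both children must be leaves or neither is. (The paper records this as Lemma~\ref{lem:forbiddenbinary}, Part~1.)

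Your sufficiency direction, however, has a genuine gap in the mechanism, not just in detail. You frame the construction as an inductive composition of gadgets attached along $T$, with the hope that refinement triggered in one branch ``must not leak into sibling or ancestor gadgets'' and that other subtrees ``remain quiescent in as-yet-unrefined portions of the graph.'' But color refinement is global: once any edges between a node and its children are present in the graph, refinement will immediately propagate along them, split levels of $T$ apart, and generally over-refine long before any individualization takes place. Simply wiring up gadgets per node does not produce a graph whose coarsest equitable coloring leaves the tree structure hidden. The missing idea is that the tree $T$ must be encoded so that its adjacency (and its node colors) are \emph{invisible} to color refinement until unlocked by the right individualization. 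The paper achieves this with ``concealed edges'': every potential edge between consecutive levels is represented by a gadget, and true versus fake edges differ only in an attached asymmetry gadget ($A_1$ versus $A_2$) that color refinement cannot tell apart until the gadget's inner pair is split. Combined with $\mathrm{AND}$-gadgets for detecting which vertex of a cell was individualized and unidirectional/dead-end gadgets for routing the ``reveal'' signal, this gives the control you are reaching for. Without the concealment mechanism (or something playing the same role), I do not see how to make your non-interference requirement hold, so the construction as sketched would not yield $T$ but some refinement or coarsening of it.

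Two smaller points. First, the construction in the paper is not really inductive on $T$: it builds a single concealed copy of $T$ (the ``selector tree'') together with level-wise $\mathrm{AND}$-gadgets, color nodes, and broadcast/receiver gadgets, and proves globally that the IR-tree matches. Second, for the colored version of the characterization the paper also needs additional necessary conditions (e.g.~Lemma~\ref{lem:forbiddenbinary} Part~2 and Lemma~\ref{lem:cell_size_and_isotype_quotient}) and a symmetry-realization theorem (Theorem~\ref{thm:restrict_symmetries_of_tree}) to handle trees with nontrivial leaf orbits; for the uncolored Theorem~\ref{thm:sufficient} you can sidestep this by choosing an asymmetric admissible coloring, but it is worth being explicit that this reduction is being made.
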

To prove the theorem we first provide and justify necessary conditions for a tree to be an IR-tree. We then prove that, indeed, these conditions are sufficient by providing graphs on which the execution of an IR-algorithm yields the desired tree. In fact, our proof is constructive, meaning that we obtain an algorithm with the following property. Given a tree~$T$ satisfying the necessary conditions, the algorithm produces a graph whose IR-tree is~$T$.

As we describe in our definition of IR-trees in Section~\ref{sec:ir}, the trees are naturally associated with a coloring of the vertices. This coloring is a crucial component that is related to the automorphism group structure of the graph. Our characterization also fully describes how color classes may be distributed in a given tree. It turns out that there are several simple restrictions, in particular for vertices that have precisely two children, but apart from that all colorings can be realized and in particular any number of symmetries can be ensured (see Section~\ref{sec:graph:constr}).

Our characterization provides a fundamental argument transferring the analysis of abstract tree traversal strategies performed in \cite{anders_et_al:LIPIcs.ICALP.2021.16} to backtracking trees of IR-algorithms on actual instances.
Specifically, we may conclude that the abstract trees used for the lower bounds of probabilistic algorithms in \cite{anders_et_al:LIPIcs.ICALP.2021.16} indeed appear as IR-tees.
However, interestingly, the abstract trees used for the lower bounds of deterministic algorithms
(Theorem~13, \cite{anders_et_al:LIPIcs.ICALP.2021.16}) are not IR-trees. In fact these trees have nodes with two children, one child that is a leaf and another that is not. This breaks the necessary conditions as laid out by Theorem~\ref{thm:sufficient}.  
Fortunately, it also immediately follows from our results that a slight modification can rectify this: by simply replacing the respective leaves with inner nodes that have two attached leaves, the trees become actual IR-trees, due to our characterization. 
Overall, we therefore prove that the lower bounds of \cite{anders_et_al:LIPIcs.ICALP.2021.16} hold true in the IR-paradigm.

\xparagraph{Cell Selectors and Invariants.} Formally, the IR-paradigm allows for different design choices in some of its components.
For most of these, competitive practical solvers actually make very similar choices:
the refinement is always \emph{color refinement} and solvers commonly choose as their pruning invariant (essentially) the so-called \emph{quotient graph}.    
The way in which the actual implementations differ from color refinement and quotient graphs is usually only in minor details and done to achieve practical speed-ups. This only leads to a slightly weaker refinement and invariants in some specific cases.
In this paper, we therefore comply with these common design choices.

Many other design choices, such as \emph{how} IR-trees are traversed, have no effect on the characterization of the IR-trees themselves.

There is however one integral design choice where competitive IR-solvers do indeed vary in a way that affects which trees are IR-trees, namely 
the so-called \emph{cell selectors}.
We should emphasize that Theorem~\ref{thm:sufficient} 
only says that for the trees satisfying the necessary
conditions there is some cell selector for which the graph is an IR-tree.

However, we can also say something about specific cell selectors. Considering the characterization for a \emph{given} cell selector, there are two possibilities: either, fewer trees turn out to be IR-trees or the same characterization applies.
We can use our results to argue that for some cell selectors that are used in practice our necessary conditions are sufficient, while for others they are not (see Section~\ref{sec:conclusion} for a discussion).

\xparagraph{Techniques.} 
Many properties of a graph, e.g.~symmetries, are directly tied to properties of its IR-tree.
When modeling a graph that is supposed to produce a particular IR-tree, two major difficulties arise, roughly summarized as follows:
\begin{enumerate}
\item The effect of color refinement on the graph needs to be kept under control.
\item The shape of the IR-tree may dictate that symmetries must be simultaneously represented in distinct parts of the graph.
\end{enumerate}
We resolve these issues using various gadget constructions specifically crafted for this purpose.
We introduce \emph{concealed edges}, which allow us to precisely control the point in time at which the IR-process is able to see a certain set of edges and thus color refinement to take effect (resolving issue (1)). 
By combining concealed edges with gadgets enforcing particular regular abelian automorphism groups we can synchronize symmetries across multiple branches of the tree (resolving issue (2)). 

Here, as the main tool we show the following. As an additional restriction, which stems from the structure of IR-trees, we consider only trees where all leaves can be mapped to the same number of other leaves via symmetries (i.e., under automorphisms all \emph{leaf orbits} have the same size). We show that each such tree $T$ can be embedded into a graph $H_T$, such that~$H_T$ restricts the symmetries of~$T$ in a particular way. Intuitively, we keep just enough symmetries to allow leaves to be mapped to each other whenever this is possible in~$T$. We thereby effectively couple leaf orbits 
so that when fixing one leaf, all other leaves are fixed as well. More formally we prove the following theorem.

\begin{restatable}{theorem}{thmtwo}\label{thm:restrict_symmetries_of_tree}
Let~$T$~be a colored tree in which all leaf orbits have the same size.
There exists a graph~$H_T$~containing~$T$~as an automorphism invariant induced subgraph so that the action of~$\Aut(H_T)$~is faithful on~$T$~and semiregular on the set of leaves of~$T$. Moreover,~$\Aut(H_T)$~induces the same orbits on~$T$ as~$\Aut(T)$.
\end{restatable}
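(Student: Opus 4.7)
The plan is to construct $H_T$ by attaching a rigid gadget to the leaves of $T$ that couples them across all leaf orbits through a common regular action. This restricts $\Aut(T)$ to a subgroup that is regular on each leaf orbit while still inducing the same orbits on $T$. Let $k$ denote the common leaf-orbit size; the overall goal is to realize $\Aut(H_T)\cong G$ for some $G\le\Aut(T)$ of order $k$ that is regular on each leaf orbit and has the same orbits on $T$ as $\Aut(T)$.

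The first step is to establish such a subgroup $G$, which I would obtain by induction on the tree using the classical wreath-product decomposition
\[
\Aut(T) \;\cong\; \prod_{j=1}^{m}\bigl(\Aut(T_{(j)}) \wr S_{n_j}\bigr),
\]
where $T_{(j)}$ are the isomorphism types of principal subtrees of the root and $n_j$ their multiplicities. By induction each $\Aut(T_{(j)})$ contains a regular abelian subgroup $G_{(j)}$ of order $k/n_j$ with the analogous orbit-preservation property. Placing $G_{(j)}$ diagonally into the $n_j$ copies and combining it with a cyclic regular subgroup of $S_{n_j}$ yields an abelian subgroup of order $k$ in $\Aut(T_{(j)}) \wr S_{n_j}$ that is regular on the $k$ leaves of type $j$; identifying these across $j$ through a common abelian structure of order $k$ produces the desired $G$. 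Orbit preservation on internal nodes then comes for free, as in an irreducible tree every internal node is determined by its descendant leaves.

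Given $G$, I would fix for each leaf orbit $O_i$ a bijection $\phi_i\colon G\to O_i$ intertwining the $G$-action, append a colored Cayley graph $X_G$ of $G$ with $k$ distinguished port vertices on which $\Aut(X_G)\cong G$ acts regularly, and join each leaf $\ell\in O_i$ to the port $\phi_i(\ell)$ by a short private path whose interior vertex carries a color unique to orbit $i$. These orbit-unique colors keep $T$ an induced automorphism-invariant subgraph and prevent any automorphism of $H_T$ from mixing different orbits' attachments. From here, any automorphism of $H_T$ must act on the ports of $X_G$ as some $\tau\in\Aut(X_G)=G$, which through each $\phi_i$ simultaneously dictates the action on $O_i$; hence the image of the restriction map $\Aut(H_T)\to\Aut(T)$ lies in $G$. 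Conversely, every $g\in G$ extends to an automorphism of $H_T$, so equality holds. Faithfulness on $T$, semiregularity on leaves, and orbit preservation on $T$ are then inherited from the properties of $G$.

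The principal obstacle I foresee is the inductive construction of $G$: one must synchronize the diagonal identifications consistently across the different subtree types $T_{(j)}$ while keeping $G$ abelian of the correct order, which requires a careful case analysis of the abelian structures $G_{(j)}\times\mathbb{Z}/n_j\mathbb{Z}$ arising at each level. A secondary issue is realizing $\Aut(X_G)$ as exactly $G$, since not every abelian group admits a graphical regular representation; this can be resolved by placing judicious vertex or edge colors on the Cayley graph.
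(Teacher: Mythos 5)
Your proposal is close in spirit to the paper's construction: both reduce to exhibiting a common abelian group $G$ of order $k$ that sits inside $\Aut(T)$, acts regularly on every leaf orbit simultaneously, and is then realized as the automorphism group of an attached colored gadget (the paper's ``symmetry cycles'' are exactly a colored Cayley-type graph of $G$, and your ``private paths with orbit-unique colors'' play the role of the paper's auxiliary trees $T'_\Omega$). Your observation that orbit preservation on internal nodes is automatic once $G$ is transitive on each leaf orbit is also correct, since $g$ mapping a leaf $\ell$ to $\sigma(\ell)$ necessarily maps the depth-$d$ ancestor of $\ell$ to the depth-$d$ ancestor of $\sigma(\ell)$.

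The genuine gap is in the step you flag yourself: ``identifying these across $j$ through a common abelian structure of order $k$.'' As written, your inductive recipe $G_{(j)} \times \mathbb{Z}/n_j\mathbb{Z}$ produces abelian groups that can be \emph{non-isomorphic} for different branch types $j$, in which case no identification exists. Concretely, take $k=4$: one principal subtree $T_{(1)}$ with a single node having four leaf children yields $n_1 = 4$, so $G_{(1)}\times \mathbb{Z}/4\mathbb{Z} = \mathbb{Z}/4\mathbb{Z}$; another $T_{(2)}$ consisting of a node with two children each having two leaves yields, via your diagonal construction, $\mathbb{Z}/2\mathbb{Z}\times\mathbb{Z}/2\mathbb{Z}$. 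There is no isomorphism $\mathbb{Z}/4\mathbb{Z}\to(\mathbb{Z}/2\mathbb{Z})^2$, so the groups cannot be glued into a single $G$ by any ``careful case analysis'' -- the recipe itself must be changed. The missing idea, which is the actual content of the paper's proof, is to \emph{commit in advance} to the fully decomposed abelian group $\prod_i \mathbb{Z}/p_i\mathbb{Z}$ where $k=p_1\cdots p_m$ is a prime factorization, and to show this group is always realizable on every orbit. The paper does this by replacing each orbit tree $T_\Omega$ with a refined tree $T'_\Omega$ whose branching factors are all prime (contract levels with branching factor $1$, split a compound sibling class of size $rp$ into $p$ blocks of size $r$ repeatedly), then attaching one ``symmetry cycle'' of length $p_i$ per prime factor, shared across all orbits. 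Since every leaf orbit has the same size $k$, every orbit gets the same prime multiset, hence the same group, and every permutation of $\Omega$ induced by $T'_\Omega$ lifts to $\Aut(T)$ because $T'_\Omega$ only \emph{restricts} the wreath-product symmetry already present in $T_\Omega$. Without fixing your target group to $\prod \mathbb{Z}/p_i\mathbb{Z}$ (or an equivalent canonical choice), the induction does not close.
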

Again, we prove the theorem in a constructive manner. All steps can be easily converted into an algorithm that takes as input an admissible (i.e., compatible with our necessary conditions from Section \ref{sec:necessary_conditions}) colored tree $T$ and produces a graph and cell selector with IR-tree $T$.

\section{Individualization-Refinement Trees} \label{sec:ir}
Following \cite{McKay201494} closely, we introduce the notion of an IR-tree. 
Algorithms based on the IR-paradigm explore these trees using various traversal strategies to solve graph isomorphism, graph automorphism or canonical labeling problems.

\xparagraph{Colored Graphs.} An undirected, finite graph $G = (V, E)$ consists of a set of vertices $V \subseteq \mathbb{N}$ and a set of edges $E \subseteq V^2$, where $E$ is symmetric. Set $n:=|V|$. 

The IR framework relies on coloring the vertices of a graph. 
A coloring is a surjective map $\pi \colon V \to \{1, \dots{}, k\}$. The~$i$-th \emph{cell} for $i \in \{1, \dots{}, k\}$ is~$\pi^{-1}(i) \subseteq V$. Elements in the same cell are \emph{indistinguishable}. If $|\pi| = n$, i.e., whenever each vertex has its own distinct color in $\pi$, then $\pi$ is called \emph{discrete}.
A coloring $\pi$ is \emph{finer} than $\pi'$ (and~$\pi'$ \emph{coarser} than~$\pi$) if $\pi(v) = \pi(v')$ implies~$\pi'(v) = \pi'(v')$ for all $v ,v' \in V$. 
Whenever convenient, we may also view colorings as ordered partitions instead of maps. 
A colored graph $(G, \pi)$ consists of a graph and a coloring.
 
The symmetric group on~$\{1,\ldots,n\}$ is denoted~$\Sym(n)$. An automorphism of a graph~$G$ is a bijective map $\varphi\colon V \to V$ with $G^\varphi := (\varphi(V), \varphi(E)) = (V, E) = G$. With $\Aut(G)$ we denote the automorphism group of~$G$. For a colored graph~$(G,\pi$) we require automorphisms to also preserve colors, i.e.,~$\pi(v)= \pi(\varphi(v))$ for all $v\in V$. 
We define the colored automorphism group~$\Aut(G, \pi)$ accordingly.

\xparagraph{Color Refinement and Individualization.} IR-algorithms use a procedure to heuristically refine colorings based on the degree of vertices. 
The intuition is that if two vertices have different degree, then they can not be mapped to each other by an automorphism. 
We assign vertices of different degrees distinct colors to indicate this phenomenon. 
This process is iterated using color degrees: for example, two vertices can only be mapped to each other if they have the same number of neighbors of a particular color~$i$. Therefore vertices can be distinguished according to the number of neighbors they have in color~$i$. This gives us a new, refined coloring that (potentially) distinguishes more vertices. This is repeated until the process stabilizes.

The colorings resulting from this process are called equitable colorings. A coloring~$\pi$ is \emph{equitable} if for every pair of (not necessarily distinct) colors~$i,j\in \{1,\ldots,k\}$ the number of~$j$-colored neighbors is the same for all~$i$-colored vertices. 
For a colored graph~$(G,\pi)$ there is (up to renaming of colors) a unique coarsest equitable coloring finer than~$\pi$~\cite{McKay201494}.
We denote this coloring by~$\Refx(G, \pi,  \epsilon)$, where $\epsilon$ is the empty sequence.

IR-algorithms also use \emph{individualization}.
This process artificially forces a vertex into its own cell.  
We can record which vertices have been individualized in a sequence $\nu \in V^*$.
We extend the refinement function so that~$\Refx(G, \pi,  \nu)$ is the unique coarsest equitable coloring finer than~$\pi$ in which every vertex in~$\nu$ is a singleton with its own artificial color.
Specifically, the artificial colors used to individualize $\nu$ are not interchangeable with colors introduced by the refinement itself and are ordered: the $i$-th vertex in $\nu$ is always colored using the $i$-th artificial color. 

We require this coloring to be isomorphism invariant (which means that~$\Refx(G, \pi,  \nu)(v) = \Refx(G^\varphi, \pi^\varphi,  \nu^\varphi)(v^\varphi)$ for~$\varphi\in \Sym(n)$).
There are efficient \emph{color refinement} algorithms to compute~$\Refx(G, \pi,  \nu)$, for which we refer to~\cite{McKay201494}.

We say two colored graphs $(G_1,\pi_1)$ and $(G_2,\pi_2)$ are \emph{distinguishable (by color refinement)}, if with respect to the colorings~$\Refx(G_1, \pi_1,  \epsilon)$ and $\Refx(G_2, \pi_2,  \epsilon)$ 
\begin{enumerate}
\item there is a color $c$ with differently sized cells in $G_1$ and $G_2$ (i.e., $|\Refx(G_1, \pi_1,  \epsilon)^{-1}(c)| \neq |\Refx(G_2, \pi_2,  \epsilon)^{-1}(c)|)$), 
\item or there are
vertices~$v_1\in V(G_1)$, $v_2\in V(G_2)$ of the same color~(i.e., $\Refx(G_1, \pi_1,  \epsilon)(v_1)= \Refx(G_2, \pi_2,  \epsilon)(v_2)$), such that there is a color~$c$ within which~$v_1$ and~$v_2$ have a differing number of neighbors~(i.e.,~$|\{(v_1,w)\in E(G_1)\mid \Refx(G_1, \pi_1,  \epsilon)(w)=c\}|\neq |\{(v_2,w)\in E(G_2)\mid \Refx(G_2, \pi_2,  \epsilon)(w)=c\}|$).
\end{enumerate}
Sequences (or $t$-tuples) of vertices $\nu_1\in (G_1,\pi_1)^t$ and $\nu_2\in (G_2,\pi_2)^t$ are 
\emph{distinguishable}, if the graphs $(G_1,\Refx(G_1, \pi_1, \nu_1))$ and $(G_2,\Refx(G_2, \pi_2, \nu_2))$ are.

\xparagraph{Cell Selector.} In a backtracking fashion, the goal of an IR-algorithm is to reach a discrete coloring using color refinement and individualization. For this, color refinement is first applied.
If this does not yield a discrete coloring, individualization is applied, branching over all vertices in one non-singleton cell. The task of the \emph{cell selector} is to isomorphism invariantly pick the non-singleton cell. After individualization, color refinement is applied again and the process continues recursively.
Formally, a cell selector is a function $\Sel \colon \mathcal{G} \times \Pi \to 2^V$ (where $\mathcal{G}$ denotes the set of all graphs and $\Pi$ denotes the set of all colorings), satisfying:
\begin{itemize}
	\item Isomorphism invariance, i.e., $\Sel(G^\varphi,\pi^\varphi) = \Sel(G,\pi)^\varphi$ for $\varphi \in \Sym(n)$.
	\item If $\pi$ is discrete then $\Sel(G,\pi) = \emptyset$.
	\item If $\pi$ is not discrete then  $|\Sel(G,\pi)| > 1$ and $\Sel(G,\pi)$ is a cell of $\pi$.
\end{itemize}

\xparagraph{IR-Tree.}
We describe the IR-tree $\Gamma_{\Sel}(G, \pi)$ 
of a colored graph~$(G, \pi)$, which depends on a chosen cell selector~$\Sel$.
Essentially, IR-Trees simply describe the call-trees stemming from the aforementioned backtracking procedure.
Nodes of the search tree are sequences of vertices of~$G$.
  The root of $\Gamma_{\Sel}(G, \pi)$ is the empty sequence $\epsilon$.
  If $\nu$ is a node in $\Gamma_{\Sel}(G, \pi)$ and $C = \Sel(G,\Refx(G, \pi, \nu))$, then the set of children of~$\nu$ is $\{\nu.v \; | \; v \in C \}$, i.e., all extensions of~$\nu$ by one vertex~$v$ of~$C$.

By $\Gamma_{\Sel}(G, \pi, \nu)$ we denote the subtree of $\Gamma_{\Sel}(G, \pi)$ rooted in $\nu$. We omit the index $\Sel$ when apparent from context. 

We recite the following fact on isomorphism invariance of the search tree as given in \cite{McKay201494}, which follows from the isomorphism invariance of $\Sel$ and $\Refx$:
\begin{lemma} \label{lem:auto_tree_correspondence1} If $\nu$ is a node of $\Gamma(G, \pi)$
 and $\varphi \in \Aut(G, \pi)$, then $\nu^\varphi$ is a node of $\Gamma(G, \pi)$ and $\Gamma(G, \pi, \nu)^\varphi = \Gamma(G, \pi, \nu^\varphi)$.
\end{lemma}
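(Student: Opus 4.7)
My plan is to prove both assertions by a simultaneous induction on the length of $\nu$, relying only on the isomorphism invariance of $\Refx$ and $\Sel$ together with the recursive definition of $\Gamma_{\Sel}(G,\pi)$.

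For the base case $\nu = \epsilon$, we have $\nu^\varphi = \epsilon$, which is the root of $\Gamma(G,\pi)$, and the claim $\Gamma(G,\pi,\epsilon)^\varphi = \Gamma(G,\pi,\epsilon)$ reduces to showing that $\Gamma(G,\pi)^\varphi = \Gamma(G,\pi)$. For the inductive step, suppose $\nu = \nu'.v$ is a node and the statement holds for $\nu'$. By the inductive hypothesis, $(\nu')^\varphi$ is a node of $\Gamma(G,\pi)$, and it suffices to show that $\varphi(v)$ lies in the cell $\Sel(G,\Refx(G,\pi,(\nu')^\varphi))$ selected at $(\nu')^\varphi$. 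Here I would invoke the defining isomorphism-invariance properties: since $\varphi \in \Aut(G,\pi)$ we have $(G^\varphi,\pi^\varphi)=(G,\pi)$, and therefore
\[
\Refx(G,\pi,(\nu')^\varphi) = \Refx(G^\varphi,\pi^\varphi,(\nu')^\varphi) = \Refx(G,\pi,\nu')^\varphi,
\]
and similarly $\Sel(G,\Refx(G,\pi,(\nu')^\varphi)) = \Sel(G,\Refx(G,\pi,\nu'))^\varphi$. Since $v$ lies in the cell $\Sel(G,\Refx(G,\pi,\nu'))$ by definition of a child, $\varphi(v)$ lies in its $\varphi$-image, which is exactly the cell selected at $(\nu')^\varphi$. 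Hence $\nu^\varphi = (\nu')^\varphi.\varphi(v)$ is indeed a child of $(\nu')^\varphi$, and in particular a node of $\Gamma(G,\pi)$.

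To then obtain the equality $\Gamma(G,\pi,\nu)^\varphi = \Gamma(G,\pi,\nu^\varphi)$ of entire subtrees, I would observe that the argument above shows that the map $\mu \mapsto \mu^\varphi$ sends children of any node $\mu$ in $\Gamma(G,\pi,\nu)$ bijectively to children of $\mu^\varphi$ in $\Gamma(G,\pi,\nu^\varphi)$, with the preimage bijection provided by $\varphi^{-1}\in\Aut(G,\pi)$. Applying this recursively down the subtree (formally: a second induction on the depth of a descendant below $\nu$) gives the tree isomorphism.

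The only real content is the translation between $\varphi$ acting on vertex sequences and $\varphi$ acting on the data fed into $\Refx$ and $\Sel$; this is exactly the isomorphism-invariance axioms, so no new idea is needed. The main obstacle is thus purely notational — keeping clean track of the distinction between $\varphi \in \Aut(G,\pi)$ (so $G^\varphi=G$) and the use of $\varphi$ as a permutation applied to sequences $\nu$ — and setting up the nested induction so that both directions of the child-correspondence are covered.
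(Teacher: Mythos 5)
Your proof is correct, and it fills in exactly the argument the paper leaves implicit: the paper does not prove this lemma itself but merely recites it from McKay and Piperno \cite{McKay201494}, noting that it ``follows from the isomorphism invariance of $\Sel$ and $\Refx$.'' Your induction on the length of $\nu$, using precisely the two invariance axioms to push the child-correspondence one level at a time, is the standard way to make that remark rigorous and matches the intended justification.
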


\xparagraph{Quotient Graph.} 
The IR-tree itself can be exponentially large in the order of~$G$~\cite{DBLP:conf/stoc/NeuenS18}. 
To decrease its size IR-algorithms use a pruning mechanism. 
For this a \emph{node invariant} is used. 
A node invariant is a function $\Inv \colon \mathcal{G} \times \Pi \times V^* \to I$ that assigns to each sequence of nodes of the tree a value in a totally ordered set $I$. It satisfies the following.
\begin{itemize}
	\item Isomorphism invariance, i.e., $\Inv(G, \pi, \nu_1) = \Inv(G^\varphi, \pi^\varphi, \nu_1^\varphi)$ for $\varphi \in \Sym(n)$.
	\item If $|\nu_1| = |\nu_2|$ and $\Inv(G, \pi, \nu_1) < \Inv(G, \pi, \nu_2)$, then for all nodes $\nu_1' \in \Gamma(G, \pi, \nu_1)$ and $\nu_2' \in \Gamma(G, \pi, \nu_2)$ it holds that $\Inv(G, \pi, \nu_1') < \Inv(G, \pi, \nu_2')$.
\end{itemize}

The particular way the node invariant can be exploited depends on the problem to be solved. When solving for graph isomorphism, the algorithm may prune all nodes with an invariant differing from an arbitrary node invariant. However, when algorithms want to compute a canonical labeling, they must find a specific canonical node invariant to continue with. However, in the context of the present work these details are not important.

Most IR-algorithms use a specific invariant, the so-called \emph{quotient graph}, which is naturally produced by color refinement.

For an equitable coloring~$\pi$ of a graph~$G$, the quotient graph $Q(G, \pi)$ captures the information of how many neighbors vertices from one cell have in another cell.
Quotient graphs are complete directed graphs in which each vertex has a self-loop.
They include vertex colors as well as edge colors. 
The vertex set of $Q(G, \pi)$ is the set of all colors of~$(G,\pi)$, i.e., $V(Q(G, \pi)) := \pi(V(G))$.
The vertices are colored with the color of the cell they represent in~$G$.
We color the edge $(c_1, c_2)$ with the number of neighbors a vertex of cell $c_1$ has in cell $c_2$ (possibly $c_1 = c_2$).
Since $\pi$ is equitable, all vertices of $c_1$ have the same number of neighbors in $c_2$. 

A crucial fact is that graphs are \emph{indistinguishable by color refinement} if and only if their quotient graphs on the coarsest equitable coloring are equal.

We should also remark that quotient graphs are indeed \emph{complete invariants}, yielding the following property.
\begin{lemma} \label{lem:complete_invariant} Let $\nu, \nu'$ be leaves of $\Gamma(G, \pi)$. 
  There exists an automorphism $\varphi \in \Aut(G, \pi)$ with $\nu = \varphi(\nu')$ if and only if $Q(G, \Refx(G, \pi, \nu)) = Q(G, \Refx(G, \pi, \nu'))$.
\end{lemma}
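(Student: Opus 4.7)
The plan is to exploit the fact that at any leaf $\nu$ of $\Gamma(G,\pi)$ the coloring $\pi_\nu := \Refx(G,\pi,\nu)$ is discrete, so $\pi_\nu$ is a bijection from $V(G)$ to its set of colors. Consequently the quotient graph $Q(G,\pi_\nu)$ is, as a colored directed graph on the color set, an isomorphic copy of $G$ together with the color information inherited from $\pi$, in which the $i$-th element of $\nu$ is labelled by the $i$-th artificial color. The key insight is that such quotient graphs are essentially canonical relabellings of $G$ using the colors as names, and two canonical relabellings agree exactly when the renaming between them is an automorphism.

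For the forward direction I would proceed directly from the isomorphism invariance of $\Refx$ stated in Section~\ref{sec:ir}. If $\varphi\in\Aut(G,\pi)$ with $\nu=\varphi(\nu')$, then $\Refx(G,\pi,\nu')(v)=\Refx(G^\varphi,\pi^\varphi,\nu'^\varphi)(v^\varphi)=\Refx(G,\pi,\nu)(v^\varphi)$ for every $v\in V(G)$. Hence $\varphi$ identifies cells of $\pi_{\nu'}$ with cells of $\pi_\nu$ having the same color, and since $\varphi$ preserves adjacency this induces equality of the colored quotient graphs $Q(G,\pi_{\nu'})=Q(G,\pi_\nu)$.

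For the backward direction, assuming $Q(G,\pi_\nu)=Q(G,\pi_{\nu'})$, I would define the candidate automorphism $\varphi\colon V(G)\to V(G)$ by $\varphi(v):=\pi_\nu^{-1}(\pi_{\nu'}(v))$, which is well defined because both colorings are discrete and share the same color set (the artificial colors for individualizations agree position-by-position, and equality of quotient graphs forces the remaining refinement colors to match). Three things then need verifying: (i) $\varphi\in\Aut(G)$, which follows since in a discrete coloring the edge label on $(c_1,c_2)$ in the quotient graph is $1$ or $0$ according to whether the unique vertices in those cells are adjacent, so equality of quotient graphs translates to $uv\in E(G)\iff \varphi(u)\varphi(v)\in E(G)$; (ii) $\varphi$ respects $\pi$, which follows because the vertex colors of the quotient graph record the $\pi$-color of the represented cell, so vertices identified by $\varphi$ lie in the same $\pi$-cell; and (iii) $\varphi(\nu')=\nu$, which follows because $\pi_{\nu'}(\nu'_i)$ equals the $i$-th artificial color and $\pi_\nu^{-1}$ of the $i$-th artificial color equals $\nu_i$.

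The main bookkeeping obstacle is to ensure that the color sets of $\pi_\nu$ and $\pi_{\nu'}$ can actually be identified in a single coherent way, so that $\pi_\nu^{-1}\circ \pi_{\nu'}$ makes sense. This is precisely what the equality of the quotient graphs as colored graphs provides: the artificial colors are ordered identically by definition, and the remaining colors are determined by the (isomorphism-invariant) refinement and appear in the quotient graphs with matching vertex-color labels and matching incidence patterns. Once this identification is set up, all three verifications above reduce to reading off information directly from the definition of $Q(G,\pi_\nu)$.
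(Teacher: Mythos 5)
The paper states Lemma~\ref{lem:complete_invariant} without proof, treating it as a standard fact from the individualization-refinement literature (following \cite{McKay201494}), so there is no internal proof to compare against; your self-contained argument is the standard one, and it is correct.

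Your central observation --- that at a leaf $\nu$ the coloring $\pi_\nu := \Refx(G,\pi,\nu)$ is discrete, so the quotient graph $Q(G,\pi_\nu)$ is just an isomorphic copy of $(G,\pi)$ with vertices renamed by their $\pi_\nu$-colors and with the individualized vertices flagged by the ordered artificial colors --- is exactly what makes the quotient graph a complete invariant at leaves. The forward direction via the isomorphism-invariance axiom for $\Refx$ is clean, and the backward direction correctly defines $\varphi = \pi_\nu^{-1}\circ\pi_{\nu'}$ and reads off adjacency, $\pi$-color preservation, and $\varphi(\nu')=\nu$ from the edge labels, vertex colors, and artificial colors of the quotient graph, respectively. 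Two bookkeeping points you implicitly rely on are indeed available and worth making explicit: that equal quotient graphs force $|\nu|=|\nu'|$ (otherwise the artificial color sets differ), so the artificial colors can be matched position-by-position; and that equality of the colored quotient graphs is equality as labeled objects, i.e., color $c$ is identified with color $c$, not merely up to a relabeling --- this is what licenses composing $\pi_\nu^{-1}$ with $\pi_{\nu'}$ and is consistent with the paper's use of $Q$ as an element of a totally ordered value set.
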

Consistent with the colors of trees used in \cite{anders_et_al:LIPIcs.ICALP.2021.16}, we may also view quotient graphs as a way to color IR-trees themselves, i.e., where we color a node $\nu$ with $Q(G, \Refx(G, \pi, \nu))$.

\section{Necessary Conditions for IR-Trees} \label{sec:necessary_conditions}
We collect necessary conditions for the structure of IR-trees.
Since IR-trees are the result of a branching process, they are naturally irreducible (no node has exactly one child). Also, indistinguishable leaves can be mapped to each other.
\begin{lemma} \label{lem:ir_irreducible} IR-trees are irreducible. 
\end{lemma}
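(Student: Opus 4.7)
The plan is to unfold the definition of the IR-tree and invoke the axioms of the cell selector to conclude that every node has either zero or at least two children. First I would fix a node $\nu$ of $\Gamma_{\Sel}(G,\pi)$ and consider the equitable coloring $\pi_\nu := \Refx(G,\pi,\nu)$, together with the associated cell $C := \Sel(G,\pi_\nu)$; by the definition of the IR-tree, the children of $\nu$ are in bijection with the elements of $C$, namely the extensions $\nu.v$ for $v \in C$.

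Next, I would split into cases according to whether $\pi_\nu$ is discrete. If $\pi_\nu$ is discrete, the cell selector axioms force $\Sel(G,\pi_\nu) = \emptyset$, and hence $\nu$ has no children and is a leaf. If $\pi_\nu$ is not discrete, the cell selector axioms require $|\Sel(G,\pi_\nu)| > 1$, so $\nu$ has at least two children. In either case, $\nu$ does not have exactly one child, which is precisely the statement that the tree is irreducible.

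There is essentially no obstacle here: the lemma is an immediate consequence of the axioms imposed on $\Sel$ in the preceding section, and the only thing to verify is that these axioms really do rule out singleton cells being selected. No inductive argument or additional machinery is required.
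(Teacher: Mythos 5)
Your proof is correct and is exactly the argument the paper leaves implicit: the paper states Lemma~\ref{lem:ir_irreducible} without proof, and the cell selector axioms ($\Sel(G,\pi)=\emptyset$ when $\pi$ is discrete, $|\Sel(G,\pi)|>1$ otherwise) together with the definition of children as $\{\nu.v : v \in C\}$ immediately rule out a node having exactly one child.
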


\begin{lemma}\label{lem:leaf_color_complete_inv} Let $l_1, l_2$ be two leaves of an IR-tree~$(T,\pi)$. If $l_1$ and $l_2$ are indistinguishable, there is an automorphism~$\varphi\in \Aut(T, \pi)$ mapping $l_1$ to $l_2$.
\end{lemma}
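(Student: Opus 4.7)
The plan is to reduce the statement directly to the two facts already available, namely Lemma~\ref{lem:complete_invariant} (quotient graphs are complete invariants at leaves) and Lemma~\ref{lem:auto_tree_correspondence1} (graph automorphisms lift to tree automorphisms). The coloring $\pi$ on the IR-tree $T = \Gamma_\Sel(G,\pi)$ is, by definition in Section~\ref{sec:ir}, given by the quotient graphs $Q(G, \Refx(G,\pi,\nu))$ at each node $\nu$. Saying that two leaves $l_1, l_2$ of $(T,\pi)$ are indistinguishable therefore means that they carry the same color, i.e., $Q(G,\Refx(G,\pi,l_1)) = Q(G,\Refx(G,\pi,l_2))$.

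First, I would apply Lemma~\ref{lem:complete_invariant} to this equality of quotient graphs. This immediately produces an automorphism $\varphi \in \Aut(G,\pi)$ of the underlying colored graph satisfying $l_1^\varphi = l_2$, where we view the leaves $l_1, l_2$ as sequences of vertices of $G$ (which is precisely how nodes of the IR-tree are defined).

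Next, I would invoke Lemma~\ref{lem:auto_tree_correspondence1} to lift $\varphi$ from an automorphism of $(G,\pi)$ to an automorphism of the IR-tree $\Gamma_\Sel(G,\pi)$. That lemma guarantees that $\varphi$ sends nodes to nodes and that $\Gamma(G,\pi,\nu)^\varphi = \Gamma(G,\pi,\nu^\varphi)$, so in particular the vertex-labelled tree structure is preserved. Thus the lifted $\varphi$ is an automorphism of the uncolored tree $T$ mapping the leaf $l_1$ to the leaf $l_2$.

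Finally, I would verify that this lifted automorphism also respects the coloring $\pi$ of the tree, which is the only mildly subtle step and the one to watch. Since $\varphi \in \Aut(G,\pi)$ and the coloring of any node $\nu$ of the tree is defined as the isomorphism-invariant quantity $Q(G,\Refx(G,\pi,\nu))$, the isomorphism invariance of $\Refx$ and $Q$ yields $Q(G,\Refx(G,\pi,\nu^\varphi)) = Q(G,\Refx(G,\pi,\nu))$ for every node $\nu$. Hence the lifted $\varphi$ preserves the node colors, placing it in $\Aut(T,\pi)$ and completing the argument. The main obstacle is conceptual rather than technical: keeping straight that leaves of $T$ are sequences of graph vertices, so the same symbol $\varphi$ can legitimately play the role of a graph automorphism, a map on sequences, and a tree automorphism.
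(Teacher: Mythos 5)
Your proof is correct and is exactly the argument the paper intends: the lemma is stated without proof immediately after Lemmas~\ref{lem:auto_tree_correspondence1} and~\ref{lem:complete_invariant}, and deducing it from those two (complete invariance to get $\varphi\in\Aut(G,\pi)$ with $l_1^\varphi=l_2$, then the isomorphism-invariance lemma to lift $\varphi$ to a color-preserving tree automorphism) is the implied derivation. Your care in checking that the lifted map preserves the node coloring, via the isomorphism invariance of $\Refx$ and $Q$, fills in the only step the paper leaves tacit.
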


\begin{lemma}[see e.g.~\cite{DBLP:conf/alenex/AndersS21}] \label{lem:leaf_auto_correspondence0} A leaf $l$ can be mapped to exactly $|\Aut(G, \pi)|$ leaves in $\Gamma(G, \pi)$ using elements of the automorphism group $\Aut(G, \pi)$.
\end{lemma}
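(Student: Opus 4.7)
The plan is to invoke the orbit-stabilizer theorem: since $\Aut(G,\pi)$ acts on the set of leaves of $\Gamma(G,\pi)$ (by Lemma~\ref{lem:auto_tree_correspondence1}), it suffices to show that the stabilizer of any leaf $l$ under this action is trivial. Then the orbit of $l$ has size $|\Aut(G,\pi)|/1 = |\Aut(G,\pi)|$, which is the claim.

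So let $l = v_1 v_2 \cdots v_k$ be a leaf of $\Gamma(G,\pi)$, and suppose $\varphi \in \Aut(G,\pi)$ stabilizes $l$. Since the action on sequences is componentwise, this means $\varphi(v_i) = v_i$ for all $i \in \{1,\ldots,k\}$. The key observation is that, because $l$ is a leaf, the coloring $\Refx(G,\pi,l)$ must be discrete: otherwise, the cell selector would return a non-singleton cell and $l$ would have children in $\Gamma(G,\pi)$, contradicting that $l$ is a leaf.

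Now I would use the isomorphism invariance of $\Refx$. For any $\varphi \in \Sym(n)$ we have $\Refx(G,\pi,\nu)(v) = \Refx(G^\varphi,\pi^\varphi,\nu^\varphi)(v^\varphi)$. Applying this to our stabilizing $\varphi \in \Aut(G,\pi)$, which satisfies $G^\varphi = G$, $\pi^\varphi = \pi$, and $l^\varphi = l$, we obtain $\Refx(G,\pi,l)(v) = \Refx(G,\pi,l)(\varphi(v))$ for every $v \in V(G)$. Since $\Refx(G,\pi,l)$ is discrete, each color class is a singleton, hence $\varphi(v) = v$ for all $v \in V(G)$, so $\varphi = \id$.

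The argument is essentially an unwinding of definitions; the main subtlety to pin down carefully is the fact that a leaf corresponds precisely to a sequence whose refinement is discrete, which follows immediately from the three conditions in the definition of a cell selector (no cell is returned exactly when the coloring is discrete).
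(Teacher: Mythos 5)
Your proof is correct, and since the paper gives no proof of its own for this lemma (it cites a reference), there is nothing to compare against. Your argument is the standard one: the action of $\Aut(G,\pi)$ on leaves is well-defined by Lemma~\ref{lem:auto_tree_correspondence1}, a leaf corresponds precisely to a sequence whose refinement is discrete (by the definition of the cell selector), and isomorphism invariance of $\Refx$ then forces the pointwise stabilizer of any leaf to be trivial, so orbit-stabilizer gives orbit size $|\Aut(G,\pi)|$.
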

It follows that all classes of indistinguishable leaves have equal size.

Since in color refinement, partitionings and hence quotient graphs only ever become finer and more expressive, the following properties hold.
\begin{lemma}\label{lem:quot_become_finer}
Let $n_1, n_2$ be two nodes of an IR-tree where $n_i$ is on level $l_i$.
\begin{enumerate}
  \item If $l_1 \neq l_2$, then $n_1$ and $n_2$ are distinguishable.
  \item Consider the two walks starting in the root and ending in $n_1$ and in $n_2$, respectively. If in these walks two nodes on the same level are distinguishable then $n_1$ and $n_2$ are distinguishable.
\end{enumerate}
\end{lemma}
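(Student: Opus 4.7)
The plan is to handle the two parts separately, both by exploiting the monotonicity of color refinement along the IR-tree.

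For part~(1), I would use the fact that the refined coloring $\Refx(G,\pi,\nu)$ always contains exactly $|\nu|$ artificial singleton cells, whose color labels are indexed by position in $\nu$ and kept separate from the colors produced by refinement. Assuming without loss of generality that $l_1<l_2$, the artificial color with index $l_1+1$ appears as a singleton in $\Refx(G,\pi,n_2)$ but is absent from $\Refx(G,\pi,n_1)$. Hence the cells of that color have sizes $1$ and $0$, immediately triggering condition~(1) of the distinguishability definition.

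For part~(2), let $m_1\sqsubseteq n_1$ and $m_2\sqsubseteq n_2$ be the ancestors on the shared level at which distinguishability is witnessed, and abbreviate $\pi_i^{c}:=\Refx(G,\pi,m_i)$ and $\pi_i^{f}:=\Refx(G,\pi,n_i)$. The coloring $\pi_i^{f}$ is obtained from $\pi_i^{c}$ by further individualising the vertices appended between $m_i$ and $n_i$ (with the higher-indexed artificial colors) and re-running color refinement, so $\pi_i^{f}$ refines $\pi_i^{c}$. I would then appeal to monotonicity of color refinement in the following form: each coarse color class is a union of fine color classes, and each coarse-to-coarse neighbour count aggregates the corresponding fine-to-fine neighbour counts along that union. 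Consequently, the quotient graph on the fine coloring determines the quotient graph on the coarse coloring, so if the two coarse quotient graphs $Q(G,\pi_i^{c})$ differ by assumption, the fine quotient graphs $Q(G,\pi_i^{f})$ must also differ, yielding distinguishability of $n_1$ and $n_2$.

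The delicate point is making sure that the aggregation pattern mapping fine color classes to coarse color classes acts uniformly on both sides, so that equal fine quotient graphs would actually force equal coarse quotient graphs. This is where I would invoke the isomorphism-invariance of the refinement process: the pattern is entirely dictated by the canonical refinement applied to the level-$l$ prefix together with the fixed indexing of artificial colors, both of which are identical on the two walks by construction. With this uniformity in place, the contrapositive argument for part~(2) goes through cleanly and the lemma follows.
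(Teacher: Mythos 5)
Part~(1) is correct: the position-indexed artificial colours are never merged with refinement colours, so taking (wlog) $l_1<l_2$, the $(l_1{+}1)$-th artificial colour is a singleton cell in $\Refx(G,\pi,n_2)$ and absent from $\Refx(G,\pi,n_1)$, which triggers the size-mismatch clause of the distinguishability definition. The paper itself gives no written proof of the lemma (it appears after the remark that ``quotient graphs only ever become finer and more expressive''), and your overall plan of fleshing out this monotonicity is the right one.

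For part~(2), however, your resolution of the ``delicate point'' does not hold up as stated. You write that the aggregation pattern is dictated by ``the canonical refinement applied to the level-$l$ prefix together with the fixed indexing of artificial colors, both of which are identical on the two walks by construction.'' The refinement \emph{procedure} and the artificial-colour \emph{indexing} are indeed the same on both walks, but the level-$l$ prefixes $m_1$ and $m_2$ are by hypothesis \emph{distinguishable}, so the coarse colourings $\pi_1^c$, $\pi_2^c$ differ and the two refinement runs from them need not produce the same aggregation pattern; ``same algorithm'' does not give ``same output.'' What the argument actually needs is that the aggregation map is recoverable from $Q(G,\pi_i^f)$ \emph{alone} --- i.e., that the colour names produced by $\Refx$ encode the refinement history, so each stable fine colour determines the coarse colour it descended from (equivalently, that the quotient-graph invariant satisfies the monotone-along-paths axiom required of node invariants in Section~2). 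That is the standard convention in IR implementations and is what the paper implicitly relies on, but it must be invoked as an assumption rather than deduced from ``the procedure is canonical.'' With that convention made explicit, equal fine quotient graphs force equal aggregation maps, hence equal coarse quotient graphs, and your contrapositive closes.
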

\tikzset{
  triangle/.style={isosceles triangle,draw,shape border rotate=90, dashed, minimum height=5mm, minimum width=7.5mm, inner sep=0},
}
\begin{figure}[t]
  \centering
  \tikzset{
  triangle/.style={isosceles triangle,draw,shape border rotate=90, dashed, minimum height=5mm, minimum width=5mm, inner sep=0},
}
  \begin{center}
    \pgfdeclarelayer{bg} 
  \pgfsetlayers{bg,main} 
  \begin{minipage}{0.32\linewidth}
    \centering
\noindent\begin{tikzpicture}[scale=0.5,
  every node/.style = {minimum width = 0.7em, inner sep = 0, outer sep = 0, draw, circle, fill=gray!50},
  level/.style = {sibling distance = 25mm/#1, level distance=7mm}
  ]
  \node[thick, fill=white] {}
  child {	node[thick, fill=white] {} child{node[triangle,fill=white,yshift=-2.3mm] {}}
  }
  child {	node[thick, fill=yellow!50] {}
  };
\end{tikzpicture}
\end{minipage}
\begin{minipage}{0.32\linewidth}
  \centering
  \noindent\begin{tikzpicture}[scale=0.5,
    every node/.style = {minimum width = 0.7em, inner sep = 0, outer sep = 0, draw, circle, fill=gray!50},
    level/.style = {sibling distance = 25mm/#1, level distance=7mm}
    ]
    \node[thick, fill=white] {}
    child {	node[thick, fill=white] {}
        child {	node[thick, fill=lightblue] {} child{node[triangle,fill=white,yshift=-2.3mm] {}}
        }
        child {	node[thick, fill=green!50] {} child{node[triangle,fill=white,yshift=-2.3mm] {}}
        }
    }
    child {	node[thick, fill=white] {}
        child {	node[fill=lightblue, thick] {} child{node[triangle,fill=white,yshift=-2.3mm] {}}
        }
        child {	node[fill=orange, thick] {} child{node[triangle,fill=white,yshift=-2.3mm] {}}
        }
    };
  \end{tikzpicture}
  \end{minipage}
  \begin{minipage}{0.32\linewidth}
    \centering
    \noindent\begin{tikzpicture}[scale=0.5,
      every node/.style = {minimum width = 0.7em, inner sep = 0, outer sep = 0, draw, circle, fill=gray!50},
      level/.style = {sibling distance = 25mm/#1, level distance=7mm}
      ]
      \node[thick, fill=white] {}
      child {	node[thick, fill=white] {}
          child {	node[thick, fill=lightblue] {} child{node[triangle,fill=white,yshift=-2.3mm] {}}
          }
          child {	node[thick, fill=lightblue] {} child{node[triangle,fill=white,yshift=-2.3mm] {}}
          }
      }
      child {	node[thick, fill=white] {}
          child {	node[fill=lightblue, thick] {} child{node[triangle,fill=white,yshift=-2.3mm] {}}
          }
          child {	node[fill=orange, thick] {} child{node[triangle,fill=white,yshift=-2.3mm] {}}
          }
      };
    \end{tikzpicture}
    \end{minipage}
\end{center}
\caption{Forbidden structures in asymmetric binary IR-trees.}
\label{fig:forbidden_structures}
\end{figure}
Some further restrictions apply specifically in the case of cells of size $2$.
\begin{lemma}[Forbidden Binary Structures] \label{lem:forbiddenbinary}
\begin{enumerate}
\item If a node~$n$ has two children~$n_1$ and~$n_2$, then it cannot be that exactly one of the children~$n_1$ or~$n_2$ is a leaf (see Figure~\ref{fig:forbidden_structures}, left).
\item If~$n_1,n_2$ are any two nodes and~$n_1$ has exactly 2 children then the multiset of colors of the children of~$n_1$ and~$n_2$ are equal or disjoint (Figure~\ref{fig:forbidden_structures}, middle and right).
\end{enumerate}
\end{lemma}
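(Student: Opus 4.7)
Both parts rest on a single observation about color refinement: once the graph is fixed, the output of refinement depends only on the input partition of the vertex set, not on which labels are attached to the cells.

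For Part~1, consider a node~$n$ with exactly two children, corresponding to a selected cell $C=\{v_1,v_2\}$ of $\Refx(G,\pi,n)$. Individualizing $v_i$ places $v_i$ alone in a fresh artificial-color cell and leaves the other vertex $v_{3-i}$ alone in the residue of~$C$. Hence the partition of $V(G)$ to which color refinement is applied is the same for $i=1$ and $i=2$; it only differs in which singleton carries the artificial label. Since refinement is deterministic on the partition, the output partitions of $\Refx(G,\pi,n.v_1)$ and $\Refx(G,\pi,n.v_2)$ coincide, and in particular are simultaneously discrete. So either both children are leaves or neither is.

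For Part~2, I would first strengthen the argument of Part~1: the edge counts between cells are also determined by the partition and the graph, and the colored neighborhoods of $v_1$ and $v_2$ after refinement agree because these vertices shared a pre-individualization equitable cell. Consequently $Q(G,\Refx(G,\pi,n_1.v_1))=Q(G,\Refx(G,\pi,n_1.v_2))=:Q$, so $M_1=\{Q,Q\}$, where $M_i$ denotes the multiset of children colors of $n_i$. Assume $Q\in M_2$. By Lemma~\ref{lem:quot_become_finer}(1) the nodes $n_1,n_2$ lie on the same level, and the level-stability clause in the definition of the node invariant forces $Q_{n_1}=Q_{n_2}$: a strict inequality at the nodes would propagate to every pair of their descendants and contradict the shared color.

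To conclude $M_1=M_2$ from $Q_{n_1}=Q_{n_2}$, the plan is to read off the size of the selected cell directly from~$Q$. Individualizing a size-two cell leaves a distinctive signature, as the original color class of the individualized vertex is reduced to a single remaining vertex that then appears as a singleton of that color in~$Q$. Individualizing a cell of size at least three instead leaves a residual class of size at least two, which either remains non-singleton or splits into multiple freshly labeled singleton cells---either way producing a different cell structure in~$Q$. Applied to~$n_2$ this forces $|C_2|=2$, and then Part~1 applied at~$n_2$ yields $M_2=\{Q,Q\}=M_1$. The main obstacle is making this last step fully rigorous: extracting the parent's cell size from the child's quotient graph demands careful accounting of how individualization and subsequent color refinement interact with cell labels.
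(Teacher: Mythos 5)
Part~1 of your proposal is correct and captures the same observation as the paper: individualizing either vertex of a size-two cell produces the same partition of $V(G)$, so the refinement either discretizes in both cases or in neither.

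Part~2, however, rests on a false claim. You assert that $Q(G,\Refx(G,\pi,n_1.v_1))=Q(G,\Refx(G,\pi,n_1.v_2))$, i.e.\ that the two children of $n_1$ always have the \emph{same} color, so $M_1=\{Q,Q\}$. This is not true, and the paper itself supplies a counterexample: the asymmetry gadgets $A_1,A_2$ of Section~\ref{sec:graph:constr} have a stable size-two input cell whose two members, when individualized, yield \emph{non-isomorphic} stable colorings and hence different quotient graphs. The flaw in your justification is that belonging to a common equitable cell before individualization only guarantees $v_1$ and $v_2$ have the same \emph{number} of neighbors in each color, not the same colored neighborhoods; once one of them receives the artificial singleton color and the other the residual cell color, the subsequent refinement can (and typically does) assign different colors to the two resulting partitions, and the artificial color also attaches to a vertex with a genuinely different neighborhood in the two cases. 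Your closing step also misuses Part~1: knowing $|C_2|=2$ and Part~1 only tells you that the two children of $n_2$ are simultaneously leaves or non-leaves, not that they both carry the color $Q$. The paper's argument for Part~2 is different and does not need $\Col(c_1)=\Col(c_1')$: it observes that individualizing one child of a size-two cell implicitly individualizes the other, so the resulting quotient graph already determines the color of the sibling child; hence equality of one pair of child colors across $n_1$ and $n_2$ forces a one-to-one correspondence between the full multisets, even when the two siblings have different colors from each other.
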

\begin{proof} Part 1 follows from the fact that individualizing one vertex in a cell of size $2$  also individualizes the other vertex of the cell.

For Part 2 we note that individualization of a child of~$n_1$ also individualizes the other child of~$n_1$ and vice versa. This implies that
if a child~$c_2$ of~$n_2$ has the same color as some child~$c_1$ of~$n_1$, then by definition, individualization of $c_1$ and $c_2$, respectively, produces indistinguishable colorings.
So in this case there is a one-to-one correspondence between the colors of the children of~$n_1$ and those of~$n_2$.
\end{proof}
It is easy to see that if at any point the cell selector chooses differently sized cells in different branches, the branches subsequently become distinguishable.
However, if we assume cell selectors only base their decision on the quotient graph, this restriction applies earlier. 
More specifically, we call a cell selector \emph{quotient-graph-based}, whenever the result of the cell selector depends only on the quotient graph rather than other aspects of~$G$ and~$\pi$ (i.e., we have $\Sel(Q(G,\pi))$ rather than~$\Sel(G,\pi)$).
Then, we have the following.
\begin{lemma} \label{lem:cell_size_and_isotype_quotient}
If two nodes~$n$ and~$n'$ in an IR-tree are indistinguishable, then their parents have the same number of children. If additionally the cell selector is quotient-graph-based then~$n$ and~$n'$ also have the same number of children.
\end{lemma}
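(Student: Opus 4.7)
The plan is to begin with Lemma~\ref{lem:quot_become_finer}(2). Since $n$ and $n'$ are indistinguishable and therefore necessarily lie on the same level, every pair of corresponding ancestors along the two root-to-$n$ and root-to-$n'$ walks must be indistinguishable as well. In particular the parents $p$ and $p'$ are indistinguishable, so $Q(G,\Refx(G,\pi,\nu_p))=Q(G,\Refx(G,\pi,\nu_{p'}))$ and similarly $Q(G,\Refx(G,\pi,\nu_n))=Q(G,\Refx(G,\pi,\nu_{n'}))$.

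Assuming the quotient-graph-based case first (the second statement), the conclusion follows at once: $\Sel$ depends only on the quotient graph, so the cells selected at $n$ and at $n'$ carry the same color label and hence the same size, yielding the same number of children.

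For the first statement, my plan is to argue that the number of children at $p$, i.e., the size $|C_p|$ of the cell containing the last vertex~$v$ of $\nu_n$, is already encoded in $Q(G,\Refx(G,\pi,\nu_n))$; applied analogously at $n'$, this forces $|C_p|=|C_{p'}|$ no matter which (merely isomorphism-invariant) cell selector produced the tree. To extract $|C_p|$ from $Q(G,\Refx(G,\pi,\nu_n))$, I intend to undo the individualization of~$v$ at the quotient-graph level. By definition, $\Refx(G,\pi,\nu_p)$ is the coarsest equitable coloring finer than $\pi$ that keeps $\nu_p$ as singletons, and $\Refx(G,\pi,\nu_n)=\Refx(G,\pi,\nu_p.v)$ is the coarsest one satisfying the strictly stronger constraint that also~$v$ is a singleton. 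Hence $\Refx(G,\pi,\nu_p)$ is the unique coarsest equitable coloring that is coarser than $\Refx(G,\pi,\nu_n)$ while still having $\nu_p$ as singletons. In particular, the singleton $\{v\}$, identifiable in the quotient graph as the unique cell carrying the artificial color of depth $|\nu_p|+1$, gets merged with some cells of $\Refx(G,\pi,\nu_n)$, and $|C_p|$ is the total size of the resulting combined cell. The main obstacle is justifying that this coarsening step is purely combinatorial on the quotient graph, which I expect to follow from two facts: equitability of a coloring is visible in the quotient graph via its edge labels, and the set of equitable coarsenings of a given equitable coloring that preserve a prescribed set of singletons forms a lattice with a unique coarsest element, which can therefore be computed as a function of the quotient graph alone.
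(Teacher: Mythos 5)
The paper itself gives no explicit proof of this lemma; it only remarks, just before the statement, that ``it is easy to see that if at any point the cell selector chooses differently sized cells in different branches, the branches subsequently become distinguishable,'' and the second sentence is treated as immediate from the definition of a quotient-graph-based selector. So you are filling in a gap the authors left implicit, and your overall strategy --- recover the quotient graph of the parent, and hence the selected cell's size, from the quotient graph of the child --- is in my view exactly the argument the ``easy to see'' is gesturing at. Your argument for the second sentence (quotient-graph-based case) is correct as stated.

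There is, however, a genuine gap in your characterization of $\Refx(G,\pi,\nu_p)$. You claim it is the unique coarsest \emph{equitable} coloring that is coarser than $\Refx(G,\pi,\nu_n)$ and keeps $\nu_p$ as singletons, but you have dropped the constraint ``finer than $\pi$'' from the definition of $\Refx$. Without it, the coarsest such coloring may merge cells across $\pi$-color boundaries (for instance, two $\pi$-classes that happen to look alike once the individualizations are in place) and hence be strictly coarser than $\Refx(G,\pi,\nu_p)$. In that case the cell your procedure attributes to $v$ would strictly contain $C_p$ and you would read off a wrong value for $|C_p|$; the fact that the same wrong value appears at both $n$ and $n'$ does not rescue the argument, since what is needed is the correct value. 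The fix is to carry the constraint ``finer than $\pi$'' through: the original $\pi$-classes are visible in $Q(G,\Refx(G,\pi,\nu_n))$ via the vertex colors of the quotient graph, so the constraint is indeed a function of the quotient graph alone. A second, smaller issue is the appeal to a ``lattice with a unique coarsest element'': the set of equitable coarsenings of an equitable coloring is not a lattice in general, and no such structure is needed. The unique coarsest equitable coloring that is finer than $\pi$, has $\nu_p$ as singletons, and is coarser than $\Refx(G,\pi,\nu_n)$ is simply $\Refx(G,\pi,\nu_p)$ itself, which follows directly from the defining property of $\Refx$ (it is the coarsest equitable coloring finer than $\pi$ with $\nu_p$ singletons, and $\Refx(G,\pi,\nu_n)$ refines it because individualizing one more vertex only adds constraints). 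You then still need the observation that checking whether a merge of cells preserves equitability can be done purely from the edge labels of the quotient graph; this is true but should be stated rather than asserted as an expectation.
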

Restricting the cell selector to quotient graphs thus changes whether we can distinguish nodes with a differing number of children \emph{before} or \emph{after} individualizing one more vertex.
We may even distinguish cells \emph{before} individualization in both cases, if we include the decision of the cell selector into the invariant itself (i.e., using $(Q(G,\pi), \Sel(G,\pi))$ instead of $Q(G,\pi)$, which is clearly only more expressive in case the cell selector is \emph{not} quotient-graph-based). 

In the following, we assume cell selectors are indeed quotient-graph-based.
Since we only require a less powerful cell selector, our construction becomes more general.
However, in the construction, we could alternatively drop the additional restriction above with minor adjustments by allowing a more powerful cell selector.

For the remainder of this paper we say that a tree fulfills the \emph{necessary conditions}, if none of the conditions laid out by this section are violated. 

\newcommand{\gadgetCat}[5]{
\draw[ultra thick] (#1+0,#2+0) -- (#1+0,#2+3*#3) -- (#1+2*#3,#2+3*#3) -- (#1+2*#3,#2+0) -- (#1+0,#2+0);
\node[scale=1.5] at (#1+1*#3,#2+1.8*#3) (C) {\small \textbf{C}};
\node[scale=1] at (#1+1*#3,#2+0.7*#3) (n_1) {$#4,#5$};
\node[draw,circle,fill=black,scale=0.5] at (#1+2*#3, #2+3*#3*0.25) (C1#4#5) {};
\node[draw,circle,fill=black,scale=0.5] at (#1+2*#3, #2+3*#3*0.75) (C2#4#5) {};

}

\newcommand{\gadgetAsym}[4]{
  \draw[thick] (#1+0,#2+0) -- (#1+0,#2+3*#3) -- (#1+2*#3,#2+3*#3) -- (#1+2*#3,#2+0) -- (#1+0,#2+0);
  \node[scale=1.5] at (#1+1*#3,#2+1.5*#3) (A) {$A_{#4}$};
  \node[draw,circle,fill=black,scale=0.5] at (#1, #2+3*#3*0.25) (A1) {};
  \node[draw,circle,fill=black,scale=0.5] at (#1, #2+3*#3*0.75) (A2) {};
}

\newcommand{\gadgetUniUp}[4]{
  \draw[ thick] (#1+0,#2+0) -- (#1+0,#2+3*#3) -- (#1+2*#3,#2+3*#3) -- (#1+2*#3,#2+0) -- (#1+0,#2+0);
		\draw[ thick] (#1+1  *#3,#2+0.5*#3) -- (#1+1*#3,#2+2.5*#3);
		\draw[ thick] (#1+0.5*#3,#2+2  *#3) -- (#1+1.035*#3,#2+2.5*#3);
    \draw[ thick] (#1+1.5*#3,#2+2  *#3) -- (#1+0.965*#3,#2+2.5*#3);
  \node[draw,circle,fill=black,scale=0.5] at (#1+2*#3*0.33, #2+3*#3) (UDin1#4) {};
  \node[draw,circle,fill=black,scale=0.5] at (#1+2*#3*0.66, #2+3*#3) (UDin2#4) {};

  \node[draw,circle,fill=black,scale=0.5] at (#1+2*#3*0.33, #2+0*#3) (UDout1#4) {};
  \node[draw,circle,fill=black,scale=0.5] at (#1+2*#3*0.66, #2+0*#3) (UDout2#4) {};
}

\newcommand{\gadgetUniDown}[4]{
  \draw[ thick] (#1+0,#2+0) -- (#1+0,#2+3*#3) -- (#1+2*#3,#2+3*#3) -- (#1+2*#3,#2+0) -- (#1+0,#2+0);
		\draw[ thick] (#1+1  *#3,#2+3*#3-0.5*#3) -- (#1+1*#3,#2+3*#3-2.5*#3);
		\draw[ thick] (#1+0.5*#3,#2+3*#3-2  *#3) -- (#1+1.035*#3,#2+3*#3-2.5*#3);
    \draw[ thick] (#1+1.5*#3,#2+3*#3-2  *#3) -- (#1+0.965*#3,#2+3*#3-2.5*#3);

  \node[draw,circle,fill=black,scale=0.5] at (#1+2*#3*0.33, #2+3*#3) (UDin1#4) {};
  \node[draw,circle,fill=black,scale=0.5] at (#1+2*#3*0.66, #2+3*#3) (UDin2#4) {};

  \node[draw,circle,fill=black,scale=0.5] at (#1+2*#3*0.33, #2+0*#3) (UDout1#4) {};
  \node[draw,circle,fill=black,scale=0.5] at (#1+2*#3*0.66, #2+0*#3) (UDout2#4) {};
}

\newcommand{\gadgetDeadend}[4]{
  \draw[ultra thick] (#1+0,#2+0) -- (#1+0,#2+3*#3) -- (#1+2*#3,#2+3*#3) -- (#1+2*#3,#2+0) -- (#1+0,#2+0);
    
  \draw[ultra thick] (#1+1*#3,   #2+0.5*#3) -- (#1+1*#3,  #2+1*#3);
  \draw[ultra thick] (#1+1*#3,   #2+2.5*#3) -- (#1+1*#3,  #2+2*#3);
  \draw[ultra thick] (#1+0.5*#3, #2+2*#3)   -- (#1+1.5*#3,#2+2*#3);
  \draw[ultra thick] (#1+0.5*#3, #2+1*#3)   -- (#1+1.5*#3,#2+1*#3);

  \node[draw,circle,fill=black,scale=0.5] at (#1+2*#3*0.33, #2+3*#3) (UDin1#4) {};
  \node[draw,circle,fill=black,scale=0.5] at (#1+2*#3*0.66, #2+3*#3) (UDin2#4) {};

  \node[draw,circle,fill=black,scale=0.5] at (#1+2*#3*0.33, #2+0*#3) (UDout1#4) {};
  \node[draw,circle,fill=black,scale=0.5] at (#1+2*#3*0.66, #2+0*#3) (UDout2#4) {};
}

\section{Graph Constructions}\label{sec:graph:constr}
Given a colored tree $(T, \pi)$ which satisfies the necessary conditions, we construct a graph~$G(T, \pi)$ whose IR-tree is $(T, \pi)$, up to renaming of colors. Standard arguments show that it suffices to construct a colored graph~$G(T, \pi)$, from which an uncolored graph with the same IR-tree can be obtained. 
We make abundant use of gadget constructions, which we describe first. 

\subsection{Gadgets}
All our gadgets have multiple \emph{input} and \emph{output gates}. 
Each gate is a pair of vertices that together form their own color class in the gadget.
Vertices in the gates are the only vertices of the gadgets connected to other vertices outside the gadget.
We say that vertices labeled with $b_i$ denote the ``input'', while $a_i$ denote ``output''.

Gates can be \emph{activated} by which we mean the process of distinguishing the vertices of the gate pair into distinct color classes, and applying color refinement afterwards.
We say activation \emph{discretizes} the gadget if the resulting stable coloring on the gadget vertices is discrete.

We should note that three of the gadgets we are about to present (specifically the $\AND_i$, Unidirectional and Dead End gadget) have already been used in other contexts related to color refinement~\cite{DBLP:journals/mst/BerkholzBG17, DBLP:conf/mfcs/ArvindFKKR16,DBLP:conf/focs/Grohe96, DBLP:conf/icalp/AndersSW21}. 

\xparagraph{$\AND_i$ Gadget \cite{DBLP:journals/mst/BerkholzBG17, DBLP:conf/mfcs/ArvindFKKR16,DBLP:conf/focs/Grohe96}.}
The $\AND_2$ gadget as illustrated in Figure~\ref{fig:appendix:AND2} realizes the logical conjunction of gates with respect to color refinement, and an XOR gadget with respect to automorphisms.
 
Given $i>2$, we can realize an $\AND_i$ gadget with~$i$ input gates by combining multiple $\AND_2$ gadgets in a tree-like fashion. 
The $\AND_i$ gadget is constructed by attaching the first and second input gate to an $\AND_2$, whose output is connected to another $\AND_2$ together with the third input gate, and so on. 
We use colors to order the input gates, i.e., we color the $i$-th input gate with color $i$. 

We define the special case of the $\AND_1$ gadget to simply consist of a pair of vertices that functions as the input and output gate at the same time.
\begin{lemma}[\cite{DBLP:conf/focs/Grohe96}]
	The $\AND_i$ gadget admits automorphisms that flip the output gate and either one of the input gates while fixing other input gates. As long as some input gate remains unsplit, the output gate is not split
	but activating all inputs discretizes the gadget.
\end{lemma}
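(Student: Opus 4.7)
The plan is to proceed by induction on $i$, following the recursive construction of the $\AND_i$ gadget from $\AND_2$ gadgets. For the base case $i=2$, I would verify each of the three claims directly from the structure depicted in Figure~\ref{fig:appendix:AND2}: that there is an automorphism swapping the output gate with the first input gate while fixing the second, and one swapping the output with the second input while fixing the first; that color refinement starting from only one activated input does not propagate to the output; and that activating both input gates discretizes the gadget. The case $i=1$ is immediate since the input and output gates coincide.

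For the inductive step, I would decompose $\AND_i$ as an inner $\AND_{i-1}$-subgadget $G_1$ (sharing its input gates $1, \dots, i-1$ with $\AND_i$) whose output gate is connected, together with the $i$-th input gate of $\AND_i$, to a final $\AND_2$-subgadget $G_2$ whose output gate is the output gate of $\AND_i$. For the automorphism claim, if I wish to flip the output of $\AND_i$ together with input $j < i$, I would take the induction hypothesis automorphism of $G_1$ swapping its output and its $j$-th input and extend it by the base-case automorphism of $G_2$ that swaps its first input gate with its output gate while fixing its second input. If I instead want to flip the output of $\AND_i$ with input $i$, it suffices to apply the base-case automorphism of $G_2$ swapping the second input with the output while fixing the first input, extended by the identity on $G_1$.

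For the propagation claim, I would argue that if some input of $\AND_i$ remains unsplit after color refinement, then either it lies among the inputs of $G_1$, in which case by induction the output of $G_1$ (and thus the first input gate of $G_2$) remains unsplit, or it is the $i$-th input, which is the second input gate of $G_2$; in either case the base case applied to $G_2$ yields that the output gate of $\AND_i$ is not split. For discretization, if all inputs of $\AND_i$ are activated then by induction $G_1$ becomes discrete and in particular its output gate is split, so both input gates of $G_2$ are split, and the base case shows $G_2$ is discretized; since the two subgadgets meet only on the shared gate, the full $\AND_i$ gadget is then discrete.

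The main obstacle is therefore concentrated entirely in the base case, since the induction step is largely mechanical. Verifying the three properties of $\AND_2$ amounts to an explicit analysis of the small graph in Figure~\ref{fig:appendix:AND2}: producing the two required involutions by hand, simulating color refinement starting from a single split gate to check that the propagation stops before reaching the output, and simulating refinement from both split gates to confirm that the entire gadget becomes discrete. I expect this verification to be routine but essential, as the inductive argument is only as strong as the behaviour of $\AND_2$ that it invokes at each step.
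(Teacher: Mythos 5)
The paper does not actually prove this lemma; it is stated with a citation to Grohe's FOCS~1996 paper, so there is no in-paper argument to compare against. Your strategy is nonetheless the natural one given the explicit recursive definition of the $\AND_i$ gadget in the text, and the base-case verification for $\AND_2$ goes through: the involutions $(b_0\,b_1)(a_0\,a_1)(c_0\,c_3)(c_1\,c_2)$ and $(b_2\,b_3)(a_0\,a_1)(c_0\,c_2)(c_1\,c_3)$ are the required color-preserving automorphisms, splitting only one $b$-pair refines the $c$-layer into $\{c_0,c_2\}$, $\{c_1,c_3\}$ but leaves the $a$-pair and the other $b$-pair equitable, and splitting both $b$-pairs propagates to a discrete coloring.

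The one place you should be more careful is the propagation claim in the inductive step. Your induction hypothesis says that color refinement run on $G_1$ \emph{in isolation} leaves $G_1$'s output gate unsplit; but what you need is that color refinement run on all of $\AND_i$ leaves that gate unsplit, and the restriction of the global stable coloring to $G_1$ can a priori be strictly finer than the stable coloring of $G_1$ alone (the shared gate vertices see the outside through $G_2$). Two clean fixes: (a)~build an explicit equitable coloring of $\AND_i$ by gluing together the stable colorings of $G_1$-alone and $G_2$-alone along the shared gate (they agree there since both leave it unsplit, and a non-gate vertex has all its neighbours inside one piece), then invoke that the coarsest equitable refinement can only be coarser; or (b)~skip the inductive refinement argument entirely and deduce non-splitting from the automorphism claim you already proved --- if input $j$ is unsplit, the automorphism $\sigma_j$ flipping input $j$ and the output is color-preserving for the initial coloring, so by isomorphism invariance of $\Refx$ it preserves the refined coloring, forcing the two output vertices (which $\sigma_j$ swaps) to share a refined color. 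Option~(b) is shorter and makes the logical dependence between the two halves of the lemma explicit. With either fix, your proof is sound; the discretization and automorphism parts of the induction are fine as written.
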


\xparagraph{Unidirectional and Dead End Gadget \cite{DBLP:conf/mfcs/ArvindFKKR16,DBLP:conf/focs/Grohe96,DBLP:conf/icalp/AndersSW21}.}
Next, we describe gadgets through which gate activation can be propagated or blocked depending on the direction of the gadget. 
Specifically we construct the unidirectional gadget (Figure~\ref{fig:appendix:unidirectional_gadget}) and the dead end gadget (Figure~\ref{fig:appendix:deadend_gadget}).
Note that the two gadgets are indistinguishable from each other by color refinement. 
The smaller vertices depicted in Figure \ref{fig:appendix:directional_gadgets} have been included to guarantee that the gadgets become discrete after the input and output gate has been split and can otherwise be ignored. 

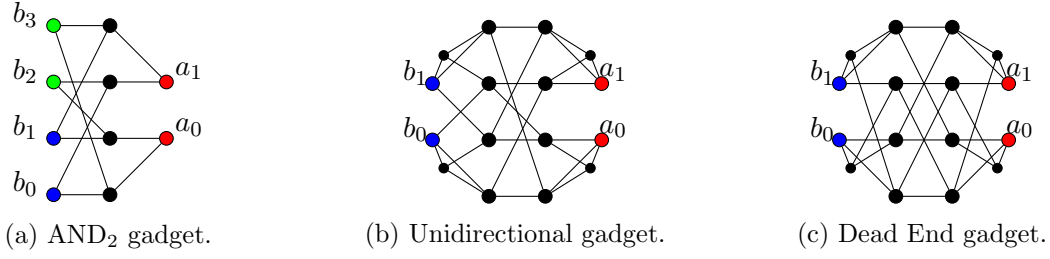
\begin{figure}
  \centering
  \begin{subfigure}{0.32\linewidth}
    \centering
      \centering
      \begin{tikzpicture}[scale=0.75]
      \node[draw,circle,fill=blue,scale=0.5] at (0,0) (b_0) {};
      \node[draw,circle,fill=blue,scale=0.5] at (0,1) (b_1) {};
      \node[draw,circle,fill=green,scale=0.5] at (0,2) (b_2) {};
      \node[draw,circle,fill=green,scale=0.5] at (0,3) (b_3) {};
      
      \node[draw,circle,fill,scale=0.5] at (1,0) (c_0) {};
      \node[draw,circle,fill,scale=0.5] at (1,1) (c_1) {};
      \node[draw,circle,fill,scale=0.5] at (1,2) (c_2) {};
      \node[draw,circle,fill,scale=0.5] at (1,3) (c_3) {};
      
      \node[draw,circle,fill=red,scale=0.5] at (2,1) (a_0) {};
      \node[draw,circle,fill=red,scale=0.5] at (2,2) (a_1) {};
      
      \node[] at (-0.5,0.2) (b_0_label) {$b_0$};
      \node[] at (-0.5,1.2) (b_1_label) {$b_1$};
      \node[] at (-0.5,2.2) (b_2_label) {$b_2$};
      \node[] at (-0.5,3.2) (b_3_label) {$b_3$};
      
      
      \node[] at (2.4,1.2) (a_0_label) {$a_0$};
      \node[] at (2.4,2.2) (a_1_label) {$a_1$};
      
      \node[fill=none,stroke=none] at (0,3.5) (dummy) {};
      
      \draw (b_0) -- (c_0);
      \draw (b_0) -- (c_2);
      \draw (b_1) -- (c_1);
      \draw (b_1) -- (c_3);
      \draw (b_2) -- (c_1);
      \draw (b_2) -- (c_2);
      \draw (b_3) -- (c_0);
      \draw (b_3) -- (c_3);
      
      \draw (c_0) -- (a_0);
      \draw (c_1) -- (a_0);
      \draw (c_2) -- (a_1);
      \draw (c_3) -- (a_1);
      \end{tikzpicture}
      \caption{$\AND_2$ gadget.}
      \label{fig:appendix:AND2}
  \end{subfigure}
\begin{subfigure}{0.32\linewidth}
  \centering
  \begin{tikzpicture}[scale=0.75]
  \node[draw,circle,fill,scale=0.5] at (0,0) (b_0) {};
  \node[draw,circle,fill,scale=0.5] at (0,1) (b_1) {};
  \node[draw,circle,fill,scale=0.5] at (0,2) (b_2) {};
  \node[draw,circle,fill,scale=0.5] at (0,3) (b_3) {};
  
  \node[draw,circle,fill,scale=0.5] at (1,0) (c_0) {};
  \node[draw,circle,fill,scale=0.5] at (1,1) (c_1) {};
  \node[draw,circle,fill,scale=0.5] at (1,2) (c_2) {};
  \node[draw,circle,fill,scale=0.5] at (1,3) (c_3) {};
  
  \node[draw,circle,fill=red,scale=0.5] at (2,1) (a_0) {};
  \node[draw,circle,fill=red,scale=0.5] at (2,2) (a_1) {};
  
  \node[draw,circle,fill=blue,scale=0.5] at (-1,1) (a_0') {};
  \node[draw,circle,fill=blue,scale=0.5] at (-1,2) (a_1') {};
  
  \node[draw,circle,fill,scale=0.35] at (-0.8,2.5) (s_1') {};
  \node[draw,circle,fill,scale=0.35] at (-0.8,0.5) (s_0') {};
  \node[draw,circle,fill,scale=0.35] at (1.8,2.5) (s_1) {};
  \node[draw,circle,fill,scale=0.35] at (1.8,0.5) (s_0) {};

  \node[] at (2.2,1.2) (a_0_label) {$a_0$};
  \node[] at (2.2,2.2) (a_1_label) {$a_1$};
  
  \node[] at (-1.3,1.2) (a_0'_label) {$b_0$};
  \node[] at (-1.3,2.2) (a_1'_label) {$b_1$};

  \node[fill=none,stroke=none] at (0,3.5) (dummy) {};
  
  \draw (a_0') -- (s_0'); 
  \draw (b_0) -- (s_0');
  \draw (b_1) -- (s_0');
  \draw (a_1') -- (s_1');
  \draw (b_2) -- (s_1');
  \draw (b_3) -- (s_1');
  \draw (a_0) -- (s_0);
  \draw (c_0) -- (s_0);
  \draw (c_1) -- (s_0);
  \draw (a_1) -- (s_1);
  \draw (c_2) -- (s_1);
  \draw (c_3) -- (s_1);
   
  \draw (b_0) -- (c_0);
  \draw (b_0) -- (c_2);
  \draw (b_1) -- (c_1);
  \draw (b_1) -- (c_3);
  \draw (b_2) -- (c_1);
  \draw (b_2) -- (c_2);
  \draw (b_3) -- (c_0);
  \draw (b_3) -- (c_3);
  
  \draw (c_0) -- (a_0);
  \draw (c_1) -- (a_0);
  \draw (c_2) -- (a_1);
  \draw (c_3) -- (a_1);
  
  \draw (b_0) -- (a_0');
  \draw (b_2) -- (a_0');
  \draw (b_1) -- (a_1');
  \draw (b_3) -- (a_1');
  \end{tikzpicture}
  \caption{Unidirectional gadget.}
  \label{fig:appendix:unidirectional_gadget}
\end{subfigure}
\begin{subfigure}{0.32\linewidth}
  \centering
  \begin{tikzpicture}[scale=0.75]
  \node[draw,circle,fill,scale=0.5] at (0,0) (b_0) {};
  \node[draw,circle,fill,scale=0.5] at (0,1) (b_1) {};
  \node[draw,circle,fill,scale=0.5] at (0,2) (b_2) {};
  \node[draw,circle,fill,scale=0.5] at (0,3) (b_3) {};
  
  \node[draw,circle,fill,scale=0.5] at (1,0) (c_0) {};
  \node[draw,circle,fill,scale=0.5] at (1,1) (c_1) {};
  \node[draw,circle,fill,scale=0.5] at (1,2) (c_2) {};
  \node[draw,circle,fill,scale=0.5] at (1,3) (c_3) {};
  
  \node[draw,circle,fill=red,scale=0.5] at (2,1) (a_0) {};
  \node[draw,circle,fill=red,scale=0.5] at (2,2) (a_1) {};
  
  \node[draw,circle,fill=blue,scale=0.5] at (-1,1) (a_0') {};
  \node[draw,circle,fill=blue,scale=0.5] at (-1,2) (a_1') {};
  
  \node[draw,circle,fill,scale=0.35] at (-0.8,2.5) (s_1') {};
  \node[draw,circle,fill,scale=0.35] at (-0.8,0.5) (s_0') {};
  \node[draw,circle,fill,scale=0.35] at (1.8,2.5) (s_1) {};
  \node[draw,circle,fill,scale=0.35] at (1.8,0.5) (s_0) {};

  \node[] at (2.2,1.2) (a_0_label) {$a_0$};
  \node[] at (2.2,2.2) (a_1_label) {$a_1$};
  
  \node[] at (-1.3,1.2) (a_0'_label) {$b_0$};
  \node[] at (-1.3,2.2) (a_1'_label) {$b_1$};

  \node[fill=none,stroke=none] at (0,3.5) (dummy) {};
  
  \draw (a_0') -- (s_0'); 
  \draw (b_0) -- (s_1');
  \draw (b_1) -- (s_0');
  \draw (a_1') -- (s_1');
  \draw (b_2) -- (s_0');
  \draw (b_3) -- (s_1');
  \draw (a_0) -- (s_0);
  \draw (c_0) -- (s_1);
  \draw (c_2) -- (s_0);
  \draw (a_1) -- (s_1);
  \draw (c_1) -- (s_0);
  \draw (c_3) -- (s_1);
  
  \draw (b_0) -- (c_0);
  \draw (b_0) -- (c_2);
  \draw (b_1) -- (c_1);
  \draw (b_1) -- (c_3);
  \draw (b_2) -- (c_0);
  \draw (b_2) -- (c_2);
  \draw (b_3) -- (c_1);
  \draw (b_3) -- (c_3);
  
  \draw (c_0) -- (a_0);
  \draw (c_1) -- (a_0);
  \draw (c_2) -- (a_1);
  \draw (c_3) -- (a_1);
  
  \draw (b_0) -- (a_0');
  \draw (b_2) -- (a_1');
  \draw (b_1) -- (a_0');
  \draw (b_3) -- (a_1');
  \end{tikzpicture}
  \caption{Dead End gadget.}
  \label{fig:appendix:deadend_gadget}
\end{subfigure}
\caption{The $\AND_2$ gadget and two variants of directional gadgets.}
\label{fig:appendix:directional_gadgets}
\end{figure}

\begin{lemma}
	The unidirectional and dead end gadget are indistinguishable by color refinement.
	In the unidirectional case, activating the input discretizes the gate but activating the output does not split the input gate.
	In the dead end case, both input and output have to be activated to discretize the gadget.
\end{lemma}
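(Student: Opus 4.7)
The plan is to verify all three assertions by direct color refinement calculations on each gadget from the natural base coloring: the input gate pair (blue), the output gate pair (red), each of the two middle columns $\{b_0,\ldots,b_3\}$ and $\{c_0,\ldots,c_3\}$ as a single color, and each of $\{s_0,s_1\}$ and $\{s_0',s_1'\}$ as a single color. For indistinguishability I would observe that this base coloring is already equitable in both gadgets, because every $b_i$ has one blue, one $s'$-, and two $c$-neighbors; every $c_i$ has one red, one $s$-, and two $b$-neighbors; each blue vertex has two $b$- and one $s'$-neighbor; each red vertex has two $c$- and one $s$-neighbor; and each $s$- or $s'$-vertex has one gate neighbor and two middle neighbors. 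The resulting quotient graphs therefore coincide in both gadgets and color refinement cannot distinguish them.

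For the unidirectional gadget, activating the input first splits the blue pair into singletons; each $s_i'$ is then distinguished by its blue neighbor, and the four $b_i$ are jointly refined by their blue- and $s'$-memberships into the four possible combinations, yielding four singletons. In the unidirectional wiring, the four $c_i$ then receive four distinct $b$-neighborhoods, so they also become singletons, and consequently the $s_i$ and the red gate split and the gadget is discrete. If the output is activated instead, the red pair and the $s_i$ split and the $c_i$ partition as $\{c_0,c_1\}$ and $\{c_2,c_3\}$; but by the same wiring every $b_i$ has exactly one neighbor in each new $c$-class, so the $b$-vertices, the $s_i'$, and hence the blue gate all remain unsplit, as claimed.

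For the dead end gadget, both the $s'$-to-$b$ incidence and the $b$-to-$c$ biadjacency are twisted. When only the input is activated, the $b_i$ still become singletons by the same mechanism (their combined blue- and $s'$-class memberships again exhaust all four pairs), but the twisted wiring now gives $c_0$ and $c_2$ the same $b$-neighborhood $\{b_0,b_2\}$ while $c_1$ and $c_3$ share $\{b_1,b_3\}$. Hence the $c$-vertices split only into two pairs, each red vertex has one neighbor in each pair, and the output gate together with the $s_i$ remains unsplit; a symmetric calculation on the output side shows that activating only the output leaves the input gate unsplit. Activating both gates simultaneously combines the two one-sided analyses to refine every $c_i$ (and every $b_i$) to a singleton, so the gadget is discretized.

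The main obstacle is simply the careful bookkeeping of color-degree profiles round by round, since there are several adjacency patterns between the gates, the middle columns, and the stabilizer vertices that all interact. Once the role of the twist in the dead end wiring is isolated, however, every statement reduces to a finite check on a gadget of bounded size.
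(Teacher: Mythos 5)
The paper states this lemma without proof (pointing only to prior work on these gadgets), so there is no in-paper argument to compare against; your direct verification is the natural and essentially the only approach. Your case analysis is correct: I re-derived the adjacency lists from Figure~\ref{fig:appendix:directional_gadgets} and your degree profiles for the base equitable coloring, your singleton splitting of the $b$-column under input activation, your two-block partition of the $c$-column that then blocks propagation in the dead end case, and the blocking of back-propagation in the unidirectional case all check out.

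One small narrative imprecision worth tightening: you attribute the dead end's behavior to a twist in the $s'$-to-$b$ incidence and the $b$-to-$c$ biadjacency, but in fact the two gadgets also differ in the $I$-to-$b$ and $s$-to-$c$ incidences. This does not affect the correctness of your computations (which you carried out against the actual wiring), but if you want the prose to explain \emph{why} the calculations come out the way they do, you should name all four twisted incidences rather than two. A second minor point: your base coloring already presupposes that the $b$-column is separated from the $c$-column and $s'$ from $s$; in the pictured initial coloring these are merged, but they separate after one round of refinement because the $b$s see the blue gate and the $c$s see the red gate, so this is harmless -- it may be worth a sentence noting that you are starting from the coarsest equitable refinement of the depicted coloring.
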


\xparagraph{Asymmetry Gadgets.}
Our next gadgets only have one gate (see Figure~\ref{fig:appendix:onegategadgets}).
Both of the asymmetry gadgets $A_1$ and $A_2$ (Figures~\ref{fig:appendix:asymmetry_gadget_t1} and~\ref{fig:appendix:asymmetry_gadget_t2}) have the crucial property that the two gate vertices of either gadget are initially indistinguishable by color refinement, but
individualizing one of the gate vertices leads to a different quotient graph than individualizing the other gate vertex.
\begin{lemma}
	The asymmetry gadgets form asymmetric graphs that are stable under color refinement.
  Activating the input gate discretizes the gadget and we obtain two non-isomorphic colorings depending on which vertex was individualized.
  Furthermore, $A_1 \ncong A_2$.
\end{lemma}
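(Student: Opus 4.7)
The plan is to verify the four claims of the lemma by direct structural inspection of the explicit gadgets $A_1$ and $A_2$ depicted in Figures~\ref{fig:appendix:asymmetry_gadget_t1} and~\ref{fig:appendix:asymmetry_gadget_t2}. Since each gadget has only a single gate (a pair of vertices forming their own color class) and a small fixed number of remaining vertices, the entire argument reduces to a finite case check that can be organized around the orbit of the gate pair.

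First I would establish asymmetry. Any color-preserving automorphism of $A_i$ must either fix both gate vertices or swap them; the non-gate vertices are colored so that once the two gate vertices are fixed, the rest of the graph is pinned down class-by-class through the neighborhood structure. To exclude the non-trivial swap, I would exhibit a colored substructure adjacent to one gate vertex that is absent from the neighborhood of the other, ruling out a color-respecting involution. This part relies precisely on the design choice that distinguishes $A_1$ from $A_2$: the two gadgets encode different local asymmetries, which is also the ingredient that will later yield the non-isomorphism of the individualized colorings.

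Next I would verify stability under color refinement. The initial coloring of each gadget is equitable, which amounts to checking that for every ordered pair of color classes $(i,j)$ every vertex in the $i$-th class has the same number of neighbors in the $j$-th class. This is a routine finite edge-count. To see that activating the input gate discretizes $A_i$, I would simulate color refinement after placing the two gate vertices in distinct singleton classes; because the gadget was engineered to be asymmetric even at the quotient level, the initial split propagates through the remaining classes in a bounded number of rounds until every vertex is isolated. The two colorings obtained by individualizing either gate vertex are then non-isomorphic: by Lemma~\ref{lem:complete_invariant} (or simply by comparing the resulting ordered degree sequences between each pair of color classes) the two quotient graphs differ, because an isomorphism between them would extend to a color-respecting automorphism of the original $A_i$ swapping the two gate vertices, contradicting asymmetry already established.

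Finally, the claim $A_1 \ncong A_2$ is best shown by producing any isomorphism invariant that separates them, such as the vertex count, the colored degree sequence, or the structure of the colored quotient graph. The only delicate step is the asymmetry verification, since the discretization, stability, and non-isomorphism claims all reduce to it together with routine counting. Because the gadgets are hand-crafted for exactly this purpose, I expect each of the four subclaims to collapse to a small finite check rather than a real combinatorial argument, and the main obstacle is choosing a clean bookkeeping for the case analysis so that the verification fits into a few lines.
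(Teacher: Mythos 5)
The paper itself states this lemma without a proof, so there is nothing to compare directly against; what follows evaluates your proposal on its own merits. Your overall instinct -- that every claim here collapses to a finite, gadget-specific check -- is correct, and your argument for non-isomorphism of the two individualized colorings (an isomorphism would produce a color-respecting automorphism swapping the gate vertices) is exactly right. However, two parts of the plan are off in a way that matters.

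First, the asymmetry claim. Both $A_1$ and $A_2$ are built by attaching the two gate vertices to disjoint halves of the Frucht graph $\mathcal{F}$, and the asymmetry of each $A_i$ comes entirely from the rigidity of $\mathcal{F}$: any color-preserving automorphism of $A_i$ permutes the twelve $\mathcal{F}$-vertices among themselves and preserves the edges inside $\mathcal{F}$, hence induces an automorphism of $\mathcal{F}$, which is the identity; then the gate vertices cannot swap because their (fixed, disjoint) neighborhoods differ. Your proposal instead attributes the asymmetry to ``the design choice that distinguishes $A_1$ from $A_2$'', which is not the right ingredient -- $A_1$ would be asymmetric even if $A_2$ did not exist. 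Conversely, you propose to rule out the gate swap by ``exhibiting a colored substructure adjacent to one gate vertex that is absent from the neighborhood of the other''; but in the initial coloring both gate vertices see six black vertices in the same quotient position, so no such local discrepancy exists. The obstruction is global, through $\mathcal{F}$.

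Second, and more seriously, the $A_1 \ncong A_2$ step. You suggest separating them by ``vertex count, the colored degree sequence, or the structure of the colored quotient graph.'' All three of these agree between $A_1$ and $A_2$ by design: both have $14$ vertices, both have two green degree-$6$ vertices and twelve black degree-$4$ vertices, and both have the identical quotient graph (green--black count $6$, black--green count $1$, black--black count $3$). This coincidence is not an accident -- it is precisely what makes the pair usable inside the concealed-edge gadget, where a true edge and a fake edge must be indistinguishable to color refinement until the inner pair is split. To show $A_1 \ncong A_2$ you again need the Frucht graph's rigidity: any isomorphism $A_1 \to A_2$ maps the $\mathcal{F}$-part to the $\mathcal{F}$-part and thus restricts to an automorphism of $\mathcal{F}$, hence the identity on $\mathcal{F}$; but then the image of $a_1$'s neighborhood $\{l_1,\dots,l_6\}$ would have to equal either $\{l_1,\dots,l_5,r_1\}$ or $\{l_6,r_2,\dots,r_6\}$, which it does not. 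So the missing idea throughout is that the Frucht graph is the engine of every claim, and the coarse invariants you reached for are, by construction, powerless here.
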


\begin{figure}
\centering
\begin{subfigure}{0.49\linewidth}
  \centering
  \begin{tikzpicture}[scale=0.75]
\draw[draw=none, use as bounding box](-1,-1.875) rectangle (4,1.875);
		\node[draw,circle,fill=green,scale=0.5] at (-1,0) (a_1) {};
		\node[draw,circle,fill=green,scale=0.5] at (4,0) (a_2) {};
		
		\node[draw,circle,fill,scale=0.5] at (1.5,0.8) (l_5) {};
		\node[draw,circle,fill,scale=0.5] at (1.5,0) (l_6) {};
		\node[draw,circle,fill,scale=0.5] at (1,0.8) (l_2) {};
		\node[draw,circle,fill,scale=0.5] at (1,0.2) (l_3) {};
		\node[draw,circle,fill,scale=0.5] at (1,-0.8) (l_4) {};
		\node[draw,circle,fill,scale=0.5] at (0.5,-0.2) (l_1) {};
		
		\node[draw,circle,fill,scale=0.5] at (2.3,-0.3) (r_6) {};
		\node[draw,circle,fill,scale=0.5] at (2.3,0.3) (r_5) {};
		\node[draw,circle,fill,scale=0.5] at (1.7,-0.9) (r_4) {};
		\node[draw,circle,fill,scale=0.5] at (1.8,-0.2) (r_3) {};
		\node[draw,circle,fill,scale=0.5] at (1.8,0.2) (r_2) {};
		\node[draw,circle,fill,scale=0.5] at (1.5,-0.6) (r_1) {};
		
		\node[] at (0.25,-1.25) (l_label) {$\mathcal{F}$};

		\draw (0,-1.5) -- (0,1.5);
		\draw (0,1.5) -- (3,1.5);		
		\draw (3,1.5) -- (3,-1.5);
		\draw (3,-1.5) -- (0,-1.5);
		
		\draw (a_1) -- (l_1);
		\draw (a_1) -- (l_2);
		\draw (a_1) -- (l_3);
		\draw (a_1) -- (l_4);
		\draw (a_1) -- (l_5);
		\draw (a_1) -- (l_6);
		
		\draw (a_2) -- (r_1);
		\draw (a_2) -- (r_2);
		\draw (a_2) -- (r_3);
		\draw (a_2) -- (r_4);
		\draw (a_2) -- (r_5);
		\draw (a_2) -- (r_6);
		
		\draw (l_1) -- (l_2);
		\draw (l_1) -- (l_3);
		\draw (l_1) -- (l_4);
		\draw (l_2) -- (l_3);
		\draw (l_2) -- (l_5);
		\draw (l_3) -- (l_6);
		\draw (l_4) -- (r_1);
		\draw (l_4) -- (r_4);
		\draw (l_5) -- (r_2);
		\draw (l_5) -- (r_5);
		\draw (l_6) -- (r_2);
		\draw (l_6) -- (r_3);
		\draw (r_1) -- (r_3);
		\draw (r_1) -- (r_4);
		\draw (r_2) -- (r_5);
		\draw (r_3) -- (r_6);
		\draw (r_4) -- (r_6);
		\draw (r_5) -- (r_6);
	\end{tikzpicture}
	\caption{The asymmetry gadget $A_1$.}
    \label{fig:appendix:asymmetry_gadget_t1}
  \end{subfigure}
  \begin{subfigure}{0.49\linewidth}
    \centering
    \begin{tikzpicture}[scale=0.75]
  \draw[draw=none, use as bounding box](-1,-1.875) rectangle (4,1.875);
      \node[draw,circle,fill=green,scale=0.5] at (-1,0) (a_1) {};
      \node[draw,circle,fill=green,scale=0.5] at (4,0) (a_2) {};
      
      \node[draw,circle,fill,scale=0.5] at (1.5,0.8) (l_5) {};
      \node[draw,circle,fill,scale=0.5] at (1.5,0) (l_6) {};
      \node[draw,circle,fill,scale=0.5] at (1,0.8) (l_2) {};
      \node[draw,circle,fill,scale=0.5] at (1,0.2) (l_3) {};
      \node[draw,circle,fill,scale=0.5] at (1,-0.8) (l_4) {};
      \node[draw,circle,fill,scale=0.5] at (0.5,-0.2) (l_1) {};
      
      \node[draw,circle,fill,scale=0.5] at (2.3,-0.3) (r_6) {};
      \node[draw,circle,fill,scale=0.5] at (2.3,0.3) (r_5) {};
      \node[draw,circle,fill,scale=0.5] at (1.7,-0.9) (r_4) {};
      \node[draw,circle,fill,scale=0.5] at (1.8,-0.2) (r_3) {};
      \node[draw,circle,fill,scale=0.5] at (1.8,0.2) (r_2) {};
      \node[draw,circle,fill,scale=0.5] at (1.5,-0.6) (r_1) {};
      
      \node[] at (0.25,-1.25) (l_label) {$\mathcal{F}$};
  
      \draw (0,-1.5) -- (0,1.5);
      \draw (0,1.5) -- (3,1.5);		
      \draw (3,1.5) -- (3,-1.5);
      \draw (3,-1.5) -- (0,-1.5);
      
      \draw (a_1) -- (l_1);
      \draw (a_1) -- (l_2);
      \draw (a_1) -- (l_3);
      \draw (a_1) -- (l_4);
      \draw (a_1) -- (l_5);

      \draw (a_2) -- (l_6);
      \draw (a_1) -- (r_1);

      \draw (a_2) -- (r_2);
      \draw (a_2) -- (r_3);
      \draw (a_2) -- (r_4);
      \draw (a_2) -- (r_5);
      \draw (a_2) -- (r_6);
      
      \draw (l_1) -- (l_2);
      \draw (l_1) -- (l_3);
      \draw (l_1) -- (l_4);
      \draw (l_2) -- (l_3);
      \draw (l_2) -- (l_5);
      \draw (l_3) -- (l_6);
      \draw (l_4) -- (r_1);
      \draw (l_4) -- (r_4);
      \draw (l_5) -- (r_2);
      \draw (l_5) -- (r_5);
      \draw (l_6) -- (r_2);
      \draw (l_6) -- (r_3);
      \draw (r_1) -- (r_3);
      \draw (r_1) -- (r_4);
      \draw (r_2) -- (r_5);
      \draw (r_3) -- (r_6);
      \draw (r_4) -- (r_6);
      \draw (r_5) -- (r_6);
    \end{tikzpicture}
    \caption{The asymmetry gadget $A_2$.}
      \label{fig:appendix:asymmetry_gadget_t2}
    \end{subfigure}
	\caption{Non-isomorphic asymmetry gadgets. The two input vertices are connected regularly to disjoint halves of the Frucht graph $\mathcal{F}$.}\label{fig:appendix:onegategadgets}
\end{figure}
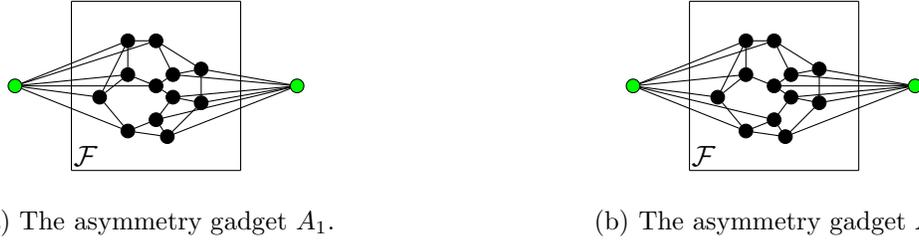

\xparagraph{Concealed Edges.}
\begin{figure}
  \begin{center}
  \begin{subfigure}{0.49\linewidth}
    \centering
  \scalebox{0.75}{
  \begin{tikzpicture}[scale=0.75]
    \node[draw,circle,fill=blue,scale=0.5] at (0, 0) (i1) {};
    \node[draw,circle,fill=blue,scale=0.5] at (0, 2) (i2) {};
    \node[draw,circle,fill=green,scale=0.5] at (-1,1) (in1) {};
    \node[draw,circle,fill=green,scale=0.5] at (1, 1) (in2) {};
    \draw (i1) -- (in1);
    \draw (i2) -- (in1);
    \draw (i1) -- (in2);
    \draw (i2) -- (in2);
    \gadgetAsym{3-2*0.5}{0.25}{0.5}{1};
    \draw (A1) -- (in1);
    \draw (A2) -- (in2);
  \end{tikzpicture}}
  \caption{A true edge.}
  \end{subfigure}
  \begin{subfigure}{0.49\linewidth}
    \centering
    \scalebox{0.75}{
    \begin{tikzpicture}[scale=0.75]
      \node[draw,circle,fill=blue,scale=0.5] at (0, 0) (i1) {};
      \node[draw,circle,fill=blue,scale=0.5] at (0, 2) (i2) {};
      \node[draw,circle,fill=green,scale=0.5] at (-1,1) (in1) {};
      \node[draw,circle,fill=green,scale=0.5] at (1, 1) (in2) {};
      \draw (i1) -- (in1);
      \draw (i2) -- (in1);
      \draw (i1) -- (in2);
      \draw (i2) -- (in2);
      \gadgetAsym{3-2*0.5}{0.25}{0.5}{2};
      \draw (A1) -- (in1);
      \draw (A2) -- (in2);
    \end{tikzpicture}}
    \caption{A fake edge.} 
    \end{subfigure}
  \end{center}
  \caption{The two types of concealed edge gadgets.} \label{fig:appendix:edgegadgets}
  \end{figure}
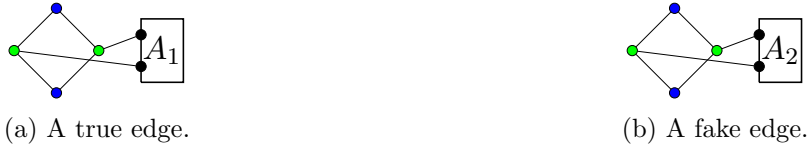
Lastly, we describe the \emph{concealed edge gadget} that is used to hide edges from color refinement.
The gadget has two vertices that represent the endpoints of an edge (the blue vertices in Figure~\ref{fig:appendix:edgegadgets}).
The idea is that instead of an edge connecting the two vertices, we insert a concealed edge gadget. For this the gadget has a pair consisting of two \emph{inner vertices} (the green vertices in Figure~\ref{fig:appendix:edgegadgets}), which are both connected to each input vertex.
This pair is then connected to an asymmetry gadget.
We define two classes of edges, where one type of edge attaches the asymmetry gadget $A_1$ and the other $A_2$.
We call edges with asymmetry type $A_1$ \emph{true edges}, and those with $A_2$ \emph{fake edges}.

The crucial property is that as long as inner vertices of the gadgets are not distinguished, color refinement can not distinguish between true edges and fake edges.
However, if we distinguish the inner vertices, true edges can indeed be distinguished from fake edges.

We always employ this gadget within the following design pattern. 
Whenever we want to connect two sets of vertices $V_1$ and $V_2$ with edges $E \subseteq V_1 \times V_2$ in a concealed manner, we first add a concealed edge gadget between \emph{all} pairs $(v_1, v_2) \in V_1 \times V_2$.
However, only if $(v_1, v_2) \in E$, we use a \emph{true edge}, and whenever $(v_1, v_2) \notin E$ we use a \emph{fake edge}.
Finally, we connect all pairs of inner vertices of the concealed edge gadgets to some construction that is used to \emph{reveal} the edges.

The asymmetry gadget prohibits automorphisms from flipping the concealed edge gadget itself. 
However, care has to be taken when connecting the inner vertices to other constructions: it is imperative to connect the inner vertices of multiple concealed edge gadgets that are on the, say, left side of the asymmetry gadget, in the same manner. 
Otherwise, once revealed, edges could possibly be distinguished into even more categories than just fake and true edges.

\subsection{A construction for asymmetric trees}
For our construction, we first restrict ourselves to asymmetric trees, i.e., all leaves have different colors. Building on this, the following section takes symmetries into account. 
Let $(T, \pi)$ be an \emph{asymmetric}, colored tree that satisfies the necessary conditions (see Section~\ref{sec:necessary_conditions}).

We describe a graph $G(T, \pi)$ and a cell selector $S(T, \pi)$ such that $(T, \pi)$ is (up to renaming of colors) the IR-tree $\Gamma_{S(T, \pi)}(G(T, \pi))$.
We describe the construction step by step. Initially, $G(T, \pi)$ is the empty graph and we successively add more and more vertices. 

The goal is to model the graph and cell selector in such a way that there is a one-to-one correspondence between paths in $T$ and sequences of individualizations in $G(T, \pi)$. Note that such sequences are precisely the paths in the IR-tree $\Gamma_{S(T, \pi)}(G(T, \pi))$.
To guarantee such a correspondence, certain properties of the paths in the tree $T$ must translate into specific properties for their corresponding sequence of individualizations. When modeling $G(T, \pi)$ we must in particular ensure the following.
\begin{enumerate}
\item Two paths must end in nodes of different color exactly if the corresponding sequences of individualizations result in different quotient graphs.
\item A path must end in a leaf exactly if the corresponding sequence of individualizations (when followed by color refinement) results in a discrete coloring.
\end{enumerate}
These two effects are guaranteed by different parts of our construction. 
We start by describing the part of the graph on which the cell selector operates, i.e., within which cells are chosen.

\xparagraph{Selector Tree.}
One of the central difficulties  is that color refinement executed on~$T$ may actually result in a coloring that is finer than~$\pi$. This is precisely the reason why the tree must be concealed and why we cannot simply use the tree~$T$ itself.
Therefore, structural and color information about $T$ is encoded into the selector tree so that it is initially hidden from color refinement. 
In particular, the selector tree will be stable under color refinement and only after individualizations are applied, parts of the structure of $T$ are revealed. 

To construct the selector tree, we first copy all the nodes of $T$ and color each node with its level. 
To make cells appear uniform, we encode the edges of $T$ in the selector tree using concealed edges, as follows.
We fully connect nodes of level $i$ to nodes of level $i + 1$ using concealed edges, creating a complete bipartite graph. 
Only if a node $v$ at level $i + 1$ is a child of node $p$ at level $i$ in $T$, we use a true edge between $v$ and $p$. Otherwise we use a fake edge.
This guarantees that our copy of $T$ is stable under color refinement.
See Figure~\ref{fig:appendix:concealedleveledges} for an illustration.

\begin{figure}
  \centering
  \begin{subfigure}{0.49\linewidth}
    \centering
    \scalebox{0.8}{\begin{tikzpicture}[
  every node/.style = {minimum width = 0.7em, inner sep = 0, outer sep = 0, draw, circle, fill=gray!50},
  level/.style = {sibling distance = 10mm/#1, level distance=7mm}
  ]
  	\node[thick, fill=white] {}
  	child {	node[thick, fill=white] {} child{node[triangle,fill=white,yshift=-0.7mm] {}}
  	}
  	child {	node[thick, fill=white] {} child{node[triangle,fill=white,yshift=-0.7mm] {}}
  	}
  	child {	 node[thick, fill=white] {} child{node[triangle,fill=white,yshift=-0.7mm] {}}
  	};
  	\node[thick, fill=white,xshift=3cm] {}
  	child {	node[thick, fill=white] {} child{node[triangle,fill=white,yshift=-0.7mm] {}}
  	}
  	child {	node[thick, fill=white] {} child{node[triangle,fill=white,yshift=-0.7mm] {}}
  	}
  	child {	 node[thick, fill=white] {} child{node[triangle,fill=white,yshift=-0.7mm] {}}
  	};
  	\node[draw=none,fill=white] at (-1,0) (T_label) {$T$};
  	\draw (0,0.1) -- (0.2,0.4); 
  	\draw[xshift=3cm] (0,0.1) -- (-0.2,0.4);
  \end{tikzpicture}}
  \end{subfigure}
  \begin{subfigure}{0.49\linewidth}
  \centering
  \scalebox{0.8}{\begin{tikzpicture}[
  every node/.style = {minimum width = 0.7em, inner sep = 0, outer sep = 0, draw, circle, fill=gray!50},
  level/.style = {sibling distance = 10mm/#1, level distance=7mm}
  ]
  	\node[thick, fill=white] (n1) {}
  	child {	node[thick, fill=white] (n2) {} child{node[triangle,fill=white,yshift=-0.7mm]  {}}
  	}
  	child {	node[thick, fill=white] (n3) {} child{node[triangle,fill=white,yshift=-0.7mm] {}}
  	}
  	child {	 node[thick, fill=white] (n4) {} child{node[triangle,fill=white,yshift=-0.7mm] {}}
  	};
  	\node[thick, fill=white,xshift=3cm] (n5) {}
  	child {	node[thick, fill=white] (n6) {} child{node[triangle,fill=white,yshift=-0.7mm] {}}
  	}
  	child {	node[thick, fill=white] (n7) {} child{node[triangle,fill=white,yshift=-0.7mm] {}}
  	}
  	child {	 node[thick, fill=white] (n8) {} child{node[triangle,fill=white,yshift=-0.7mm] {}}
  	};
  	
  	\node[draw=none,fill=white] at (-1,0) (T_label) {$G(T)$};
  	\draw (0,0.1) -- (0.2,0.4); 
  	\draw[xshift=3cm] (0,0.1) -- (-0.2,0.4);
  	\draw[blue] (n1) -- (n2);
  	\draw[blue] (n1) -- (n3);
  	\draw[blue] (n1) -- (n4);
  	\draw[blue] (n5) -- (n6);
  	\draw[blue] (n5) -- (n7);
  	\draw[blue] (n5) -- (n8);
  	\draw[red] (n1) -- (n6);
  	\draw[red] (n1) -- (n7);
  	\draw[red] (n1) -- (n8);
  	\draw[red] (n5) -- (n2);
  	\draw[red] (n5) -- (n3);
  	\draw[red] (n5) -- (n4);
  \end{tikzpicture}}
\end{subfigure}
  \caption{Connecting levels of the selector tree. Blue/red edges on the right symbolize true/fake edge gadgets} \label{fig:appendix:concealedleveledges}
\end{figure}

At some point, we will need to add another gadget construction to ensure that edges between the levels are actually revealed at the right time.
Assuming this for now, the cell selector $S(T, \pi)$ always chooses as next cell the cell that consists of the children of the node chosen last.
Here children means children with respect to true edges in the selector tree. 

\xparagraph{Colors.}
Next, we translate the colors $\pi$ of $T$ into a construction that is part of $G(T, \pi)$. 
Recall that the colors indicate whether a sequence of individualizations should lead to differing quotient graphs.
We make use of fake edges again to encode this: intuitively, we encode a one-to-one correspondence between selector tree nodes and their color in $\pi$ using concealed edges. Since the edges are concealed, they are hidden from color refinement until revealed.
\begin{figure}
  \centering
  \begin{subfigure}{0.49\linewidth}
    \centering
    \scalebox{0.8}{\begin{tikzpicture}[
    every node/.style = {minimum width = 0.7em, inner sep = 0, outer sep = 0, draw, circle, fill=gray!50},
    level/.style = {sibling distance = 10mm/#1, level distance=7mm}
    ]
      \node[thick, fill=blue] (n1) {}
      child {	node[thick, fill=green] (n2) {} child{node[triangle,fill=white,yshift=-0.7mm] {}}
      }
      child {	node[thick, fill=green] (n3) {} child{node[triangle,fill=white,yshift=-0.7mm] {}}
      }
      child {	 node[thick, fill=red] (n4) {} child{node[triangle,fill=white,yshift=-0.7mm] {}}
      };
      \node[thick, fill=blue,xshift=3cm] (n5) {}
      child {	node[thick, fill=green] (n6) {} child{node[triangle,fill=white,yshift=-0.7mm] {}}
      }
      child {	node[thick, fill=pink] (n7) {} child{node[triangle,fill=white,yshift=-0.7mm] {}}
      }
      child {	node[thick, fill=pink] (n8) {} child{node[triangle,fill=white,yshift=-0.7mm] {}}
      };
      \node[draw=none,fill=white] at (-1,0) (T_label) {$T$};
      \draw (0,0.1) -- (0.2,0.4); 
      \draw[xshift=3cm] (0,0.1) -- (-0.2,0.4);
    \end{tikzpicture}}	
    \vspace{0.75cm}
  \end{subfigure}
  \begin{subfigure}{0.49\linewidth}
    \centering
    \scalebox{0.8}{
    \begin{tikzpicture}[
    every node/.style = {minimum width = 0.7em, inner sep = 0, outer sep = 0, draw, circle, fill=gray!50},
    level/.style = {sibling distance = 10mm/#1, level distance=7mm}
    ]
      \node[thick, fill=white] (n1) {}
      child {	node[thick, fill=white] (n2) {} child{node[triangle,fill=white,yshift=-0.7mm] {}}
      }
      child {	node[thick, fill=white] (n3) {} child{node[triangle,fill=white,yshift=-0.7mm] {}}
      }
      child {	 node[thick, fill=white] (n4) {} child{node[triangle,fill=white,yshift=-0.7mm] {}}
      };
      \node[thick, fill=white,xshift=3cm] (n5) {}
      child {	node[thick, fill=white] (n6) {} child{node[triangle,fill=white,yshift=-0.7mm] {}}
      }
      child {	node[thick, fill=white] (n7) {} child{node[triangle,fill=white,yshift=-0.7mm] {}}
      }
      child {	 node[thick, fill=white] (n8) {} child{node[triangle,fill=white,yshift=-0.7mm] {}}
      };
      \node[draw=none,fill=white] at (-1,0) (T_label) {$G(T)$};
      \draw (0,0.1) -- (0.2,0.4); 
      \draw[xshift=3cm] (0,0.1) -- (-0.2,0.4);
      
      \node[circle,fill=red,thick] (r) at (0,-2.5) {};
      \node[circle,fill=blue,thick] (b) at (1,-2.5) {};
      \node[circle,fill=green,thick] (g) at (2,-2.5) {};
      \node[circle,fill=pink,thick] (p) at (3,-2.5) {};
      
      \draw (n1) -- (b);
      \draw (n5) -- (b);
      \draw (n2) -- (g);
      \draw (n3) -- (g);
      \draw (n6) -- (g);
      \draw (n4) -- (r);
      \draw (n7) -- (p);
      \draw (n8) -- (p);
    \end{tikzpicture}}	 
  \end{subfigure}
    \caption{Colors of $T$ are represented in $G(T)$ through concealed edges to special \emph{color nodes}}
    \label{fig:appendix:colornodes}
  \end{figure}
We proceed level-wise.
Let $l$ be the level under consideration. 
Let $C$ be the set of colors that appear at level $l$ of $(T, \pi)$.
For all $c \in C$, we create a unique \emph{color node} $c$ in $G(T, \pi)$. This node is also colored with $c$. 
We now connect every node at level $l$ of the selector tree to every node in $C$ using concealed edges:
we use a true edge for all pairs $(n, c)$ where $\pi(n) = c$. All other edges are fake.
See Figure~\ref{fig:appendix:colornodes} for an illustration.

As before, we still have to explain how and when edges are revealed. The idea is to always reveal the type of those concealed edges incident with node $n$ at the point in time when node $n$ is individualized.

\xparagraph{Leaf Detection.}
Whenever we individualize a node that corresponds to a leaf in~$T$, the graph $G(T, \pi)$ is supposed to become discrete, thereby terminating the IR-process. 
The first step towards this is to add a construction that detects whether a specific node $n$ in a cell was individualized.
Then, a decision can be made as to whether $n$ corresponds to a leaf or not.
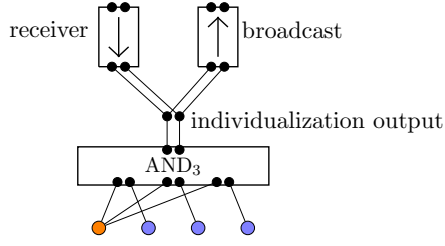
\begin{figure}
  \centering
  \noindent\scalebox{0.66}{\begin{tikzpicture}[
    every node/.style = {draw,circle,fill=black}]
    \node[rectangle,draw=black,fill=white,minimum width = 10em, minimum height = 2em, thick] at (2.5,2) {\scalebox{1.1}{$\AND_3$}};
    \foreach \i in {1,...,4}{
      \ifthenelse{\i=1}{
       \node[scale=0.75,fill=orange] at (\i, 0.75) (c\i) {};
      }{
        \node[scale=0.75,fill=lightblue] at (\i, 0.75) (c\i) {};
      }
     }

     \foreach \i in {1,...,3}{
      \foreach \j in {1,...,2}{
       \node[scale=0.5] at (0.125 + \i + \j*0.25, 1.675) (i\i\j) {};
      }
     }

     \foreach \j in {1,...,2}{
       \node[scale=0.5] at (0.125 + 2 + \j*0.25, 2.325) (o\j) {};
       \node[scale=0.5] at (0.125 + 2 + \j*0.25, 3) (io\j) {};
      }

     \gadgetUniUp{3}{4.5-0.5}{0.4}{1}
     \gadgetUniDown{1}{4.5-0.5}{0.4}{2}

     \foreach \i in {1,...,2}{
      \draw[] (o\i) -- (io\i);
      \draw[] (UDout\i1) -- (io\i);
      \draw[] (UDout\i2) -- (io\i);
     }

     \foreach \i in {1,...,3}{
      \pgfmathtruncatemacro\y{\i + 1}
      \draw[] (c1) -- (i\i1);
      \draw[] (c\y) -- (i\i2);
     }

     \node[fill=none,draw=none] at (0,  5.75-1) {\scalebox{1.25}{receiver}};
     \node[fill=none,draw=none] at (4.9,5.75-1) { \scalebox{1.25}{broadcast}};
     \node[fill=none,draw=none] at (4.9+0.5,3-0.125) {\scalebox{1.25}{individualization output}};
  \end{tikzpicture}}
  \caption{Leaf detection mechanism for the leftmost node of the cell. 
  If the leftmost vertex is individualized, the $\AND_3$ gadget is activated.
  The figure only shows true edges. In the overall construction, all remaining connections between vertices of cells at level $l$ and $\AND_i$ gadgets of level $l$ are fake edges.} \label{fig:appendix:leafdetect}
\end{figure}
Let $s\geq2$ be the size of the current cell (in the tree~$T$ the current cell is always the set of children of some node).
For each vertex $n$ in the cell, consider all $s-1$ (unordered) pairs with other vertices of the cell.
We add an $\AND_{s-1}$ gadget and connect the left vertex of every input pair to $n$, and the other to one of the $s-1$ other vertices.
An $\AND_{s-1}$ gadget is not symmetric in its input gates, so in order to keep things symmetrical, we actually add $(s-1)!$ many $\AND_{s-1}$ gadgets for every possible order of vertices in the input.
We connect the output gates of all the $\AND_{s-1}$ gadgets to a new pair of vertices, which we call the \emph{individualization output} of $n$. 

\begin{fact}
The individualization output is activated (i.e.,~split)  whenever~$n$ is individualized.
If the cell size is larger than 2 then  the individualization output is not activated when another vertex in the cell is individualized.
\end{fact}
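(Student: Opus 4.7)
The plan is to reduce both assertions to the two properties of the $\AND_{s-1}$ gadget recalled earlier: activating every input gate discretizes the gadget and hence splits its output, while any unsplit input gate leaves the output unsplit. Since all $(s-1)!$ copies attached to $n$ feed into a common output pair (the individualization output of~$n$), it suffices to control, for each copy, the status of the $s-1$ input gates it receives from the cell.

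For the first assertion, suppose $n$ is individualized. Each input gate of every attached $\AND_{s-1}$-copy is a size-two color class whose two vertices play symmetric roles inside the gadget; externally, one is adjacent only to $n$ and the other only to some $m\in C\setminus\{n\}$. After individualization, $n$ receives its own singleton color, distinct from the color of~$m$, so a single step of color refinement splits every input gate. By the first property of the $\AND$ gadget, every copy is thereby discretized, and in particular its output---identified with the individualization output of~$n$---is split.

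For the second assertion, suppose some $m'\in C\setminus\{n\}$ is individualized and $s>2$. Pick any $m\in C\setminus\{n,m'\}$, which exists since $|C|>2$. For every attached copy, the input gate corresponding to the pair $(n,m)$ has its two vertices adjacent externally only to $n$ and only to~$m$, respectively. By the symmetrization we have set up---using all $(s-1)!$ orderings of the other cell-vertices in each copy---together with the uniform concealed-edge wiring of the selector tree and color-node construction, the vertices $n$ and~$m$ remain in a common color class after $m'$ is individualized. Hence the two vertices of this input gate see identical colored neighborhoods; the input gate is not split, and by the second $\AND$-property the output of this copy, and therefore the individualization output of~$n$, remains unsplit.

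The only nontrivial step is verifying that $n$ and $m$ really do stay indistinguishable after $m'$ is individualized; everything else is local gadget bookkeeping. This indistinguishability is precisely what the $(s-1)!$ symmetric copies guarantee at the level of the $\AND$-wiring, and what the concealed-edge design of the selector tree and color nodes guarantees for the rest of the graph: any color refinement chain trying to separate $n$ from $m$ would have to traverse substructures that are built to be isomorphic with respect to the swap $n\leftrightarrow m$. I expect this to be the main obstacle to making the argument fully rigorous, and it is handled by a separate invariant of the overall construction rather than by a self-contained gadget calculation.
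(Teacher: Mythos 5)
Your argument is correct and takes the same route the paper implicitly relies on (the paper states this Fact without a written proof): reduce both claims to the two cited properties of the $\AND_{s-1}$ gadget, observing that individualizing~$n$ splits every input gate of every copy attached to~$n$ (so all outputs, and hence the individualization output, split), whereas individualizing $m'\neq n$ with $s>2$ leaves at least one input gate --- the one whose external attachments are $n$ and some $m\notin\{n,m'\}$ --- unsplit in every copy. The point you flag as the ``only nontrivial step'' --- that $n$ and $m$ remain in a common color class --- is indeed the load-bearing part, and you give the right reason for it: the $(s-1)!$ symmetrized copies make the $\AND$-wiring uniform under swapping $n\leftrightarrow m$, and at this stage of the construction all other incidences of $n$ and $m$ (selector-tree edges, color-node edges) go through still-concealed gadgets, so they are indistinguishable by construction; the paper defers the same uniformity observation to the proof of Lemma~\ref{lem:cell_choices}. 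One small point worth making explicit: the Fact should be read as a statement about the $\AND$-gadget construction in isolation, before the broadcast/receiver gadgets are attached; once those are present, an individualized \emph{leaf} $m'$ will in fact split $n$'s output via the broadcast mechanism (as the subsequent Fact in the paper records), so the scope of the present Fact is deliberately local.
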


We should discuss the case of a size $2$ cell, in which actually both vertices of the cell become singletons when one of them is individualized.
The necessary conditions for $T$ imply that either both vertices are leaves or both vertices are internal nodes in $T$ (see Lemma~\ref{lem:forbiddenbinary}). 
Hence, while this activates the construction for both vertices, the construction is still able to model any case that satisfies the necessary conditions.

We need to ensure the construction is stable under color refinement.
Again, we can do so using concealed edges.
Consider each level $i$ in the selector tree:
all of the aforementioned edges connecting vertices of level~$i$ in the selector tree with $\AND_{s-1}$ gadgets become true edges. We then insert fake edges between nodes of the selector tree of level $i$ and the other $\AND_{s-1}$ gadgets of level $i$ if there is no true edge.
This way, the construction becomes stable under color refinement.

Whenever a node does indeed correspond to a leaf and its individualization output is activated, we want to propagate discretization to the entire graph.
We add some control structures for every node $n$ in the selector tree for this purpose.
We add a unidirectional gadget if the node is a leaf in $T$, or a dead end gadget if not.
We call this gadget the \emph{broadcast} gadget of node $n$.   
We also add a \emph{receiver} gadget to every node $n$, which is always a unidirectional gadget.

We connect the input of the broadcast gadget to the individualization output of $n$, as well as the output of the receiver gadget to the individualization output of $n$.
Next, we connect the output of the broadcast gadget to the input of \emph{all} receiver gadgets in the graph.
See Figure~\ref{fig:appendix:leafdetect} for an overview of the construction.

\begin{fact}
	When a leaf is individualized, in turn all individualization outputs in $G(T,\pi)$ are split.
	As long as no leaf is individualized, individualization outputs are split only if they belong to individualized nodes.
\end{fact}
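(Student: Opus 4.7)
The plan is to trace how activation of an individualization output propagates through the broadcast--receiver network, using the gate-level lemmas already proved for the unidirectional and dead end gadgets together with the immediately preceding fact about individualization outputs.

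For the first statement, suppose a leaf $\ell$ is individualized. By the immediately preceding fact, the individualization output of $\ell$ is split. This pair is wired to the input gate of $\ell$'s broadcast gadget, which is a unidirectional gadget because $\ell$ is a leaf; the unidirectional gadget lemma then guarantees that color refinement discretizes the broadcast and, in particular, splits its output gate. That output gate is wired to the input gate of every receiver gadget in $G(T,\pi)$, and every receiver is unidirectional, so the same lemma propagates the split from each receiver input to the corresponding receiver output, which is attached to the individualization output of some node. Consequently every individualization output in $G(T,\pi)$ becomes split, as claimed. The cell-size-two case is absorbed: by the Forbidden Binary Structures lemma both children are simultaneously leaves or simultaneously non-leaves, so coactivation causes no anomalous behaviour.

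For the second statement, assume no leaf is individualized and consider any individualization output $o$ that eventually splits. I would argue by induction on the round of color refinement in which a split first appears across the graph. The only mechanisms that can split $o$ are (i)~direct individualization of its node $n$, which is exactly the desired conclusion, or (ii)~the output gate of $n$'s receiver gadget becoming split. Case~(ii), by the unidirectional gadget lemma, requires the receiver's input to already be split, which in turn requires the output gate of some broadcast gadget to be split. Since no individualized node is a leaf, every broadcast whose input has any chance of being activated is a dead end gadget; the dead end lemma says such a gadget's output gate cannot be split by color refinement from the input side alone, and the only alternative activation source would be yet another individualization output split \emph{earlier} in the refinement process. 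The induction hypothesis then forbids such an earlier split of any non-individualized node's output, so the chain collapses and case~(ii) is vacuous.

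The main obstacle I anticipate is the apparent feedback loop in case~(ii): the broadcast input and the receiver output are wired to the same pair of vertices (the individualization output of $n$), so a priori a self-consistent colouring might split both simultaneously without an external cause. Handling this rigorously is the reason I phrase the argument as an induction on the refinement round at which a split first appears, exploiting the fact that neither gadget type can \emph{originate} a novel split by itself under color refinement, so the very first split in any hypothetical propagation chain must come from a genuine individualization, not from the feedback edge.
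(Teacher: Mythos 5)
Your proposal is correct and follows essentially the same propagation chain as the paper's informal explanation immediately after the Fact: individualization splits the output pair, the output feeds the broadcast gadget, and the unidirectional/dead-end dichotomy determines whether the split reaches the receiver cluster and hence all other individualization outputs. Your added care in handling the feedback loop (broadcast input and receiver output both attach to the same individualization output pair) via an induction on the refinement round at which a split first appears is a welcome rigorization of what the paper leaves implicit, and is made sound by exactly the right gadget properties you invoke: the unidirectional receiver blocks output-to-input propagation, and the dead-end broadcast blocks input-to-output propagation, so no new split can originate inside the broadcast/receiver cluster without a genuine individualization upstream.
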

The idea goes as follows: 
if $n$ is individualized, the individualization output is split.
If $n$ is a leaf, we want to propagate this split to all other individualization outputs, causing a discretization of the graph.
For this the broadcast gadget is activated, which sends the split to all the receiver gadgets, which in turn split their respective individualization output. If~$n$ is not a leaf, the broadcast gadget is a dead end gadget and activation of the individualization output does not have this effect.
Below, we explain how we can use the same process to reveal cells of the entire selector tree as well as actual color nodes.

\xparagraph{Revealing Cells and Colors.}
Recall that the cell selector makes choices along the selector tree and so choosing a particular cell corresponds to individualization of its parent node in the parent cell.
Assume we are individualizing a node at level $i$ of the selector tree.
At this point, we want the connections in the selector tree from level $i$ to level $i + 1$ to be revealed.
This is realized via the $\AND$-gadget construction from the previous paragraph. 
We re-use the individualization output at level $i$ to reveal the edges of the selector tree to level $i + 1$.
For this, we connect the output through a unidirectional gadget with the internal nodes of the concealed edges between level $i$ and level $i + 1$.
To be precise, for every node~$n$, we add a unidirectional gadget, the output of which is then connected to all internal nodes of the concealed edges. 
The use of unidirectional gadgets ensures that revealing the edges does not split an individualization output in the opposite direction.

Initially, the construction is stable under color refinement. 
Upon activating the unidirectional gadget, i.e., after a node on the previous level has been individualized, all true edges are distinguishable from fake edges.
Hence, actual connections to cells are visible to color refinement.

\begin{fact}
	When a node at level $i$ is individualized its color and its edges to level $i+1$ are revealed.
	Before individualizing a node at level $i+1$, these are the only revealed edges connected to level $i+1$.  
\end{fact}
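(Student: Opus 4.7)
The plan is to verify both parts of the claim by tracing the gadget activations triggered by a single individualization at level $i$, and to rule out any side effects by appealing to the uniformity of the construction.

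For the first part (revealing upon individualization), I would invoke the preceding fact on the leaf-detection construction: individualizing a node $n$ at level $i$ splits its individualization output. By construction, the individualization output of $n$ is wired through a unidirectional gadget to the inner vertex pair of every concealed edge incident with $n$, both those leading to the selector tree at level $i+1$ and those leading to the color nodes at level $i$. Since activating the input of a unidirectional gadget discretizes it, each such inner vertex pair becomes split, and by the defining property of concealed edges, true edges and fake edges then become distinguishable by color refinement. In particular, $n$'s edges to its children at level $i+1$ and the edge to the unique color node matching $\pi(n)$ are revealed.

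For the second part, I would combine two observations. First, prior to any individualization at level $i+1$, the only individualization outputs that have been split are those of nodes already individualized along the current path; no leaf can have been individualized earlier along this path (otherwise discretization would have been broadcast already). Second, the unidirectional gadgets connecting individualization outputs to the inner vertices of concealed edges only propagate splits from input to output, not from output back to input. Consequently, the inner vertices of a concealed edge incident with level $i+1$ can have been split only if the adjacent node at level $i$ has already been individualized, and hence only the edges belonging to those nodes are revealed.

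The main obstacle is arguing that no unintended splits leak through the dense interconnections of selector-tree nodes, color nodes, $\AND$ gadgets, broadcast/receiver gadgets, and unidirectional gadgets across a single level. The plan is to rely on the uniform symmetrization built into the construction: every pair of vertices on successive levels is joined by a concealed edge (true or fake), every ordering of inputs of each $\AND_{s-1}$ gadget is instantiated, and the inner vertices of concealed edges are wired consistently on each side of the asymmetry gadget. Under these symmetries the coloring is stable under color refinement before the intended activation, so the chain of activations described above is the only one that fires, which gives both claims.
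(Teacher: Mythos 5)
Your proposal is correct and follows the same approach the paper takes in the surrounding exposition: trace the activation chain from the individualization output of the node at level $i$, through the attached unidirectional gadget, to the inner vertices of the concealed edges leading to level $i{+}1$ and to the color nodes; then invoke the directional property of the unidirectional gadget to rule out back-propagation, and the uniformity of the gadget wiring to argue stability under color refinement until the intended split. The observations you add explicitly (that no leaf can have been individualized earlier on the path, hence no broadcast has fired, and that the coloring is stable before the intended activation) are exactly the reasons the paper relies on, made slightly more explicit.
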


In order to actually activate the individualization output, we also need to reveal edges from level~$i+1$ nodes to the $\AND_{s-1}$ gadgets. 
Hence, we do the same construction as above, connecting the unidirectional gadgets we added on level $i$ to reveal these edges on level $i + 1$. 

Note that the construction guarantees that if two nodes $n_1, n_2$ at level $l$ of $T$ have a different number of children, then $n_1$ and $n_2$ are distinguished. 
This reflects the necessary requirement discussed in Lemma~\ref{lem:cell_size_and_isotype_quotient}.
As mentioned there, this restriction could be avoided through the use of a more powerful cell selector.

For the very first level of the selector tree, the immediate children of the root, we remove the concealed edge construction by removing fake edges, such that the level is initially revealed.

Finally, the same technique is also used to reveal colors. 
We connect the individualization output of node $n$ at level $l$ to the inner vertices of the concealed edges between $n$ and the color nodes $C$ of level $l$. 
This immediately reveals the color of $n$ whenever we individualize $n$. 
In this case, we need no special construction for level $1$.

\subsection{Generating symmetries}

We expand our construction so that it can also handle colored trees $(T,\pi)$ with prescribed symmetries.
As such, the graph $G(T,\pi)$ can also be built from a tree $(T,\pi)$ that is not necessarily asymmetric. 
In this case, sequences of individualizations along root-to-leaf paths still produce the desired tree $(T,\pi)$ as a subtree of $\Gamma_{S(T,\pi)}(G(T,\pi))$. 
However, $G(T,\pi)$ is supposed to become discrete after the IR-process reaches a leaf of $(T,\pi)$, but at this point the selector tree in $G(T,\pi)$ is only split up to orbits that correspond to orbits of $T$.

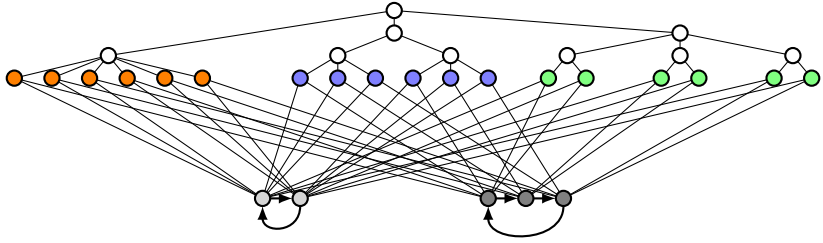
\begin{figure}[t]
  \centering
  \noindent\begin{tikzpicture}[
    every node/.style = {minimum width = 0.7em, inner sep = 0, outer sep = 0, draw, circle, fill=white, scale=0.75, thick},
    level 1/.style = {sibling distance = 38mm, level distance=3mm},
    level 2/.style = {sibling distance = 15mm, level distance=3mm},
    level 3/.style = {sibling distance = 5mm, level distance=3mm},
    hidee/.style={edge from parent/.style={draw=none}},
    norme/.style={edge from parent/.style={black,thin,draw}}
    ]
    \node[] (root) {}
    child[hidee] {	node[draw=none] {}
        child[hidee]{ node (leftbranch) {}
          child[norme] {	node[fill=orange] (O1) {} 
          }
          child[norme] {	node[fill=orange] (O2) {} 
          }
          child[norme] {	node[fill=orange] (O3) {} 
          }
          child[norme] {	node[fill=orange] (O4) {} 
          }
          child[norme] {	node[fill=orange] (O5) {} 
          }
          child[norme] {	node[fill=orange] (O6) {} 
          }
        }
    }
    child {	node[thick] {}
        child {	node [] {} child {	node[fill=lightblue] (B1) {}} child {	node[fill=lightblue] (B2) {}} child {	node[fill=lightblue] (B3) {}}
        }
        child {	node[] {} child {	node[fill=lightblue] (B4) {}} child {	node[fill=lightblue] (B5) {}} child {	node[fill=lightblue] (B6) {}}
        }
    }
    child {	node[thick,draw=black] {}
        child {	node[] {} child {	node[fill=green!50] (Y1) {}} child {	node[fill=green!50] (Y2) {}} 
        }
        child {	node[] {} child {	node[fill=green!50] (Y3) {}} child {	node[fill=green!50] (Y4) {}} 
        }
        child {	node[] {} child {	node[fill=green!50] (Y5) {}} child {	node[fill=green!50] (Y6) {}} 
        }
    };

    \draw (root) -- (leftbranch);

    \foreach \i in {1,...,2}{
       \node[fill=gray!33] at (-2*0.5 + 0.25 + 0.5 * \i - 1.5, -2.5) (c1\i) {};
     }
    
     \foreach \i in {1,...,2}{
      \pgfmathtruncatemacro\y{Mod(\i,2)+1}
      \ifthenelse{\i=2}{
        \draw[thick,-latex] (c1\i) to [bend left=90,looseness=1.9] (c1\y);
      }{
        \draw[thick,-latex] (c1\i) -- (c1\y);
      }
     }

     \foreach \i in {1,...,3}{
       \node[fill=gray] at (-2*0.5 + 0.25 + 0.5 * \i + 1.5, -2.5) (c2\i) {};
     }
    
     \foreach \i in {1,...,3}{
      \pgfmathtruncatemacro\y{Mod(\i,3)+1}
      \ifthenelse{\i=3}{
        \draw[thick,-latex] (c2\i) to [bend left=90,looseness=1.3] (c2\y);
      }{
        \draw[thick,-latex] (c2\i) -- (c2\y);
      }
     }

     \foreach \i in {1,...,6}{
      \pgfmathtruncatemacro\y{Mod(\i + 1,2) + 1}
      \draw[] (Y\i) -- (c1\y);
     }

     \foreach \i in {1,...,6}{
      \pgfmathtruncatemacro\y{(\i-1) / 2}
      \pgfmathtruncatemacro\yy{\y + 1}
      \draw[] (Y\i) -- (c2\yy);
     }

     \foreach \i in {1,...,6}{
      \pgfmathtruncatemacro\y{Mod(\i-1,3) + 1}
      \draw[] (B\i) -- (c2\y);
     }

     \foreach \i in {1,...,6}{
      \pgfmathtruncatemacro\y{(\i-1) / 3}
      \pgfmathtruncatemacro\yy{\y + 1}
      \draw[] (B\i) -- (c1\yy);
     }

     \foreach \i in {1,...,6}{
      \pgfmathtruncatemacro\y{Mod(\i-1,3) + 1}
      \draw[] (O\i) -- (c2\y);
     }

     \foreach \i in {1,...,6}{
      \pgfmathtruncatemacro\y{(\i-1) / 3}
      \pgfmathtruncatemacro\yy{\y + 1}
      \draw[] (O\i) -- (c1\yy);
     }
  \end{tikzpicture}
  \caption{Symmetry cycles couple leaf orbits across multiple branches of the selector tree. The illustration omits fake edges. In the construction, cycles do not contain directed edges, but specially colored nodes that indicate direction.}
  \label{fig:symmetry_coupling}
\end{figure}

Discretization of orbits is challenging since we need to make sure that the symmetries are not destroyed by the addition of new gadgets.
Once leaf orbits have been discretized, discretization propagates through the selector tree as before and the whole construction becomes discrete.

Overall we need to construct the graph~$H_T$ mentioned in Theorem~\ref{thm:restrict_symmetries_of_tree}.  
\thmtwo*

To construct $H_T$, we introduce \emph{symmetry cycles} and \emph{symmetry couplings}. The basic idea is shown in Figure~\ref{fig:symmetry_coupling}, a detailed explanation follows below. This in turn defines a new construction $\tilde{G}(T,\pi)$ by adding a concealed version of $H_T$ to the selector tree.

\xparagraph{Discretization up to Orbits.}
Revealing the true and fake edges in $G(T,\pi)$ is not enough to discretize orbits, since this just reveals the orbit partition. By definition, nodes in the same orbit must be connected to the rest of the construction in a symmetric way and thus, splitting an orbit has to be induced by individualizations inside the orbit or through connections to other orbits that have already been split.

We thus face two independent problems related to leaf orbits. First, when the IR-process on $G(T,\pi)$ reaches leaf $l$, the orbit of $l$ may not be discrete in the current construction. Second, other leaf orbits have not been split at all. We solve these problems in an isolated setting first, by providing a constructive proof of Theorem~\ref{thm:restrict_symmetries_of_tree}. We then add the graph $H_T$ of the construction on top of $G(T,\pi)$ to obtain our final construction $\tilde{G}(T,\pi)$.

For now we are in the setting of Theorem \ref{thm:restrict_symmetries_of_tree}. We first describe how to construct $H_T$ from $T$.

\xparagraph{Symmetry Cycles.}
Consider a leaf orbit $\Omega$ in $T$.
Let~$p_1p_2\dots{}p_m = |\Omega|$~denote a prime factorization. 
We construct directed cycles of length $p_i$ for $i \in \{1, \dots{}, m\}$, such that we have one cycle for each prime $p_i$.
Cycles of the same length are ordered, which is expressed by giving them distinct colors. 

To model a directed edge, we employ two colored vertices. We add two special vertex colors $d_1, d_2$ for this purpose.
A symmetry cycle of size $n$ consists of $n$ base nodes and $2n$ edge nodes, of which $n$ are colored with $d_1$ while the other $n$ are colored with $d_2$. We define an arbitrary order on the base nodes $b_1,\dots{},b_n$, $d_1$-colored edge nodes $d_{1,1},\dots{},d_{1,n}$ and $d_2$-colored edge nodes $d_{2,1}, \dots{}, d_{2,n}$.
The cycle is then connected up by attaching $b_i$ to $d_{1,i}$, $d_{1,i}$ to $d_{2,i}$ and $d_{2,i}$ to $b_{i + 1}$ for all $i \in \{1, \dots{}, n\}$ (we set~$b_{n+1}=b_1$).

\xparagraph{Symmetry Coupling.}
The next step of the construction is to match leaf orbits with symmetry cycles (see Figure \ref{fig:symmetry_coupling}).
This naturally restricts the possible symmetries of leaf orbits but we can choose the connections in a consistent way that does not break up any orbits. 

The pairwise matching of leaf orbits is realized by coupling each orbit with the set of symmetry cycles. 
Thus, it is enough to describe a coupling between one leaf orbit $\Omega$ and the set of symmetry cycles.
To this end, we first introduce a new tree $T_\Omega$. 

Consider the common ancestor $a$ of $\Omega$ in $T$ that has least distance to $\Omega$. The root of $T_\Omega$ is $a$ and $T_\Omega$ contains exactly those $a$-to-leaf branches of $T$ that end in $\Omega$. Then $T_\Omega$ describes the group structure of the symmetries of $\Omega$ that correspond to automorphisms of $T$. Note that root-to-leaf branches of $T_\Omega$ can be permuted transitively.
In particular, the degree of $T_\Omega$ is uniform for each level. Then sibling classes on the same level have the same size and this size always divides $|\Omega|$. Note that the sibling class size may actually be $1$ for some levels. 

We modify $T_\Omega$ into another tree $T'_\Omega$ whose sibling class sizes are prime numbers. The first modification is to iteratively contract levels of $T_\Omega$ if the branching factor between them is $1$. This removes sibling classes of size $1$. Next consider the $i$-th level of $T_\Omega$ and assume the sibling class size on level $i$ is a compound number, say $s=rp$ for a prime $p$ and $r>1$. We add a new level between levels $i$ and~$i-1$ by partitioning each sibling class on level $i$ arbitrarily into $p$ classes of size $r$. We repeat the process exhaustively to obtain $T'_\Omega$.

Let $|\Omega|=p_1\cdots p_m$ be a prime factorization, then the multiset of branching factors in $T'_\Omega$ is given by $\{\!\!\{p_1,\dots,p_m\}\!\!\}$. Furthermore, each permutation of leaves corresponding to an automorphism of $T'_\Omega$ also defines an automorphism of $T$, since both types of modifications we described only restrict the possible symmetries but they do not break up orbits: contracting levels with branching factor $1$ does not interfere with automorphisms at all and when partitioning sibling classes into equally sized blocks, the action on each sibling class remains transitive. Therefore, the leaves of $T'_\Omega$ still form one orbit.

We use $T'_\Omega$ to define a coupling between leaves in $T$ and symmetry cycles.

For each sibling class $C$ on level $i$ of $T'_\Omega$, we connect the descendants of $C$ to a symmetry cycle (whose length is the sibling class size of level $i$), such that leaves are connected to the same vertex of the cycle if and only if they descend from the same node in $C$. In particular, sibling classes of leaves are connected to symmetry cycles via a perfect matching. We always use one fixed symmetry cycle for each level.
Recall that symmetry cycles of the same length are ordered. For all orbits, we always use the first cycle of length $p$ for the highest level with sibling class size $p$ and so on. This ensures that we do not introduce dependencies on rotations of different symmetry cycles (different orbits in $T$ might have ancestors in a common orbit).

\begin{proof}[Proof of Thm. \ref{thm:restrict_symmetries_of_tree}]
	We construct $H_T$ from $T$ by attaching $T'_\Omega$ to each leaf orbit $\Omega$ in $T$, such that
	we identify leaves of $T'_\Omega$ with nodes in $\Omega$. We choose a color that is not contained in $T$ to color inner vertices of $T'_\Omega$. Then we add symmetry cycles to $H_T$ (as a disjoint union) and connect the symmetry cycles with each $T'_\Omega$ as described in the construction above. Again, we use new colors for each symmetry cycle. Thereby we make sure that $\Aut(H_T)$ fixes the copy of $T$ as well as each symmetry cycle
	and each $T'_\Omega$ setwise. 
	
	By construction, $\Aut(T'_\Omega)$ acts on $\Omega$ as a transitive subgroup of $\Aut(T)|_{\Omega}$ (automorphisms restricted to $\Omega$).
	Consider the graph $H'(\Omega)$ induced by $H_T$ on $T'_\Omega$ and the set of symmetry cycles.	
	Let level $i$ of $T'_\Omega$ be connected to a symmetry cycle $C_{p_i}$.
	Observe that a rotation of $C_{p_i}$ induces a simultaneous cyclic permutation in all sibling classes on 		level $i$ and that in $H'(\Omega)$ different symmetry cycles can be rotated independently from each other.
	 Moreover, all automorphisms of $H'$ are induced by rotations of symmetry cycles
	 and since all sibling classes of $T'_\Omega$ can be permuted transitively, $\Aut(H'(\Omega))\leq \Aut(T)_{|\Omega}$ acts regularly on $\Omega$. 
	 
	Since we choose the order of symmetry cycles of the same length consistently for all orbits, we do not 			introduce dependencies between symmetry cycles, even in the full construction $H_T$. 
	This finally implies that the action of $\Aut(H_T)$ on $\Omega$ is permutation isomorphic to the action of $\Aut(H'(\Omega))$ on $\Omega$, in particular the action
	on the full set of leaves is semiregular.
\end{proof}

\xparagraph{Discretization of Orbits.}
To build $\tilde{G}(T,\pi)$, we now add the construction from Theorem $\ref{thm:restrict_symmetries_of_tree}$ to the selector tree in $G(T,\pi)$.
Observe that each leaf of $T'_\Omega$ is connected to exactly one vertex in each symmetry cycle.
That means that individualization of a leaf in $\tilde{G}(T,\pi)$ individualizes a node in each symmetry cycle and in turn, all symmetry cycles become discrete.
Moreover, since leaves that are not siblings have predecessors that are siblings in some higher level, for each pair of leaves there is one symmetry cycle such that the leaves are connected to different nodes of the cycle.
As a consequence, individualizing a leaf in $\tilde{G}(T,\pi)$ discretizes all symmetry cycles which then distinguishes all leaves from each other.
\begin{fact}\label{cor:discrete_sym_leaves}
	Individualization of a root-to-leaf path in $\tilde{G}(T,\pi)$ discretizes the set of leaves.
\end{fact}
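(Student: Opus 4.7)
The plan is to work entirely inside the subgraph $H_T$ added on top of the selector tree, showing that individualizing just the endpoint leaf of the root-to-leaf path already forces color refinement to discretize every leaf. The rest of the path only adds further individualizations, so it suffices to consider the final leaf~$\ell$.

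First, I would argue that every symmetry cycle becomes discrete. By the symmetry coupling construction, $\ell$ is connected to exactly one base node in every symmetry cycle; hence as soon as $\ell$ is individualized, one base node in each cycle is distinguished from the rest of its color class. Inside a single symmetry cycle, the base vertices alternate with paired $d_1$- and $d_2$-colored edge nodes, which encode a consistent cyclic direction. Starting from the individualized base node, color refinement then proceeds along this directed structure: the two neighbors of the distinguished base node split (one sees an individualized neighbor through a $d_2$-slot, the other through a $d_1$-slot), this splits the next base vertex, and iterating around the cycle refines each subsequent vertex into its own color class. Thus every symmetry cycle is fully discretized.

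Second, I would show that all leaves of $T$ are pairwise distinguished once every symmetry cycle is discrete. Leaves in different orbits of $T$ are already distinguished by $G(T,\pi)$ through the color-node construction, so I only need to consider two leaves $\ell,\ell'$ in a common orbit $\Omega$. By construction, $\ell$ and $\ell'$ are both leaves of the attached tree $T'_\Omega$, and since $\ell\neq\ell'$ their root-to-leaf branches in $T'_\Omega$ diverge at some level~$i$, where they lie in distinct sibling classes. The level-$i$ symmetry cycle is coupled so that descendants of different nodes in that sibling class are matched to different base vertices of the cycle. Because every base vertex now has its own color, $\ell$ and $\ell'$ end up with a different multiset of colored cycle-neighbors and are therefore distinguished by color refinement.

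The main obstacle I anticipate is the bookkeeping for the second step: one has to verify that the particular coupling guarantees that \emph{non}-sibling leaves at every level of $T'_\Omega$ really do attach to distinct base vertices of the corresponding cycle, and that the ordering convention for symmetry cycles of equal length across different orbits does not accidentally identify leaves. Both of these follow from the construction of $T'_\Omega$ (sibling classes are partitioned into prime-size blocks matched bijectively to cycle vertices) and from the fact that cycles are used in a fixed order per orbit, but the details need to be spelled out carefully; once that is done, the two steps above combine to yield discreteness of the leaf set.
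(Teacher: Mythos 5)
Your proposal is correct and follows essentially the same two-step argument the paper sketches in the paragraph immediately preceding the Fact: (1) the individualized leaf hits exactly one base vertex of every symmetry cycle, and the directed $d_1/d_2$ structure lets color refinement propagate a split around each cycle, discretizing them; (2) for any two distinct leaves of the same orbit, the level where their branches in $T'_\Omega$ first diverge yields a sibling class whose symmetry cycle couples them to different base vertices, so the now-discrete cycle separates them (and different-orbit leaves are already separated by the revealed color nodes). One small wording slip: at the divergence level the ancestors of $\ell$ and $\ell'$ lie in the \emph{same} sibling class but are distinct members of it, which is exactly what makes the coupling separate them; your phrase ``lie in distinct sibling classes'' should read ``descend from distinct members of a common sibling class,'' and your next sentence already captures the right idea, so this is cosmetic rather than a gap.
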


\xparagraph{Concealing Symmetry Couplings.}
We need to hide $H_T$ from color refinement until a leaf is individualized, or otherwise leaves would be distinguishable from internal nodes in the selector tree.
As before, we do so by employing concealed edges. In the construction of $H_T$, we replace all edges with true edge gadgets.
Then, to conceal the edges, all pairs $(n,v)$, where $v$ is contained in a symmetry cycle and $n$ is a node in the selector tree, which are not yet connected by a true edge gadget, are connected with a fake edge. 
The type of these edge gadgets is revealed upon activating a (unidirectional) broadcast gadget. For this we connect the inner nodes of the concealed edge gadgets to the output of all broadcast gadgets. 

\subsection{Correctness}\label{subsection:correctness}
In this section, we prove the correctness of our graph constructions $G(T,\pi)$ and $\tilde{G}(T,\pi)$.
We start by proving some more specific properties, which ultimately culminate in our main theorem.

Throughout the section, if $v$ is a node of the selector tree in $G(T,\pi)$, then $v_T$ denotes its corresponding node in $T$.

\begin{lemma} \label{lem:colorstable}
	The selector tree in $\tilde{G}(T,\pi)$ is stable under color refinement.
\end{lemma}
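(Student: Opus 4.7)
The plan is to show that the initial coloring of $\tilde{G}(T,\pi)$ — in which each selector tree node is colored by its level, each color node by its own color, and each gadget component by its standard color class — is a fixed point of color refinement. I would proceed in two stages. First, I would invoke the gadget-level stability results already established: concealed edge gadgets (with either $A_1$ or $A_2$ attached), $\AND_i$ gadgets, unidirectional and dead end gadgets, broadcast and receiver gadgets, and symmetry cycles are each stable under color refinement in isolation. Second, I would verify that the concealed-edge composition preserves stability globally.

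The key observation driving the composition step is that concealed edge gadgets of both flavors are indistinguishable to color refinement prior to any activation: the only distinguishing ingredients are the asymmetry gadgets $A_1$ and $A_2$, which are themselves indistinguishable under color refinement. Hence any two selector tree nodes at the same level $l$ see the same multiset of neighbor colors, namely concealed-edge connections to every selector tree node at levels $l-1$ and $l+1$, to every color node associated with level $l$, to every $\AND$ gadget at level $l$, to their own broadcast and receiver gadgets, and to the components of $H_T$. Analogous per-color-class checks handle the color nodes, the internal vertices of all gadgets, and the vertices of $H_T$, where in each case the construction is manifestly symmetric across its color classes.

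The main obstacle is accounting for the inherently asymmetric ingredients of the construction, namely the varying arities of $\AND_{s-1}$ gadgets across cells of different sizes at the same level and the trees $T'_\Omega$ together with the symmetry cycles in $H_T$. The $\AND$ issue is resolved by the explicit use of uniform fake-edge padding: every selector tree node at level $l$ is connected via a concealed edge gadget to every $\AND$ gadget at level $l$ (true edges for the gadget matching its own cell, fake edges to the rest), so the multiplicity of connections to each gadget color class is uniform across level $l$. For $H_T$, I would rely on Theorem~\ref{thm:restrict_symmetries_of_tree} together with the symmetric choice of new colors for each symmetry cycle and each $T'_\Omega$, and invoke the uniform concealment of all edges of $H_T$: by construction, every selector tree node at level $l$ has the same concealed-edge profile into each color class of $H_T$, so color refinement cannot split level classes using $H_T$ either.
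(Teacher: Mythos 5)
Your proposal is correct and takes essentially the same route as the paper: the argument rests on the concealing paradigm (true/fake edges look identical until inner vertices are split) plus the fake-edge padding that makes every selector tree node at a given level see the same multiset of concealed connections to the rest of the construction. The paper's own proof is much terser --- it simply states that selector tree nodes are colored by level, all their connections are concealed, and nodes on the same level have the same combined number of true/fake edges --- whereas you additionally spell out the per-gadget stability, the $\AND$-arity padding, and the uniform concealment of $H_T$; that is welcome detail but an elaboration, not a different argument.
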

\begin{proof}
	Initially, vertices in the selector tree are colored with their level.
	Recall that by our concealing paradigm, all connections of the selector tree are hidden from color refinement
	and nodes on the same level are connected to the same combined number of true or fake edges.
	This immediately implies the claim.
\end{proof}

\begin{lemma} \label{lem:asymdiscrete}
	Let $(T,\pi)$ be asymmetric and let $l$ be a leaf in the selector tree. If the concealed edges connecting $l$ to its corresponding $\AND$-gadgets have been revealed, $G(T,\pi)$ becomes discrete after individualizing $l$ and applying color refinement.
\end{lemma}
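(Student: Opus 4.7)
The plan is to trace the cascade of splits triggered by individualizing $l$, and then to invoke the asymmetry of $(T,\pi)$ to conclude that color refinement discretizes the selector tree together with all attached gadgets.

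First I would verify that the individualization output of $l$ becomes split. By hypothesis the concealed edges connecting $l$ to its $\AND_{s-1}$-gadgets are revealed. In each such gadget every input pair has one vertex attached to $l$ and the other attached to a sibling of $l$ in the current cell; once $l$ is individualized it is distinguishable from every sibling, so each input pair is split. The $\AND$-gadget lemma then splits the output, and this output gate is precisely the individualization output of $l$. Because $l$ corresponds to a leaf of $T$, its broadcast gadget is unidirectional, so activation of its input propagates to its output. This feeds the inputs of every receiver gadget in $G(T,\pi)$, and each receiver (itself unidirectional) splits its own output, which is the individualization output of its associated node. Hence every individualization output in $G(T,\pi)$ is split, which in turn activates the unidirectional gadgets used for edge revealing, so that every concealed edge in the construction becomes revealed: edges between consecutive levels of the selector tree, edges to color nodes, and edges to all $\AND$-gadgets.

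At this point the selector tree forms an induced copy of $T$ in which each node is adjacent to its unique color node, and every color node is already a singleton. Now I would invoke the asymmetry of $(T,\pi)$: since all leaves have pairwise distinct colors, the colored tree has trivial colored automorphism group, and color refinement discretizes any colored tree with trivial colored automorphism group. Intuitively, distinctness propagates inward from the uniquely colored leaves so that each internal node is identified by the multiset of its children's refined colors together with its own $\pi$-color (encoded via its unique color-node neighbor). This discretizes the selector tree inside $G(T,\pi)$.

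Finally, the remaining gadgets inherit discretization by their respective lemmas: $\AND$-gadgets discretize once all inputs are split, unidirectional and dead end gadgets discretize once both gates are split, and the asymmetry gadgets of the concealed edge constructions discretize once their single gate is activated by the revealing mechanism. The main obstacle is the discretization of the selector tree itself, since this is the only step that genuinely leverages asymmetry; one has to argue that color refinement really exploits the tree structure to reach a discrete coloring rather than stalling on a strictly coarser equitable partition, as can occur for general graphs. For colored trees with trivial automorphism group this is a standard consequence of the fact that color refinement is as powerful as 1-WL, which is known to identify trees up to colored isomorphism.
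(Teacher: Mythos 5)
Your proof follows the same cascade as the paper's: individualizing $l$ splits its individualization output (because the already-revealed $\AND$-inputs all split), the leaf's unidirectional broadcast gadget propagates the split to all receivers and hence to all individualization outputs, every concealed edge in the construction is revealed, the selector tree then discretizes by asymmetry, and the gadgets discretize in turn. The one place where you diverge is in justifying discretization of the selector tree once its edges are visible. The paper argues this step explicitly: the distinct color nodes attached via true edges discretize the leaf level, and the revealed tree edges propagate discreteness upward because distinct internal nodes have disjoint sets of children. You instead invoke a general theorem, but note that ``1-WL identifies trees up to colored isomorphism'' (distinguishing non-isomorphic colored trees) is a weaker cousin of the fact you actually need, namely that the stable 1-WL coloring of a colored tree equals its orbit partition (hence discrete when the tree is asymmetric); both are standard but not the same statement, and the implication between them is not immediate. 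There is also a small transfer issue you gesture at but do not close: the abstract theorem concerns color refinement on an isolated tree, whereas the selector tree sits inside $G(T,\pi)$ with many extra attachments, so one cannot cite the isolated-tree result verbatim. The paper's explicit bottom-up argument sidesteps this because disjointness of children sets distinguishes parents in any supergraph. Your own informal explanation (``distinctness propagates inward from the uniquely colored leaves'') is in fact exactly the paper's argument, so the substance is correct even though the theorem you name is slightly off target.
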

\begin{proof}
  Since we assume concealed edges to the respective $\AND$-gadgets have been revealed, individualizing $l$, by construction, splits the vertices of its corresponding individualization output.
  Since $l$ is a leaf in the selector tree, the connected broadcast gadget is a unidirectional gadget. The gadget is connected to all inputs of receiver gadgets in the graph.
  Hence, the split is propagated and all individualization outputs in the graph are split.
  
  Now, the individualization outputs in turn reveal all edges in the selector tree, as well as concealed edges to color nodes.
  Since $T$ is asymmetric, the connections to the color nodes in turn discretize nodes in the selector tree that correspond to leaves of $T$.

  Since we also reveal all edges of the selector tree itself, all nodes in the selector tree subsequently become discrete.
  This fully discretizes the attached $\AND_i$ gadgets as well as their connected individualization outputs.
  Note that at this point, for any broadcast gadget, even if they are a dead end gadget, all inputs and outputs are discrete, meaning the gadgets themselves become discrete as well.

  Since all nodes belonging to sets connected by concealed edges are now discrete, and all edges have been revealed, the concealed edge gadgets now become fully discrete as well.
  
  This in turn covers all of the constructions in $G(T,\pi)$.
\end{proof}

\begin{lemma}\label{lem:cell_choices}
	Consecutive choices of the cell selector on $G(T,\pi)$ correspond to sibling classes along paths of $T$.
\end{lemma}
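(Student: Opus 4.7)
The plan is to proceed by induction on the number of cell-selector choices made so far. I will show that after $k$ individualizations corresponding to a root-to-node path $v_0, v_1, \ldots, v_k$ in $T$ (with $v_0$ the root), the next cell chosen by $S(T,\pi)$ is precisely the set of children of $v_k$ in the selector tree, which by construction corresponds exactly to the sibling class of children of $v_k$ in $T$ (via the true edges of the selector tree).

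For the base case, recall that by construction the concealed edges between the root and level $1$ of the selector tree were explicitly removed so that level $1$ is initially revealed. By Lemma~\ref{lem:colorstable} the selector tree is otherwise stable under color refinement, so the level-$1$ nodes form a single cell distinguishable from deeper levels by their level color, and they are mutually indistinguishable since all other incident structure (color edges, $\AND$-gadgets, deeper selector-tree connections) is still concealed. Hence $S(T,\pi)$ must pick the sibling class of children of the root. For the inductive step, suppose $v_k$ has just been individualized. The individualization output attached to $v_k$ becomes split by the $\AND_{s-1}$-construction, which activates $v_k$'s unidirectional gadget and thus reveals $v_k$'s concealed edges to level $k{+}1$ of the selector tree and to its incident $\AND_{s-1}$-gadgets. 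Color refinement now distinguishes the true children of $v_k$ from the remaining nodes of level $k{+}1$, producing a new cell consisting of exactly those children. By definition, the cell selector picks this cell.

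The main obstacle is to verify two things at this stage: first, that the children of $v_k$ remain mutually indistinguishable (i.e., form a single cell, not finer), and second, that this cell is non-singleton so the choice is valid. For the first point, only the unidirectional gadget attached to $v_k$ has been activated, so only $v_k$'s outgoing concealed edges are revealed; the children's own concealed edges to their descendants, to color nodes, and to $\AND$-gadgets remain hidden, and the concealing paradigm guarantees that every other structural difference between siblings is still invisible to color refinement. For the second point, the current coloring is not yet discrete (otherwise the IR-process would have terminated and the selector would not be invoked), so in particular $v_k$ is not a leaf in $T$; by Lemma~\ref{lem:asymdiscrete}, had $v_k$ been a leaf, the graph would have discretized. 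Hence $v_k$ has at least two children in $T$ (irreducibility of $T$), so the chosen cell has size at least two, the selector's choice is a legitimate cell, and the induction goes through, establishing the correspondence between consecutive selector choices and sibling classes along paths of $T$.
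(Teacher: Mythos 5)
Your proof follows essentially the same inductive structure as the paper's: the base case uses Lemma~\ref{lem:colorstable} and the fact that the first level is initially revealed, and the inductive step traces the revelation of level-to-level edges through the individualization-output/unidirectional-gadget mechanism, concluding with an argument that the children of $v_k$ remain mutually indistinguishable. The explicit argument that the chosen cell is non-singleton (via Lemma~\ref{lem:asymdiscrete} and irreducibility of $T$) is not spelled out in the paper but is correct and a reasonable addition.

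There is one concrete imprecision that creates a small gap in your indistinguishability argument. You assert that after individualizing $v_k$, ``the children's own concealed edges to their descendants, to color nodes, and to $\AND$-gadgets remain hidden.'' This is not what the construction does for the $\AND$-gadget connections. In the ``Revealing Cells and Colors'' paragraph, individualizing a node at level $i$ is explicitly arranged to also reveal the edges from level $i+1$ nodes to their $\AND_{s-1}$ gadgets --- this is necessary so that when a child at level $i+1$ is subsequently individualized, its individualization output can actually split. So the $\AND$-gadget connections of the children \emph{are} revealed at the stage you are analyzing. The correct reason this does not break indistinguishability among siblings is not that these edges are hidden, but that these revealed connections are \emph{uniform} across the siblings (by construction, each sibling has an identical pattern of true and fake connections into the $(s-1)!$ many $\AND_{s-1}$ gadgets). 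The paper makes exactly this point: ``the only relevant connections children of $v$ have are connections to the layer of $v$ and to inputs of $\AND_i$ gadgets. Both of these connections are uniform by construction.'' Your ``concealing paradigm guarantees every other structural difference is still invisible'' does not cover this case, since for these particular edges the concealment has already been lifted; you need the uniformity argument instead.

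A smaller related point: the paper additionally argues that the split of $v_k$'s individualization output does not propagate to other individualization outputs (receiver gadget in the wrong direction, broadcast gadget is a dead end since $v_k$ is an inner node, and the remaining gadgets are connected uniformly to $v_k$). Your ``only the unidirectional gadget attached to $v_k$ has been activated'' skips this check. It is worth making explicit, since a stray propagation would prematurely reveal structure at other levels and could split a cell that should remain whole.
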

\begin{proof}
  Initially, $v$ is colored with the level of $v_T$. In particular, nodes corresponding to the first level of $T$ form a color class in $G(T,\pi)$ that is stable under color refinement (see Lemma~\ref{lem:colorstable}). 
  Hence, it is by definition the first class the cell selector chooses.
  
  In case $v_T$ is a leaf of the tree, the graph becomes discrete. This implies there is no subsequently selected cell. 
  Hence, we can assume $v_T$ is an inner node of the tree.

	By definition, whenever a node $v$ is individualized, the next cell chosen by the cell selector corresponds 
  to children of $v_T$. Recall that $v$ is connected to other nodes of the selector tree via true edge 			gadgets if and only if they correspond to children of $v_T$ and $v$ is connected to all other nodes of the selector tree via fake edge gadgets. By construction, individualizing $v$ activates the individualization output of $v$.

  Since $v$ is an inner node, this split does not propagate into other gadgets:
  the receiver gadget is a unidirectional gadget in the wrong direction. This gadget therefore does not propagate the split.
  Furthermore, the broadcast gadget is a dead end gadget.
  Note that the $\AND$, receiver and broadcast gadgets attached to $v$ can be distinguished from the other gadgets of their respective type.
  However, none of these splits propagates further since all the other gadgets are connected uniformly to $v$ and the gadgets of $v$.   

  The individualization output does however reveal the edges in the selector tree that connect $v$ to its children: after individualizing $v$, fake edge gadgets attached to $v$ are distinguished from true edge gadgets attached to $v$ and so the next cell can be chosen among children of $v$. 
  
  It remains to argue that at this point, children of $v$ are indistinguishable in $G(T,\pi)$. 
  We may inductively assume that edge types between higher levels have not been revealed yet. 
  Thus, since edge types are initially indistinguishable by color refinement, the only relevant connections children of $v$ have are connections to the layer of $v$ and to inputs of $\AND_i$ gadgets. 
	Both of these connections are uniform by construction.
\end{proof}

\begin{lemma}\label{lem:diff_colors_diff_quot_graphs}
	Consider two nodes $v,w$ in the selector tree of $G(T,\pi)$. If
	$\pi(v_T)\neq\pi(w_T)$ then individualization of $v$ and $w$, respectively, produces different quotient graphs.
\end{lemma}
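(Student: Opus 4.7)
The plan is to exhibit an explicit discrepancy between $Q(G,\Refx(G,\pi,v))$ and $Q(G,\Refx(G,\pi,w))$ that stems directly from the color-node gadgets. First recall that the construction places, for every level $\ell$ and every color $c$ of $\pi$ appearing at level $\ell$, a dedicated color node with vertex color $c$, and then connects every selector-tree node at level $\ell$ to all such color nodes via concealed edge gadgets; the edge from $v$ to the color node labeled $\pi(v_T)$ is a true edge and every other such edge from $v$ is a fake edge. The analogous statement holds for $w$.

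Second, individualizing $v$ activates $v$'s individualization output, which, by the gadget behavior established earlier, propagates through the unidirectional gadget attached to $v$ into the inner-vertex pairs of the concealed edges joining $v$ to the color nodes of its level. Once these pairs are split, color refinement distinguishes the asymmetry-gadget type of each concealed edge, so the unique true color-edge incident to $v$ becomes distinguishable from the fake ones. In particular, in $\Refx(G,\pi,v)$ the unique color node truly connected to $v$ occupies a cell of its own, carrying vertex color $\pi(v_T)$. The analogous argument applied to $w$ yields a cell of vertex color $\pi(w_T)$ truly connected to $w$.

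Third, in $Q(G,\Refx(G,\pi,v))$ the singleton cell of $v$ (bearing the first artificial individualization color) displays a specific true-edge adjacency pattern, mediated by a distinguished inner-vertex cell, to the cell with vertex color $\pi(v_T)$. In $Q(G,\Refx(G,\pi,w))$ the same type of distinguished pattern instead connects the analogous individualization singleton to a cell with vertex color $\pi(w_T)$. Because vertex colors label vertices of the quotient graph and are preserved by refinement, and because $\pi(v_T)\neq\pi(w_T)$, the two quotient graphs cannot be equal.

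The main obstacle is formalizing what "distinguished true-edge adjacency pattern" means at the level of the quotient graph, which amounts to routine bookkeeping with the concealed-edge and asymmetry-gadget behavior already proved. Once this is granted, the uniqueness of the true color-node neighbor yields the required discrepancy immediately, and there is no need to analyze any further part of the refinement or other gadgets of the construction.
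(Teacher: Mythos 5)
Your proposal is correct and follows essentially the same route as the paper's own proof: individualization activates the individualization output, which reveals the concealed edges from $v$ (resp.\ $w$) to its color node, and since the color nodes carry the fixed, distinct colors $\pi(v_T)\neq\pi(w_T)$, the resulting quotient graphs differ in how the individualized singleton connects (via a true-edge gadget) to those color-node cells. You simply spell out in more detail what the paper states tersely, and the observation at the end --- that one witnessed discrepancy suffices and no further analysis of the rest of the refinement is needed --- is exactly the right justification for why this short argument closes the lemma.
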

\begin{proof}
	Individualizing $v$ or $w$ also activates their corresponding individualization output, which in turn reveals the concealed edges which connect $v$ or $w$ to the color nodes.
	In particular, the corresponding quotient graphs already differ with respect to these connections, since $\pi(v) \neq \pi(w)$.
\end{proof}

\begin{lemma}\label{lem:orbits_of_selector_tree}
	Consider nodes $v,w$ in the selector tree of $G(T,\pi)$ such that $v_T$ and $w_T$ are leaves of $T$.
	If $\pi(v_T)=\pi(w_t)$ then $v$ and $w$ can be mapped to each other via automorphisms of  $G(T,\pi)$.
	The same holds for  $\tilde{G}(T,\pi)$.
\end{lemma}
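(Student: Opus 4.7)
The plan is to separately handle the two graphs, since the asymmetric case for $G(T,\pi)$ is essentially trivial while the symmetric case for $\tilde{G}(T,\pi)$ is where the substance lies. For $G(T,\pi)$ recall that the construction was only performed under the assumption that $(T,\pi)$ is asymmetric, meaning all leaves of $T$ have pairwise distinct colors. Hence $\pi(v_T)=\pi(w_T)$ forces $v_T=w_T$ and therefore $v=w$, so the identity automorphism of $G(T,\pi)$ already witnesses the claim.

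For $\tilde{G}(T,\pi)$ I would proceed as follows. First, since $v_T$ and $w_T$ are leaves of $T$ of the same color, Lemma \ref{lem:leaf_color_complete_inv} furnishes an automorphism $\varphi\in\Aut(T,\pi)$ with $\varphi(v_T)=w_T$. By Theorem \ref{thm:restrict_symmetries_of_tree}, the action of $\Aut(H_T)$ on $T$ induces exactly the orbits of $\Aut(T,\pi)$, so there is $\psi_0\in\Aut(H_T)$ realizing $\varphi$ on the copy of $T$ sitting inside $H_T$. The remaining task is to extend $\psi_0$ to a full automorphism $\Psi$ of $\tilde{G}(T,\pi)$ that still maps $v$ to $w$.

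The key observation driving the extension is that every gadget and every concealed edge added on top of the selector tree is attached equivariantly with respect to data that $\psi_0$ already preserves, namely (i) the level of a selector tree node, (ii) its parent-child relation in $T$, and (iii) its color under $\pi$. Concretely, I would walk through the construction gadget by gadget: the concealed true/fake edges between consecutive levels are permuted correctly because $\psi_0$ preserves the child relation, the concealed edges to color nodes are permuted correctly because $\psi_0$ preserves $\pi$, the $(s-1)!$ copies of $\AND_{s-1}$-gadgets attached to each node $n$ are mapped bijectively onto the copies attached to $\psi_0(n)$ using the canonical identification between isomorphic $\AND$-gadgets, and broadcast/receiver gadgets are matched up with their counterparts at $\psi_0(n)$ with their internal types (unidirectional vs.\ dead end) preserved since leaf-ness of $n_T$ is an $\Aut(T,\pi)$-invariant. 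Finally, the additional concealed edges connecting the selector tree to $H_T$ in $\tilde{G}(T,\pi)$ are permuted by combining $\psi_0$ on the selector-tree side with $\psi_0|_{H_T}$ on the symmetry-cycle/$T'_\Omega$ side, so true/fake types are preserved. Composing these local identifications yields the desired global automorphism $\Psi\in\Aut(\tilde{G}(T,\pi))$ with $\Psi(v)=w$.

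The main obstacle is the bookkeeping for equivariance: one has to check that none of the auxiliary choices silently breaks a symmetry of $T$. The two places where this is delicate are the input ordering of $\AND_{s-1}$-gadgets (resolved by the construction which already adds \emph{all} $(s-1)!$ orderings, so the attached family is itself symmetric under permutations of the cell) and the coupling to symmetry cycles in $H_T$ (resolved precisely by Theorem \ref{thm:restrict_symmetries_of_tree} together with the per-level consistent choice of symmetry cycle made in the proof of that theorem). Once these are in place, the extension argument is a routine verification that $\Psi$ respects adjacencies and the coloring across every gadget type.
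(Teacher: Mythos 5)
Your handling of $G(T,\pi)$ rests on a false premise. You write that ``the construction was only performed under the assumption that $(T,\pi)$ is asymmetric,'' but the beginning of the ``Generating symmetries'' subsection explicitly states that $G(T,\pi)$ is also built for trees $(T,\pi)$ that are not asymmetric; in that case the selector tree is only discretized up to orbits, which is precisely why $\tilde{G}(T,\pi)$ is introduced in addition. So the $G(T,\pi)$ part of the lemma is not trivially $v=w$: it is a genuine claim about a graph whose tree has symmetries, and your proof simply omits it. This matters because Lemma~\ref{lem:same_colors_same_quot_graphs} invokes the $G(T,\pi)$ case, which in turn is used in the proof of the main theorem for general $(T,\pi)$.

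There is a structural irony here: the paper's own proof does essentially the extension argument you sketch, but it does it for $G(T,\pi)$ (automorphisms of $(T,\pi)$ lift to the concealed selector tree plus color nodes, then to all of $G(T,\pi)$ by the uniform attachment of the individualization-output, $\AND$, broadcast and receiver gadgets), and only afterwards transfers the leaf orbits to $\tilde{G}(T,\pi)$ via Theorem~\ref{thm:restrict_symmetries_of_tree}. You invert this order, going straight to $\tilde{G}(T,\pi)$ from $\Aut(H_T)$ and treating $G(T,\pi)$ as vacuous. If you instead carried out your gadget-by-gadget extension for $G(T,\pi)$ (which is exactly the same verification you describe, minus the symmetry-cycle bookkeeping), and then appealed to Theorem~\ref{thm:restrict_symmetries_of_tree} for the orbit transfer, you would recover the paper's argument. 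As written, the $G(T,\pi)$ case is a genuine gap. A minor additional imprecision: Theorem~\ref{thm:restrict_symmetries_of_tree} only guarantees that $\Aut(H_T)$ induces the same \emph{orbits} on $T$; it does not give you a $\psi_0$ that ``realizes'' your chosen $\varphi$ on $T$, only some $\psi_0$ with $\psi_0(v_T)=w_T$, which suffices but should be stated as such.
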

\begin{proof}
	Recall that Lemma~\ref{lem:leaf_color_complete_inv} implies that the equally colored leaves $v_T$ and $w_T$ lie in the same orbit of $T$. The selector tree without connections to individualization outputs or symmetry coupling is just
	a concealed copy of $T$, where edges and non-edges were replaced by true edge gadgets and fake edge gadgets, respectively and colors were replaced by true/fake connections to color nodes. Thus automorphisms of $(T,\pi)$ are in one-to-one correspondence with automorphisms of the subgraph induced on the isolated selector tree together with color nodes. 
	By construction, two nodes in a common cell are connected uniformly to individualization outputs
	belonging to their cell or their common parent cell. Thus, all automorphisms of the selector tree induce 
	automorphisms of $G(T,\pi)$ by permuting individualization outputs (and the corresponding gadgets) accordingly. 
	Finally, from Theorem~\ref{thm:restrict_symmetries_of_tree} we obtain that the leaf orbits of $G(T,\pi)$ are the same as the leaf orbits of $\tilde{G}(T,\pi)$.
\end{proof}

\begin{lemma}\label{lem:same_colors_same_quot_graphs}
	Consider two nodes $v,w$ in the selector tree of $G(T,\pi)$. If
	$\pi(v_T)=\pi(w_T)$ then individualizing nodes along paths to $v$ and $w$, respectively, produces the same sequence of quotient graphs.
\end{lemma}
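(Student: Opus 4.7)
The plan is to produce a single automorphism $\hat\varphi\in\Aut(G(T,\pi))$ that carries the root-to-$v$ path of the selector tree pointwise onto the root-to-$w$ path; the claim then follows at once from the isomorphism invariance of $\Refx$ and of the quotient graph.

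To obtain $\hat\varphi$, I would first upgrade the hypothesis $\pi(v_T)=\pi(w_T)$ to matching colors on every corresponding level of the two root-to-leaf paths, using the contrapositive of Lemma~\ref{lem:quot_become_finer}(2): if some pair of ancestors on a common level were distinguishable, so would be the endpoints $v_T$ and $w_T$. Hence every level along the two paths agrees in $\pi$.

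Next, I would extend Lemma~\ref{lem:leaf_color_complete_inv} from leaves to arbitrary nodes of $T$: whenever $\pi(x)=\pi(y)$, some $\varphi_T\in\Aut(T,\pi)$ sends $x$ to $y$. I would prove this bottom-up. For leaves this is Lemma~\ref{lem:leaf_color_complete_inv} itself; for an internal pair, the isomorphism invariance of $\Refx$ together with the necessary conditions from Section~\ref{sec:necessary_conditions} forces the children of $x$ and $y$ to carry the same multiset of colors, and the inductive hypothesis supplies subtree-automorphisms that can be stitched together with any color-respecting sibling matching between the children. Applied to the level-$l$ endpoints already fixed by the first step, this yields a $\varphi_T$ mapping the path to $v_T$ onto the path to $w_T$. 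I would then lift $\varphi_T$ to $\hat\varphi\in\Aut(G(T,\pi))$ by the same argument used in the proof of Lemma~\ref{lem:orbits_of_selector_tree}: every node in a common cell of the selector tree is attached symmetrically to the surrounding machinery (individualization outputs, $\AND_i$-gadgets, broadcast and receiver gadgets, color nodes, and the concealed edges across all levels), so any colored automorphism of $T$ extends canonically. Because $\hat\varphi$ maps $(v_1,\dots,v_i)$ to $(w_1,\dots,w_i)$ for every $i\leq l$, isomorphism invariance of refinement and quotient graphs yields equality of the two sequences.

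The main obstacle is the extension of Lemma~\ref{lem:leaf_color_complete_inv} to internal nodes. It is conceptually straightforward but technically delicate: one has to thread the necessary conditions, in particular the binary-cell restrictions of Lemma~\ref{lem:forbiddenbinary}, and the recursive construction of the subtree automorphism carefully, ensuring that the sibling matching chosen between the children of $x$ and $y$ is compatible with the automorphisms produced deeper in the recursion. Once that upgrade is in place, the remainder of the proof is a routine transport-by-automorphism argument.
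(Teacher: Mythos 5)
Your plan reduces the whole lemma to a transport-by-automorphism argument: construct $\hat\varphi\in\Aut(G(T,\pi))$ carrying the root-to-$v$ path onto the root-to-$w$ path and invoke isomorphism invariance. For leaves this is sound and is essentially what the paper does via Lemma~\ref{lem:orbits_of_selector_tree}. But the internal-node case has a genuine gap: the proposed extension of Lemma~\ref{lem:leaf_color_complete_inv} to arbitrary nodes of $T$ is false. Same-colored \emph{internal} nodes of a tree satisfying the necessary conditions need not lie in a common orbit of $\Aut(T,\pi)$, and nothing in Section~\ref{sec:necessary_conditions} forces this. Specifically, the step of your induction that says the necessary conditions ``force the children of $x$ and $y$ to carry the same multiset of colors'' does not hold: Lemma~\ref{lem:cell_size_and_isotype_quotient} only gives the same \emph{number} of children, and Lemma~\ref{lem:forbiddenbinary} constrains the color multisets only when a node has exactly two children. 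For cells of size at least three, two same-colored internal nodes of $T$ can perfectly well have children with different color multisets, and then no $\varphi_T\in\Aut(T,\pi)$ maps one to the other. Since any automorphism of $G(T,\pi)$ preserves the true/fake distinction of the concealed-edge gadgets and therefore induces an automorphism of $(T,\pi)$ on the selector tree, the lift $\hat\varphi$ also cannot exist in that situation, and the argument breaks down precisely where you flag it as ``technically delicate.''

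What actually makes the statement true for inner nodes is not symmetry but \emph{information hiding}: at the moment $v$ (resp.\ $w$) on level $l$ is individualized, the construction $G(T,\pi)$ has not yet revealed anything that depends on the subtrees below level $l$ --- in particular, the colors of the children and their onward edges are still concealed. The paper's proof for inner nodes therefore argues directly that $v$ and $w$ are connected uniformly to every part of the construction that has been revealed so far (same color node, same number of children on level $l+1$, same ancestor colors and arities by Lemma~\ref{lem:quot_become_finer}), hence the quotient graphs agree; it only falls back to the orbit argument for the leaf case. Your proof would need to be restructured along these lines for inner nodes, since the single-automorphism route is not available there.
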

\begin{proof}
	First recall that due to Lemma~\ref{lem:quot_become_finer}, color classes in $T$ are contained within single layers.
  This implies that $v$ and $w$ belong to the same level $l$ of the selector tree and in particular they are connected to the inputs of $\AND_i$ gadgets uniformly.
  We make a case distinction on whether $v_T$ and $w_T$ are leaves or not. 

  Assume $v_T$ and $w_T$ are inner nodes of $T$.
  Individualizing $v$ or $w$ reveals the concealed edges connecting them to color nodes. 
  However, by assumption, they are connected to the same color node.

  Furthermore, individualizing $v$ or $w$ reveals the concealed edge gadgets connecting level $l$ to $l + 1$ in the selector tree. By Lemma~\ref{lem:quot_become_finer}, $v$ and $w$ have the same number of children on level $l + 1$. 
  Furthermore, the concealed edges connecting the children to other parts of the graph are not revealed. In particular their color has not been revealed. Hence, they are still indistinguishable.
  
  Also, due to Lemma~\ref{lem:quot_become_finer}, nodes of the same color in $T$ have predecessor nodes that are of the same color level-wise and have the same number of children. 
  In case that $v_T$ is not a leaf, the latter implies that $v$ and $w$ are uniformly connected in all steps of the construction. 
  Since edges have only been revealed up to the level of $v$ and $w$, this shows equality of quotient graphs.
  
  If $v_T$ and $w_T$ are leaves, we can apply Lemma~\ref{lem:orbits_of_selector_tree}. Note that actually the complete root-to-leaf paths for $v_T$ and $w_T$ are in the same orbit and thus individualizations along both paths produce isomorphic quotient graphs by the isomorphism invariance of color refinement.
\end{proof}

\begin{lemma}\label{lem:sym_discrete}
  Let $v$ correspond to a node in the selector tree that belongs to a leaf of $T$.
  If the concealed edges connecting $v$ to its corresponding $\AND$-gadgets have been revealed, $\tilde{G}(T,\pi)$ becomes discrete after individualizing a root-to-$v$ path and applying color refinement.
\end{lemma}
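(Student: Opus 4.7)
My plan is to adapt the proof of Lemma~\ref{lem:asymdiscrete} to the symmetric setting, where the additional $H_T$ construction must also be revealed and its discretization of leaves exploited. First I would observe that the individualizations along the root-to-$v$ path behave exactly as in the proof of Lemma~\ref{lem:cell_choices}: each non-final step individualizes an inner ancestor of $v_T$, whose broadcast gadget is a dead end and thus triggers no global cascade, but which does reveal the edges to the next layer of the selector tree and to the corresponding color nodes, as required for the subsequent cell selections.

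The actual cascade is triggered by the final individualization, that of $v$. Since the concealed edges from $v$ to its AND-gadgets have been revealed by hypothesis, its individualization output splits. The broadcast gadget at $v$ is unidirectional because $v_T$ is a leaf of $T$, so the split propagates through every receiver gadget and all individualization outputs in $\tilde{G}(T,\pi)$ activate. This in turn reveals every remaining concealed edge, including the level-to-level edges of the selector tree, the edges to all color nodes, and, crucially, the concealed edges realizing $H_T$ between the selector tree and the symmetry cycles.

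With $H_T$ revealed, the coloring on the selector tree together with the symmetry cycles agrees with the unconcealed construction from the proof of Theorem~\ref{thm:restrict_symmetries_of_tree}, so I can invoke Fact~\ref{cor:discrete_sym_leaves} to conclude that all leaves of the selector tree become discrete. Discreteness then propagates up the selector tree along the now-visible true edges: by induction on levels from the bottom, each inner node is uniquely identified by the multiset of its children's colors, since the children are already discretely coloured. Once the selector tree is discrete, the AND-, broadcast-, receiver- and concealed edge gadgets, the color nodes, and the symmetry cycles all become discrete by the same bookkeeping as in the final paragraphs of Lemma~\ref{lem:asymdiscrete}.

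The hard part will be the application of Fact~\ref{cor:discrete_sym_leaves}, since it is stated for the unconcealed $H_T$ and I need to check that inserting the concealing layer does not spoil its conclusion. The key observation I would rely on is that after $v$ is individualized, the broadcast cascade converts every true/fake edge of $H_T$ into a color-refinement-visible edge, so the incidence pattern between leaves of the selector tree and the symmetry cycles actually seen by color refinement is precisely the one Fact~\ref{cor:discrete_sym_leaves} assumes.
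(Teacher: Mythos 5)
Your proposal is correct and follows essentially the same route as the paper, just with the cascade argument spelled out in more detail: the paper's proof also appeals to Lemma~\ref{lem:asymdiscrete} for the revelation cascade and to Fact~\ref{cor:discrete_sym_leaves} for the discretization of leaf orbits, and it likewise addresses the concealment issue by noting that after the broadcast all edge types are revealed, so true/fake gadgets can be read as edges/non-edges. One small inaccuracy: Fact~\ref{cor:discrete_sym_leaves} is stated for $\tilde{G}(T,\pi)$ itself, not for the unconcealed $H_T$, but your underlying concern is legitimate since the paragraph justifying the Fact reasons as if the symmetry-cycle incidences are visible to color refinement, and your resolution (the broadcast cascade makes them visible before the Fact is invoked) is exactly the paper's.
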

\begin{proof}
	Consider a node $l$ in the selector tree for which $l_T$ is a leaf of $T$. Recall that $\tilde{G}(T,\pi)$
	is just $G(T,\pi)$ extended by symmetry cycles and symmetry coupling. In particular, as in the asymmetric case, individualizing $l$ will reveal the colors of nodes in the selector tree (see the proof of Lemma~\ref{lem:asymdiscrete}). In particular, since leaf colors correspond to orbits, color refinement partitions the leaves into their orbits. Furthermore, all edge types are revealed at this point and thus, form a combinatorial perspective, we may treat true edge gadgets as edges and fake edge gadgets as non-edges.
	
	By Fact~\ref{cor:discrete_sym_leaves}, individualization of a path to $v$ induces the complete discretization of the set of leaf nodes in the selector tree and, as in the asymmetric case, this discretizes the whole construction.
\end{proof}

We are now ready to prove our main theorem:
\begin{theorem}\label{thm:sufficient:restated} Let $(T, \pi)$ be a colored tree that fulfills the necessary conditions. Then, $\Gamma_{S(T)}(\tilde{G}(T))$ is equal to $(T, \pi)$ (up to renaming colors).  
\end{theorem}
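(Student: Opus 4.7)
The plan is to prove the theorem by induction on the level of the tree, establishing a color- and structure-preserving bijection $\Phi$ between nodes of $T$ and nodes of $\Gamma_{S(T)}(\tilde{G}(T,\pi))$ level by level. All the heavy lifting has already been done in the preceding lemmas; the main task is to stitch them together cleanly.

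First, I would set up the base case. The root of the IR-tree is the empty sequence, which we identify with the root of $T$. By Lemma~\ref{lem:colorstable} the selector tree is stable under color refinement, so applying $\Refx$ on the empty sequence leaves the level-$1$ nodes as a single cell; by construction this is the cell chosen by $S(T,\pi)$, and by Lemma~\ref{lem:cell_choices} its elements correspond exactly to the children of the root of $T$. This defines $\Phi$ on the first two levels of the tree.

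For the inductive step, suppose $\Phi$ has been defined as a bijection between nodes of $T$ up to level~$\ell$ and nodes of the IR-tree $\Gamma_{S(T)}(\tilde{G}(T,\pi))$ up to depth~$\ell$, preserving the parent-child relation. Consider a node $\nu$ at depth~$\ell$ whose preimage $\Phi^{-1}(\nu)=v_T$ is an inner node of $T$. Lemma~\ref{lem:cell_choices} applied along the root-to-$\nu$ path tells us that after individualizing $\nu$ the cell selector chooses precisely the set of nodes in $\tilde{G}(T,\pi)$ corresponding to the children of $v_T$, so the children of $\nu$ in the IR-tree are in bijection with the children of $v_T$ in $T$, extending $\Phi$ to depth $\ell+1$. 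If instead $v_T$ is a leaf of $T$, then Lemma~\ref{lem:sym_discrete} shows that individualizing the root-to-$\nu$ path discretizes $\tilde{G}(T,\pi)$, so $\nu$ is a leaf of the IR-tree as well; this shows that leaves correspond to leaves on both sides, and the induction continues.

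Next I would verify that $\Phi$ preserves colors (quotient graphs). Combining Lemmas~\ref{lem:diff_colors_diff_quot_graphs} and~\ref{lem:same_colors_same_quot_graphs}, two nodes $v,w$ in the selector tree satisfy $\pi(v_T)=\pi(w_T)$ if and only if individualizing the corresponding root-to-$v$ and root-to-$w$ paths produces the same sequence of quotient graphs. Since the node invariant used on the IR-tree is the quotient graph at the stable coloring, this is exactly the statement that $\Phi$ identifies two $T$-nodes of the same $\pi$-color with two IR-tree nodes of the same invariant-color, and distinct colors with distinct invariants. Up to renaming colors, this yields the color-preserving identification.

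The main obstacle is the inductive step for inner nodes: we need that when the IR-process reaches level~$\ell+1$, no unintended distinctions have leaked in from deeper parts of the construction (symmetry cycles, $\AND$-gadgets attached to deeper levels, concealed edges beyond level~$\ell+1$). This is exactly what the concealing paradigm is designed for, and Lemmas~\ref{lem:cell_choices} and \ref{lem:same_colors_same_quot_graphs} already encapsulate the argument that, until the relevant broadcast/unidirectional gadgets have been triggered, these deeper structures remain indistinguishable from their fake counterparts. Thus once the preceding lemmas are invoked the assembly is essentially routine, and the theorem follows.
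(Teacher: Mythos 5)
Your proposal is correct and follows essentially the same approach as the paper's proof: it invokes Lemmas~\ref{lem:colorstable}, \ref{lem:cell_choices}, and \ref{lem:sym_discrete} to establish the structural correspondence, and Lemmas~\ref{lem:diff_colors_diff_quot_graphs} and \ref{lem:same_colors_same_quot_graphs} for the color correspondence, exactly as the paper does. The only difference is that you spell out the level-by-level induction explicitly, whereas the paper compresses this into a single appeal to the lemmas; the underlying argument is identical.
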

\begin{proof}
	By Lemma \ref{lem:colorstable}, the selector tree in $\tilde{G}(T,\pi)$	is initially stable under color refinement and in particular, its levels form stable color classes. 
	The cell selector chooses the first level of the selector tree as the first cell to individualize. 
	By Lemma \ref{lem:cell_choices}, the subsequent choices are always given by the full set of children of the node last individualized. Together with Lemma \ref{lem:sym_discrete}, this implies that the tree structure of the
	IR-tree $\Gamma:=\Gamma_{S(T)}(\tilde{G}(T))$ is exactly the same as the structure of $T$ and we obtain a one-to-one correspondence between $\Gamma$ and $T$.

	Finally, the Lemmas \ref{lem:diff_colors_diff_quot_graphs} and \ref{lem:same_colors_same_quot_graphs}
	together show that nodes in $\Gamma$ obtain the same color (i.e., the sequences of individualizations they describe give the same quotient graphs) if and only if the corresponding nodes in $T$ have the same colors, so up to renaming colors $\Gamma$ and $T$ are the same tree.
\end{proof}

\section{Conclusion and Future Work} \label{sec:conclusion}
We have shown that every tree that meets some simple necessary conditions is an IR-tree. Regarding invariant pruning we should highlight that of course every pruned tree is a subtree of an unpruned tree, so our techniques extend to IR-algorithms with pruning.

Regarding refinement, we use the standard color refinement used by all IR-algorithms. 
However regarding cell selectors there is no clear standard. 
In this paper, we did not optimize the construction for any specific cell selector, but rather used the cell selector as part of the construction.

Let us now assume we are given a fixed cell selector.
For a particular cell selector, there are two possibilities: either, fewer trees turn out to be IR-trees or the same necessary conditions apply.
For the latter, we suspect that for many natural examples the construction of this paper can be adapted.
Consider for example the cell selector that always chooses a smallest non-trivial cell. In this case, by adding more concealed structure enforcing specific cell sizes it can be shown that the same necessary conditions are indeed sufficient again. 

In contrast to this, consider the cell selector that always chooses a largest non-trivial cell. Here, the degree of the vertices on root-to-leaf walks in a corresponding IR-tree must monotonically decrease. Hence, fewer trees turn out to be IR-trees and the necessary conditions are not sufficient. If interested in specific cell selectors one might therefore want to refine the necessary conditions.

Another interesting direction of research might be to investigate bounds for the order graphs realizing a given tree since this is related to the running time of IR-tools.

\bibliography{main}
\bibliographystyle{plain}
\end{document}